\documentclass[journal, final]{IEEEtran}

\newif \iftwocol
%\twocoltrue
\twocolfalse

\usepackage{cite}      % Written by Donald Arseneau
\usepackage{graphicx}  % Written by David Carlisle and Sebastian Rahtz
\usepackage{psfrag}    % Written by Craig Barratt, Michael C. Grant,
\usepackage{overpic}
\usepackage{subfigure} % Written by Steven Douglas Cochran
\usepackage{url}       % Written by Donald Arseneau
\usepackage{stfloats}  % Written by Sigitas Tolusis
\usepackage{amsmath}   % From the American Mathematical Society
\usepackage{amssymb}   % From the American Mathematical Society
\usepackage{tabularx}
\usepackage{amsmath}
\usepackage{amsthm}
\usepackage{amssymb}
\usepackage{cite}
\usepackage{graphicx}
\usepackage{booktabs}
\usepackage{subfigure}
%\usepackage[title]{appendix}

%%% Local Variables:
%%% mode: latex
%%% TeX-master: "int-relay5.tex"
%%% End:

%\newcommand{\QED}{\hfill \ensuremath{\Box}}

%\newcommand{\D}{\mathcal{D}}

%\newcommand{\BIT}{\begin{itemize}}
%\newcommand{\EIT}{\end{itemize}}
%\newcommand{\BEN}{\begin{enumerate}}
%\newcommand{\EEN}{\end{enumerate}}
%\newcommand{\mc}[1]{\mathcal{#1}}
%\newcommand{\mb}[1]{\mathbf{#1}}
%\newcommand{\N}{\mathcal{N}}
%\newcommand{\Ev}{\mathcal{E}}
%\newcommand{\I}{\mathcal{I}}
%\newcommand{\Li}{\mathcal{L}}
%\newcommand{\J}{\mathcal{J}}
%\newcommand{\T}{\mathcal{T}}
%\newcommand{\E}{\mathcal{E}}
%\newcommand{\M}{\mathcal{M}}
%\newcommand{\B}{\mathcal{B}}
%\newcommand{\A}{\mathcal{A}}
%\newcommand{\C}{\mathcal{C}}
%\newcommand{\X}{\mathcal{X}}
%\newcommand{\U}{\mathcal{U}}
%\newcommand{\Ub}{\mathbf{U}}
%\newcommand{\mL}{\mathcal{L}}
%\newcommand{\yb}{\mathbf{y}}
%\newcommand{\Yb}{\mathbf{Y}}
%\newcommand{\tY}{\tilde{Y}}
%\newcommand{\xb}{\mathbf{x}}
%\newcommand{\ub}{\mathbf{u}}
%\newcommand{\Xb}{\mathbf{X}}
%\newcommand{\zb}{\mathbf{z}}
%\newcommand{\Zb}{\mathbf{Z}}
%\newcommand{\tZ}{\tilde{Z}}
%\newcommand{\Ex}{\mathbb{E}}
%\providecommand{\norm}[1]{\left\lvert#1\right\rvert^2}
%\providecommand{\abs}[1]{\left\lvert#1\right\rvert}

\newcommand{\R}{\mathcal{R}}

\newcommand{\INR}{\mathsf{INR}}
\newcommand{\SNR}{\mathsf{SNR}}

\newcommand{\hf}{\frac{1}{2}}

\newcommand{\q}{\mathsf{q}}
\newcommand{\C}{\mathsf{C}}

\newtheorem{theo}{Theorem}
\newtheorem{corollary}{Corollary}

\newtheorem{lemma}{Lemma}

%\newenvironment{proof}{{\it Proof:}{\qed}}

%\linespread{1.3}

\begin{document}

\title{Incremental Relaying for the Gaussian Interference Channel with
a Degraded Broadcasting Relay
\thanks{
Manuscript received November 7, 2011; revised August 9, 2012 and Nov 24, 2012; accepted Dec 7, 2012. Date of current version December 12, 2012. This work was supported by the Natural Science and Engineering Research Council (NSERC) of Canada. The material in this paper has been
presented in part at Allerton Conf. Commun., Control and Computing,
Sept. 2011.}  \thanks{The authors are with  The Edward S. Rogers Sr. Department of Electrical and Computer Engineering, University of Toronto, Toronto, ON M5S 3G4 Canada (email:
zhoulei@comm.utoronto.ca; weiyu@comm.utoronto.ca).
}}

\author{ Lei Zhou, {\it Student Member, IEEE} and
	Wei Yu, {\it Senior Member, IEEE}}

\markboth{To Appear in IEEE Transactions on Information Theory}
{Incremental Relaying for the Gaussian Interference Channel with
a Degraded Broadcasting Relay}

\maketitle

\begin{abstract}

This paper studies incremental relay strategies for a two-user Gaussian relay-interference channel with an in-band-reception and
out-of-band-transmission relay, where the link between the relay and
the two receivers is modelled as a degraded broadcast channel. It is
shown that generalized hash-and-forward (GHF) can achieve the capacity
region of this channel to within a constant number of bits in a
certain weak-relay regime, where the transmitter-to-relay link gains
are not unboundedly stronger than the interference links between the
transmitters and the receivers.  The GHF relaying strategy is ideally
suited for the broadcasting relay because it can be implemented in an
incremental fashion, i.e., the relay message to one receiver is a
degraded version of the message to the other receiver.  A
generalized-degree-of-freedom (GDoF) analysis in the high
signal-to-noise ratio (SNR) regime reveals that in the symmetric
channel setting, each common relay bit can improve the sum rate
roughly by either one bit or two bits asymptotically depending on the
operating regime, and the rate gain can be interpreted as coming
solely from the improvement of the common messages rate, or
alternatively in the very weak interference regime as solely coming
from the rate improvement of the private messages. Further,
this paper studies an asymmetric case in which the relay has only a
single single link to one of the destinations. It is shown that with
only one relay-destination link, the approximate capacity region can
be established for a larger regime of channel parameters. Further,
from a GDoF point of view, the sum-capacity gain due to the relay
can now be thought as coming from either signal relaying only, or
interference forwarding only.

\end{abstract}
\begin{IEEEkeywords}
Approximate capacity, generalized hash-and-forward (GHF), generalized
degrees of freedom, Han-Kobayashi strategy, interference channel, relay channel.
\end{IEEEkeywords}

\section{Introduction}

Interference is a key limiting factor in modern communication systems.
In a wireless cellular network, the performance of cell-edge users is
severely limited by intercell interference. This paper considers the
use of relays in cellular networks. The uses of relays to combat channel
shadowing and to extend coverage for wireless systems have been widely
studied in the literature. The main goal of this paper is to
demonstrate the benefit of relaying for interference mitigation in
the interference-limited regime.

Consider a two-cell wireless network with two base-stations each
serving their respective receivers while interfering with each other,
as shown in Fig.~\ref{two_cell_network}.  The deployment of a {\em
cell-edge} relay, which observes a linear combination of the two
transmit signals from the base-stations and is capable of
independently communicating with the receivers over a pair of relay
links, can significantly help the receivers mitigate intercell
interference. This model is often referred to as an in-band-reception
and out-of-band-transmission relay-interference channel, as the
relay-to-receiver transmission can be thought of as taking place on a
different frequency band.

A particular feature of the channel model considered in this paper is
that the relay-to-receivers link is modeled as a Gaussian broadcast
channel. This is motivated by the fact that the relay's transmission
to the remote receivers often %(while being independent of the base-station's transmission)
takes place in a wireless medium. Consequently, the same relay message
can be heard by both receivers and can potentially help both receivers
at the same time. Further, it is convenient (and without loss of generality as shown
later for the achievability scheme and the converse proved in this
paper) to model the relay-to-receiver
links as digital links with capacities $\C_1$ and $\C_2$ respectively,
but where one relay message is required to be a degraded version of
the other relay message, as in a Gaussian broadcast channel.
The goal of this paper is to devise an incremental relaying strategy
and to quantify its benefit for this particular relay-interference
channel.

\subsection{Related Work}

The classic two-user interference channel consists of two
transmitter-receiver pairs communicating in the presence of interference
from each other. Although the capacity region of the two-user Gaussian
interference channel is still not known exactly, it can be
approximated to within one bit \cite{Tse2007} using a
Han-Kobayashi power splitting strategy \cite{HK1981}.

The use of cooperative communication for interference mitigation has
received much attention recently. For example, \cite{zhou_z_relay, zhou_isita, zhou_z_relay_typeII}
studied the Gaussian Z-interference channel with
a unidirectional receiver cooperation link, and
\cite{Wang_ReceiverCooperation, Wang_TransmitterCooperation,
Promond_DestinationCooperation, Promond_SourceCooperation} studied the
Gaussian interference channel with bi-directional transmitter/receiver
cooperation links.
%This paper focuses on a class of relay-interference
%channels where an independent relay is deployed to assist the direct
%communications of the interference channel.
In addition, the Gaussian interference channel with an additional
relay node has also been studied extensively in the literature.
Depending on the types of the links
between the relay and the transmitters/receivers, the relay-interference
channel can be categorized as having in-band transmission/reception
\cite{Sahin_inband_relay_ic,
Tian_inband_relay_ic, Dabora_Maric, Maric_Dabora, Chaaban_asilomar10, gdof_relayic, Lou_ICC10, Sahin_Erkip_1}, out-of-band transmission/reception
\cite{Sahin_outofband_relay_ic, Tian_outofband_relay_ic, Simeone_outofband_tit}, out-of-band
transmission and in-band reception \cite{Sahin_outinband_relay_ic, Peyman_GHF, peyman_ghf_jnl, zhou_incremental_relay}, or
in-band transmission and out-of-band reception \cite{Sahin_ITA09},
the last of which is directly related to the channel model in this paper.
%Clearly, a complete characterization of the capacity regions for all
%different types of the relay-interference channels would be quite a
%complex undertaking, as the channel model contains many parameters,
%and the links between the relay and the transmitters/receivers can
%have different forms.
In the following, we review different
transmission schemes and relaying strategies that have emerged for
each of these cases.

For interference channels equipped with an in-band transmission and
reception relay, the relay interacts with both transmitters and
receivers in the same frequency band.  Relaying strategies that have
been investigated in the literature include decode-and-forward,
compress-and-forward, and amplify-and-forward.  For example,
\cite{Dabora_Maric, Maric_Dabora} show that decoding-and-forwarding
either the intended signal or the
interfering signal to a receiver can both be beneficial. The former is termed as signal relaying, the latter interference forwarding. Decode-and-forward and half-duplex
amplify-and-forward strategies are also studied in
\cite{Chaaban_asilomar10, Lou_ICC10}.  When combining
decode-and-forward relaying strategy and the Han-Kobayashi rate
splitting input scheme, \cite{Sahin_Erkip_1} gives an achievable rate
region that has a shape similar to the Chong-Motani-Garg (CMG) region for the interference channel \cite{Chong2006}.
The exact capacity for this type of relay-interference channel is in general open, but there
is a special potent-relay case \cite{Tian_inband_relay_ic} for which
the sum capacity is known in some specific regimes.
%As shown in \cite{Tian_inband_relay_ic},
%when the power of the relay goes to infinity, an achievable sum rate
%can be established utilizing an all-common or all-private Han-Kobayashi scheme and a generalized compress-and-forward relaying strategy which jointly decodes the source message(s) and the quantization message. The achievable rate coincides with the
%sum-capacity upper bound in a noisy interference regime and in a strong
%interference regime, thus establishing the sum capacity for certain
%channel parameters.

The difficulty in establishing the capacity of the interference
channel with in-band transmission/reception relay is in part due to
the fact that the relay's received and transmit signals intertwine
with that of the underlying interference channel.
To simplify the matter, the interference channel with an out-of-band
transmission/reception relay has been studied in
\cite{Sahin_outofband_relay_ic, Simeone_outofband_tit,
Tian_outofband_relay_ic}. In this channel model, the relay essentially
operates on a separate set of parallel channels. Based on signal relaying and
interference forwarding strategies, \cite{Sahin_outofband_relay_ic}
identifies the condition under which the capacity
region can be achieved with separable or nonseparable coding between
the out-of-band relay and the underlying interference channel.
Further, \cite{Tian_outofband_relay_ic} studies this channel model in a
symmetric setting and characterizes the sum capacity to within $1.15$
bits. The transmission scheme of  \cite{Tian_outofband_relay_ic}
involves further splitting of
the common messages in the Han-Kobayashi scheme and a relay strategy
that combines nested lattice coding and Gaussian codes.  It is shown
that in the strong interference regime, the use of structured codes is
optimal.

\begin{figure} [t]
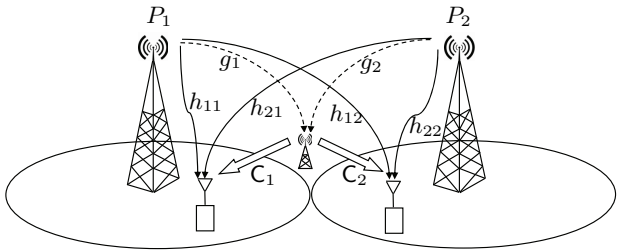

\vspace{1.8em}
\centering
\begin{overpic}[width=3.2in]{./figures/intercell_new}
\put(30, 23){\small $h_{11}$}
\put(40, 22){\small $h_{21}$}
\put(53, 21){\small $h_{12}$}
\put(66, 19){\small $h_{22}$}
\put(35, 30){\small $g_1$}
\put(58, 29.5){\small $g_2$}
\put(40, 11){\small $\mathsf{C}_{1}$}
\put(55, 11){\small $\mathsf{C}_{2}$}
\put(23, 37){\small $P_1$}
\put(72, 37){\small $P_2$}
\end{overpic}
\caption{A two-cell network with an in-band reception and
out-of-band-broadcasting relay for interference mitigation}
\label{two_cell_network}
\end{figure}

Another variation of the relay-interference channel involves an out-of-band reception and in-band transmission relay. This channel is studied in \cite{Sahin_outinband_relay_ic}, in which the transmitter
further splits the transmit signal according to the Han-Kobayashi scheme; the relay decodes
only part of the message depending the capacity of the
transmitter-relay links; the rest of the codewords are transmitted
directly from the sources to the destinations without the help of the
relay. With this partial decode-and-forward relaying scheme, the sum
capacity is found under a so-called strong relay-interference
condition.

The interference channel with an in-band reception/out-of-band
transmission relay has been briefly discussed in \cite{Sahin_ITA09},
and studied in \cite{Peyman_GHF, peyman_ghf_jnl} for a case where the
relay-destination links are shared between the two receivers.
Conventional decode-and-forward and compress-and-forward relay
strategies are not well matched for helping both
receivers simultaneously with a common relayed message. Thus,
\cite{Peyman_GHF, peyman_ghf_jnl} consider a generalized
hash-and-forward (GHF) strategy, which generalizes the conventional
compress-and-forward scheme, and is shown to achieve the capacity
region of this channel model to within a constant number of bits in the regime where the shared relay-destination link rate is
sufficiently small. The channel model under consideration in this
paper further extends the shared relay-destinations link to be a degraded
broadcast channel. We focus on a different weak-relay regime. The main objective is similar: to efficiently
use the relay bits to simultaneously benefit both users and to achieve capacity to within a constant gap.

%The relay-interference channel model is also closely related to the
%interference channel with a cognitive relay (IFC-CR) that has {\it{a
%priori}} knowledge of both source messages, as when the
%transmitter-relay channels go to
%infinity. %the relay is capable of decoding both sources' messages and
%%thus becomes a cognitive relay. Although not quite realistic when
%%compared with the relay-interference channel, this interference
%%channel with a cognitive relay has been investigated in several papers
%%due to simplicity.
%This line of studies is initiated in \cite{Sahin_Erkip_2}, where an
%achievable rate region using techniques including beamforming,
%dirty-paper coding and time sharing is derived. These results are
%later generalized in \cite{Sridharan_cognitive_relay} where
%Han-Kobayashi power splitting is incorporated. However,
%the capacity region is in general unknown except in two strong
%interference regimes \cite{Rini_ISIT2011,
%Rini_cognitive_relay_journal}, and in a sub-class of IFC-CR where
%interference links vanish \cite{Rini_3bits}. For general scenarios, a
%unified achievability approach involving Han-Kobayashi power
%splitting, superposition coding, Gel'fand-Pinsker binning, and
%simultaneous decoding is proposed in
%\cite{Rini_cognitive_relay_journal}. The resulting rate region is the
%largest to date, and it include all previous coding schemes in
%\cite{Jiang_cognitive_relay} etc.

Finally, the GHF relay strategy used in this paper is essentially the
same as the noisy network coding \cite{Kim_noisy_network_coding, Skoglund_1, Skoglund_2} and the
quantize-map-and-forward relay strategies \cite{Avestimehr_relaynetwork}. The result of this paper can be thought of as an effort in generalizing these relay strategies to a particular case of
the multiple unicast setting, for which constant-gap result continues to
hold for certain channel-parameter regimes. Related
works for the multiple unicast problem include \cite{Mohajer_tit_ic_relay, Gou_222, Avestimehr_twounicast}.

\subsection{Main Contributions}

This paper considers a relay-interference channel with in-band
reception and out-of-band degraded broadcasting links from the relay to the receivers.  The key features of the transmission strategy and the main
results of the paper are as follows.

%is placed to assist the downlink communications of the two
%transmitter-receiver pairs. This is largely motivated by the relay-interference
%channel model studied in \cite{peyman_ghf_jnl} where a shared digital link is
%deployed between the relay and the mobile terminals. Peyman et al. show that
%when the rate of the digital link $R_0$ is below a certain threshold, the
%capacity region can be achieved to within a constant number of bits. Note that
%in Fig.~\ref{two_cell_network}, a small-$R_0$ is usually resulted from the
%communication scenario where the two mobile terminals move to the vicinity of
%their base stations, in which case the digital links between the relay and the
%mobile terminals can be weak. However, the most harsh situation in cellular
%networks happens when the mobile terminals move to the cell edge, where they
%experience the weakest signals from the intended base stations and suffer the
%strongest inter-cell interferences. This paper focuses on such a harsh
%communication scenario and characterizes the capacity region to within a
%constant gap. Interestingly, it is shown that the closer the mobile terminals
%move to the cell edge, the smaller gap-to-capacity can be achieved. In an
%extreme scenario where the mobile terminals move to the very cell edge, the
%capacity region can be achieved to within $\hf \log
%\frac{5+\sqrt{33}}{2}=1.7128$ bits.

\subsubsection{Incremental Relaying}

This paper uses a GHF relaying strategy
to take advantage of the in-band reception link and the out-of-band
broadcasting link from the relay to the receivers. In GHF, the relay
quantizes its observation, which is a linear combination of the
transmitted signals, using a {\em fixed} quantizer, then bins and
forwards the quantized observation to the receivers. This strategy
of fixing the quantization level is near optimal when a certain {\em weak-relay}
condition is satisfied, and is ideally matched to the degraded
broadcasting relay-to-receiver links with capacities $\C_1$ and
$\C_2$, because it allows an incremental binning strategy at the
relay.  Assuming that $\C_1 \le \C_2$, the relay may first bin its
quantized observation into $2^{n\C_1}$ bins and send the bin index to
both receivers, then further divide each bin into $2^{n(\C_2-\C_1)}$
sub-bins and sends the extra bin index to receiver $2$ only. Thus, the
relay message to the first receiver is a degraded version of the
message to the second receiver.

\subsubsection{Oblivious Power Splitting}

The transmission scheme used in this paper consists of a Han-Kobayashi
power splitting strategy \cite{HK1981} at the transmitter.
%s the input scheme and the generalized
%hash-and-forward \cite{Peyman_GHF} \cite{peyman_ghf_jnl} strategy as the
%relaying scheme.
The common-private power splitting ratio in such a strategy is
crucial.  In a study of the interference channel with conferencing
links \cite{Wang_ReceiverCooperation}, Wang and Tse used the power
splitting strategy of Etkin, Tse and Wang \cite{Tse2007} where the
private power is set at the noise level at the receivers.  This is
sensible for the conferencing-receiver model considered in
\cite{Wang_ReceiverCooperation}, but not necessarily so for the
interference channel with an independent relay, unless again a certain
{\em weak-relay} condition is satisfied. This strategy of fixing the
power splitting at the transmitter to be independent of the relay is
termed {\em oblivious power splitting} in \cite{peyman_ghf_jnl}.
Oblivious power splitting is used in this paper as well.

\subsubsection{Constant Gap to Capacity in the weak-relay Regime}

The main result of this paper is that when the relay links are not
unboundedly stronger than the interfering links, i.e.,
\begin{equation}
\label{weak_relay_regime}
\max \left\{\frac{|g_1|^2}{|h_{12}|^2}, \frac{|g_2|^2}{|h_{21}|^2} \right\}
= \rho < \infty,
\end{equation}
for some fixed $\rho$, the capacity of the relay-interference channel
with a broadcast link can %as depicted in Fig.~\ref{two_cell_network}
be achieved to within a constant gap, where the gap is a function of
$\rho$ but otherwise independent of channel parameters. This operating
regime is called the {\em weak-relay regime} in this paper.

The main result of this paper is motivated by
the results in \cite{Peyman_GHF} and \cite{peyman_ghf_jnl}, which
studies a
two-user interference channel augmented with a shared digital
relay link to the receivers of rate $R_0$, %The main idea of
%\cite{Peyman_GHF} and \cite{peyman_ghf_jnl} is to illustrate the generalized
%hash-and-forward as a universal relaying scheme, which can potentially help
%both receivers by one bit for each relay bit. In addition,
%The main result of \cite{peyman_ghf_jnl}
and obtains
a constant-gap-to-capacity result under a certain
small-$R_0$ condition using GHF and oblivious power splitting.
%relay-interference channel to within a constant gap.
%constant-gap-to-capacity result is also possible for that particular model
%under certain small-$R_0$ condition.
The relay strategy studied in this paper goes one step further in that
the relay-to-receivers link is modeled as a degraded broadcast channel.
Moreover, the weak-relay regime studied in this paper is a
counterpart of the small-$R_0$ regime studied in
\cite{peyman_ghf_jnl}, as can be visualized in the practical setup of
Fig.~\ref{two_cell_network}.  When the mobiles are close to their
respective cell centers, the relay link capacities $\C_1$ and $\C_2$
are small, thereby satisfying the small-$R_0$ condition of
\cite{peyman_ghf_jnl}. In the more practically important regime where
the mobile terminals are close to the cell edge, the channel falls
into the weak-relay regime of this paper. An interesting feature of
the result in this paper is that the gap to capacity is a function of
$\rho$, the relative channel strength between the interfering channel
and the channel to the relay;  the gap becomes smaller as $\rho
\rightarrow 1$.  In the limiting case with $\rho=1$, corresponding to
the situation where the mobiles are at the cell edge, the capacity
region can be achieved to within $\hf \log \frac{5+\sqrt{33}}{2}=1.2128$ bits.

A technical contribution of this paper is a particular set of capacity region
outer bounds which are established by giving different combinations of
side information (genies) to the receivers and by applying the known
outer-bound results of the Gaussian interference channel
\cite{Tse2007} and the single-input multiple-output (SIMO) Gaussian
interference channel \cite{Wang_Allerton}.  It is shown that there are
two constraints for the individual rates $R_1$ and $R_2$, twelve
constraints for the sum rate $R_1 +R_2$, six constraints for $2R_1
+R_2$, and six constraints for $R_1 +2R_2$. Furthermore, the outer bounds established in this paper hold for all channel
parameters. This set of outer bounds is tight to within a constant
gap in the weak-relay regime. %Further,
%under the Han-Kobayashi framework, this paper applies the
%Fourier-Motzkin
%algorithm and arrives at an achievable rate region. Specifically, when using
%the Han-Kobayashi power splitting strategy at the transmitters and the GHF
%relaying scheme at the relay, the achievable rate tuple of common and private
%messages is the intersection of two pentagons. This is similar to the situation
%of the Gaussian inteference channel. However, with the help of the relay, both
%rate pentagons are enlarged. Then, applying the Fourier-Motzkin elimination
%algorithm gives us an achievable rate region for the Gaussian
%relay-interference channel. With Gaussian inputs and Etkin-Tse-Wang power
%splitting ratio, a constant-gap-to-capacity result is established by coparing
%the achievable rate region and the capacity region outer bound term by term.

To obtain insights from
the performance gain brought by the relay, this paper
further investigates the improvement in the generalized degrees of
freedom (GDoF) per user for the relay-interference channel due to
a broadcasting link. In the symmetric setting, it is shown that a
common broadcast link can improve the sum capacity by two bits per each
relay bit in the very weak, moderately weak, and very strong
interference regimes, but by one bit per each relay bit in other
regimes. This asymptotic behavior can be interpreted by noting that
the relay link essentially behaves like a deterministic channel in
the high signal-to-noise-ratio (SNR) regime.  Further, in the
symmetric setting, the sum-capacity gain due to the relay can be
thought of as solely coming from the rate improvement of the common
messages, or alternatively in a very weak interference regime as
solely coming from the rate improvement of the private messages.

In asymmetric settings, the improvement in the sum capacity by the
relay can be interpreted in different ways. To illustrate this point,
this paper investigates a special case of the channel model, where the
relay link is available
to only one but not both destinations. In this case, the relay may
forward information about both the intended signal and the
interference, and the capacity can benefit from both signal-relaying
and interference-forwarding.  This paper shows that a
constant-gap-to-capacity result can be derived for this setting under
a more relaxed weak-relay condition that requires only
$|g_2| \le \sqrt{\rho}|h_{21}|$ (and not
$|g_1| \le \sqrt{\rho}|h_{12}|$). Moreover,
this paper shows that in term of GDoF, when the relay link is
above a certain threshold, the sum-capacity gain is equivalent to that of
that of a single relay link from user $1$. When the relay link is below the threshold, the
sum-capacity gain is equivalent to that of a single relay link from user $2$.

Finally, the results of this paper show that GHF is sufficient for achieving the approximated capacity region of an in-band reception and out-of-band transmission Gaussian relay-interference channel in the weak-relay regime. Thus, more recently proposed relay
techniques based on compute-and-forward \cite{Nazer_ComputeForward} or lattice coding \cite{Zamir_LatticeCoding} is not
necessary in this regime as far as constant gap to capacity is concerned . Outside of the weak-relay regime, the optimal relay
strategies remain an open problem; lattice coding strategies may be helpful.

\subsection{Organization of This Paper}

The rest of the paper is organized as follows. Section II introduces
the Gaussian relay-interference channel model, derives capacity
region outer bounds that hold for all channel parameters and an
achievable rate region, and presents the main constant-gap theorem
and the GDoF analysis.  Section III deals with the
relay-interference channel with a single relay link, derives the
corresponding constant-gap result, and gives a quantitative
analysis on the relation between signal relaying and interference
forwarding. Section IV concludes the paper.

%\begin{figure} [ht]
%\centering
%\includegraphics[width=4.5in]{./figures/relay_ic}
%\caption{The communication scenario where a relay helps an interference channel through a rate limited digital link.}
%\label{topology}
%\end{figure}

\section{Gaussian Relay-Interference Channel: General Case}

\begin{figure} [t]
\centering
\centering \psfrag{X1}{$X_1$} \psfrag{X2}{$X_2$} \psfrag{Y1}{$Y_1$}
\psfrag{Y2}{$Y_2$}
\psfrag{Z1}{$Z_1$} \psfrag{Z2}{$Z_2$} \psfrag{ZR}{$Z_R$}
 \psfrag{h11}{$h_{11}$} \psfrag{h22}{$h_{22}$}
 \psfrag{h21}{$h_{21}$}  \psfrag{h12}{$h_{12}$} \psfrag{G1}{$g_1$} \psfrag{G2}{$g_2$}  \psfrag{Relay}{$\mathrm{relay}$} \psfrag{C1}{$\mathsf{C}_1$} \psfrag{C2}{$\mathsf{C}_2$} \psfrag{YR}{$Y_R$}
\includegraphics[width=3.1in]{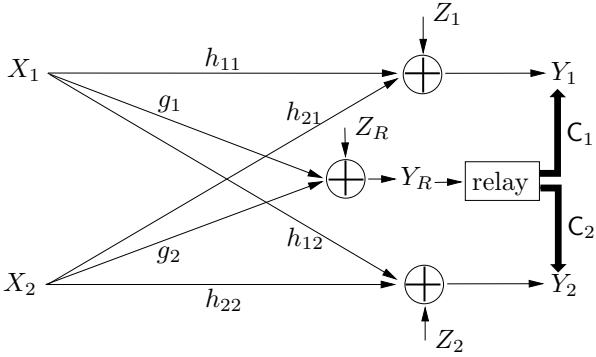}
\caption{Gaussian relay-interference channel with two independent digital relay links}
\label{channel_model}
\end{figure}
\subsection{Channel Model and Definitions}
A Gaussian relay-interference channel consists of two
transmitter-receiver pairs and an independent relay. Each transmitter
communicates with the intended receiver while causing interference to
the other transmitter-receiver pair. The relay receives a linear
combination of the two transmit signals and helps the
transmitter-receiver pairs by forwarding a message to receiver $1$ and
another message to receiver $2$ through rate-limited digital links
with capacities $\C_1$ and $\C_2$ respectively. We start by treating
a channel model with independent relay links, and later show that
requiring one relay message to be a degraded version of the other is
without loss of approximate optimality. As shown in
Fig.~\ref{channel_model}, $X_1, X_2$ and $Y_1, Y_2$ are real-valued input and
output signals, respectively, and $Y_R$ is the observation of the relay.
The receiver noises are assumed to be independent and identically
distributed (i.i.d.) Gaussian random variables with variance one,
i.e., $Z_i \sim \mathcal{N}(0, 1), i=1, 2$ and $R$. The input-output
relationship can be described by
%\begin{subequations}
\begin{align}
Y_1 &= h_{11}X_1 + h_{21}X_2  + Z_1, \\
Y_2 &= h_{22}X_2 + h_{12}X_1  + Z_2, \\
Y_R &= g_1 X_1 + g_2 X_2 + Z_R,
\end{align}
%\end{subequations}
where $h_{ij}$ is the channel gain from transmitter $i$ to receiver $j$, and
$g_j$ is the channel gain from transmitter $j$ to the relay, all real valued. The powers of the
input signals are normalized to one, i.e., $\mathbb{E}[|X_i|^2] \le 1, i=1, 2$.

Define the signal-to-noise ratios and interference-to-noise ratios as follows:
\begin{eqnarray}
 \SNR_i = |h_{ii}|^2, \;&& \quad \SNR_{ri} = |g_i|^2, \;\; i=1, 2 \nonumber \\
\INR_1 = |h_{12}|^2, && \quad \INR_2 \;\;= |h_{21}|^2. \nonumber
\end{eqnarray}
Define functions $\alpha(\cdot)$ and $\beta(\cdot)$ as
\begin{equation} \label{alpha_beta}
\alpha(x) = \hf \log(2x+2 + \rho), \quad \beta(x) = \hf + \hf
\log\left(1 + \frac{1+\rho}{x}\right),
\end{equation}
where $\log(\cdot)$ is base $2$ and  $\rho$ is defined as
\begin{equation} \label{definition_rho}
\rho \triangleq \max \left\{\frac{|g_1|^2}{|h_{12}|^2},
\frac{|g_2|^2}{|h_{21}|^2} \right\}.
\end{equation}
This paper considers a weak-relay regime where
%Note that, through out the paper, we assume that
$\rho$ is a {\em finite} constant.

\subsection{Outer Bounds and Achievable Rate Region}

We first present outer bounds and achievability results that are
applicable to the relay-interference channel model with two
independent digital relays as shown in Fig.~\ref{channel_model}.

\begin{theo}[Capacity Region Outer Bounds] \label{outerbound_theorem_twolinks}
The capacity region of the Gaussian relay-interference channel as
depicted in Fig.~\ref{channel_model} is contained in the outer bound
$\overline{\mathcal{C}}$ given by the set of $(R_1,R_2)$ for which
\begin{eqnarray}
R_1 &\le& \frac{1}{2}\log(1 + \mathsf{SNR}_1) \nonumber \\
	& & + \min \left\{\C_1, \frac{1}{2} \log \left (1 + \frac{\mathsf{SNR}_{r1}}{1 + \mathsf{SNR}_1}  \right) \right\} \label{R1_bound_twolinks}\\
R_2 &\le& \frac{1}{2}\log(1 + \mathsf{SNR}_2) \nonumber \\
	& & + \min \left\{\C_2, \frac{1}{2} \log \left (1 +
\frac{\mathsf{SNR}_{r2}}{1 + \mathsf{SNR}_2}  \right) \right\}
\label{R2_bound_twolinks} \\
R_1 + R_2 &\le& \frac{1}{2}\log(1 + \mathsf{SNR}_2 + \mathsf{INR}_1)
\nonumber \\
	& & + \frac{1}{2}\log \left(1+  \frac{ \mathsf{SNR}_1}{1 + \mathsf{INR}_1}\right) + \C_1 + \C_2 \label{sumrate_bound_1_twolinks}\\
R_1 + R_2 &\le& \frac{1}{2}\log(1 + \mathsf{SNR}_1 + \mathsf{INR}_2)
\nonumber \\
	& & + \frac{1}{2}\log \left(1+ \frac{\mathsf{SNR}_2}{1 + \mathsf{INR}_2}\right) + \C_1 + \C_2 \label{sumrate_bound_2_twolinks}\\
R_1 + R_2 &\le& \frac{1}{2}\log \left(1 + \mathsf{INR}_2 +
\frac{\mathsf{SNR}_1}{1 + \mathsf{INR}_1} \right) \nonumber \\
&&	+\frac{1}{2}\log \left(1 + \mathsf{INR}_1 + \frac{\mathsf{SNR}_2}{1 + \mathsf{INR}_2} \right )  + \C_1 + \C_2 \nonumber \\
&& \label{sumrate_bound_3_twolinks}  \\
R_1 + R_2 &\le& \frac{1}{2}\log \left( 1 + \frac{\mathsf{SNR}_1}{1 + \mathsf{INR}_1 +  \mathsf{SNR}_{r1}} \right) \nonumber  \\
 && + \hf \log(1 + \SNR_2(1 + \phi_2^2 \SNR_{r1}) \nonumber \\
&& \qquad \quad +	\SNR_{r2} + \INR_1 + \SNR_{r1} )+\C_1
\label{sumrate_bound_4_twolinks} \\
R_1 + R_2 &\le& \frac{1}{2}\log(1 + \mathsf{SNR}_1 + \mathsf{INR}_2) \nonumber \\
&& + \frac{1}{2}\log \left(1 + \frac{\mathsf{SNR}_2 + \mathsf{SNR}_{r2}}{1 + \mathsf{INR}_2} \right) + \C_1 \label{sumrate_bound_5_twolinks}\\
R_1 + R_2 &\le& \hf \log\left(1 + \frac{\SNR_1}{1 + \INR_1 + \SNR_{r1}} +\INR_2 \right) \nonumber \\
&& + \hf \log \left( 1 + \frac{\SNR_2(1 + \phi_2^2 \SNR_{r1}) +
\SNR_{r2}}{1 + \INR_2} \right. \nonumber \\
&& \qquad \quad \left. + \INR_1 + \SNR_{r1} \right)  +\C_1  \label{sumrate_bound_6_twolinks}\\
R_1 + R_2 &\le& \frac{1}{2}\log \left( 1 + \frac{\mathsf{SNR}_2}{1 + \mathsf{INR}_2+  \mathsf{SNR}_{r2} } \right)  \nonumber \\
&& + \hf \log(1 + \SNR_1(1 + \phi_1^2 \SNR_{r2}) + \SNR_{r1} \nonumber \\
&& \qquad \quad  + \INR_2 + \SNR_{r2} )+\C_2 \label{sumrate_bound_7_twolinks} \\
R_1 + R_2 &\le& \frac{1}{2}\log(1 + \mathsf{SNR}_2 + \mathsf{INR}_1)
\nonumber \\
&&  + \frac{1}{2}\log \left(1+ \frac{ \mathsf{SNR}_1 + \mathsf{SNR}_{r1}}{1 + \mathsf{INR}_1} \right) + \C_2 \label{sumrate_bound_8_twolinks}\\
R_1 + R_2 &\le& \hf \log\left(1 + \frac{\SNR_2}{1 + \INR_2 + \SNR_{r2}} +\INR_1 \right) \nonumber \\
&& + \hf \log \left( 1 + \frac{\SNR_1(1 + \phi_1^2 \SNR_{r2}) +
\SNR_{r1}}{1 + \INR_1} \right. \nonumber \\
&& \qquad \quad \left. + \INR_2 + \SNR_{r2} \right) \label{sumrate_bound_9_twolinks} +\C_2
\end{eqnarray}
\begin{eqnarray}
R_1 + R_2 &\le& \hf \log \left(1 + \frac{\SNR_1 + \SNR_{r1}}{1 + \INR_1 + \SNR_{r1}} \right) \nonumber \\
&& + \hf \log(1 + \SNR_2(1 + \phi_2^2 \SNR_{r1}) + \SNR_{r2} \nonumber \\
&& \qquad \quad + \INR_1 + \SNR_{r1} ) \label{sumrate_bound_10_twolinks} \\
R_1 + R_2 &\le& \hf \log \left(1 + \frac{\SNR_2 + \SNR_{r2}}{1 + \INR_2 + \SNR_{r2}} \right) \nonumber \\
&& + \hf \log(1 + \SNR_1(1 + \phi_1^2 \SNR_{r2}) + \SNR_{r1} \nonumber \\
&& \qquad \quad + \INR_2 + \SNR_{r2} ) \label{sumrate_bound_11_twolinks} \\
R_1 + R_2 &\le& \hf \log \left( 1 + \frac{\SNR_1(1 + \phi_1^2
\SNR_{r2}) + \SNR_{r1}}{1 + \INR_1+\SNR_{r1}} \right. \nonumber \\
&& \qquad \quad  \left. + \INR_2 + \SNR_{r2} \right) \nonumber \\
&&+ \hf \log \left( 1 + \frac{\SNR_2(1 + \phi_2^2 \SNR_{r1}) +
\SNR_{r2}}{1 + \INR_2 +\SNR_{r2}} \right. \nonumber \\
&& \qquad \quad  \left. + \INR_1 + \SNR_{r1} \right) \label{sumrate_bound_12_twolinks} \\
2R_1 + R_2 &\le& \hf \log \left(1 + \SNR_1 + \INR_2 \right) \nonumber \\
&& + \hf \log\left(1 + \INR_1 + \frac{\SNR_2}{1 + \INR_2} \right) \nonumber \\
&&+ \hf \log \left(1+ \frac{ \SNR_1}{1 + \INR_1}\right) + 2\C_1 + \C_2 \label{2R1R2_bound_1_twolinks} \\
2R_1 + R_2 &\le& \hf \log \left(1 + \frac{\SNR_1 + \SNR_{r1}}{1 + \INR_1 + \SNR_{r1}} \right) \nonumber \\
&&+ \hf \log\left(1 + \SNR_1(1 + \phi_1^2 \SNR_{r2}) + \SNR_{r1}
\right. \nonumber \\
&& \qquad \quad  \left. + \INR_2 + \SNR_{r2} \right) \nonumber \\
&&+ \hf \log \left( 1 + \frac{\SNR_2(1 + \phi_2^2 \SNR_{r1}) +
\SNR_{r2}}{1 + \INR_2 + \SNR_{r2}} \right. \nonumber \\
&& \qquad \quad  \left. + \INR_1 + \SNR_{r1} \right)  \label{2R1R2_bound_2_twolinks} \\
2R_1 + R_2 &\le& \hf \log(1 + \SNR_1 + \INR_2) \nonumber \\
&& + \hf \log \left(1+ \frac{\SNR_1 + \SNR_{r1}}{1 + \INR_1} \right) \nonumber \\
&& + \hf \log \left( 1 + \frac{\SNR_2(1 + \phi_2^2 \SNR_{r1}) +
\SNR_{r2}}{1 + \INR_2} \right. \nonumber \\
&& \qquad \quad \left. + \INR_1 + \SNR_{r1} \right)  + \C_1 \label{2R1R2_bound_3_twolinks} \\
2R_1 + R_2 &\le& \hf \log \left(1 + \frac{\SNR_1 + \SNR_{r1}}{1 +
\INR_1} \right) \nonumber \\
&& + \hf \log \left(1 + \frac{\SNR_2}{1 + \INR_2 + \SNR_{r2} }+ \INR_1 \right) \nonumber \\
&&+ \hf \log \left(1 + \SNR_1 (1 + \phi_1^2 \SNR_{r2}) + \SNR_{r1}
\right. \nonumber \\
&& \qquad \quad \left. + \INR_2 + \SNR_2 \right) + \C_2 \label{2R1R2_bound_4_twolinks} \\
2R_1 + R_2 &\le& \hf \log (1 + \SNR_1 + \INR_2) \nonumber \\
&& + \hf \log\left(1 + \frac{\SNR_1}{1 + \INR_1 + \SNR_{r1}} \right) \nonumber \\
&& + \hf \log \left( 1 + \frac{\SNR_2(1 + \phi_2^2 \SNR_{r1}) +
\SNR_{r2}}{1 + \INR_2} \right. \nonumber \\
&& \qquad \quad \left. + \INR_1 + \SNR_{r1} \right) + 2\C_1  \label{2R1R2_bound_5_twolinks}
\end{eqnarray}
\begin{eqnarray}
2R_1 + R_2 &\le&  \hf \log (1 + \SNR_1 + \INR_2) \nonumber \\
&& + \hf \log \left(1 + \INR_1 + \frac{\SNR_2}{1 + \INR_2} \right) \nonumber \\
&&+ \hf \log \left(1 +  \frac{\SNR_1 + \SNR_{r1}}{1 + \INR_1} \right)
\nonumber \\
&& + \C_1 + \C_2 \label{2R1R2_bound_6_twolinks},
\end{eqnarray}
and $R_1 + 2R_2$ bounded by (\ref{2R1R2_bound_1_twolinks})-(\ref{2R1R2_bound_6_twolinks}) with indices $1$ and $2$ switched, where $\phi_1^2$ and $\phi_2^2$ are defined as
\begin{equation}
\phi_1^2 = \left|\frac{g_1 h_{21}}{g_2 h_{11}} - 1 \right|^2,  \quad \phi_2^2 = \left|\frac{g_2 h_{12}}{g_1 h_{22}} - 1 \right|^2.
\end{equation}
\end{theo}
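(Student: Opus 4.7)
The plan is to establish each family of bounds by a genie-aided enhancement that reduces the channel to either a point-to-point channel with relay, a standard Gaussian interference channel \cite{Tse2007}, or a SIMO Gaussian interference channel \cite{Wang_Allerton}, and then invoke the known outer bounds for these simpler models. In each case the argument begins with Fano's inequality applied to the enhanced channel, and concludes with a single-letterization using the fact that Gaussian inputs maximize conditional differential entropy under a covariance constraint. The different bounds in the theorem arise from different choices of genie side information, and from different ways of combining $Y_i$ with $Y_R$ before applying the converse.

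For the per-user bounds (\ref{R1_bound_twolinks})-(\ref{R2_bound_twolinks}), I would hand $X_j^n$ (the interfering codeword) to receiver $i$ as a genie, reducing the problem to a point-to-point Gaussian channel with a single rate-$\C_i$ digital relay. The cut-set bound then gives the two arguments of the minimum: the source-vs-(relay, receiver) cut yields $\hf\log(1+\SNR_i+\SNR_{ri})=\hf\log(1+\SNR_i)+\hf\log(1+\SNR_{ri}/(1+\SNR_i))$, while the (source, relay)-vs-receiver cut yields $\hf\log(1+\SNR_i)+\C_i$. The first three sum-rate bounds (\ref{sumrate_bound_1_twolinks})-(\ref{sumrate_bound_3_twolinks}) correspond to the three canonical Etkin--Tse--Wang sum-rate bounds \cite{Tse2007} for the underlying interference channel, obtained by giving $S_i=h_{ji}X_j+Z_i$ to receiver $i$ (singly or jointly) and then charging the full relay help $\C_1+\C_2$ as a free additive term.

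The more intricate bounds (\ref{sumrate_bound_4_twolinks})-(\ref{sumrate_bound_12_twolinks}) and (\ref{2R1R2_bound_1_twolinks})-(\ref{2R1R2_bound_6_twolinks}) require giving $Y_R^n$ itself (or some $S_i^n$ plus $Y_R^n$) to one or both receivers, which makes one of the digital relay links redundant and reduces the channel to a SIMO interference channel for which the sum-rate and $2R_1+R_2$ bounds of \cite{Wang_Allerton} apply. The factors $\phi_1^2=|g_1h_{21}/(g_2h_{11})-1|^2$ and $\phi_2^2=|g_2h_{12}/(g_1h_{22})-1|^2$ enter through the specific linear combinations $Y_i-(h_{ii}/g_i)Y_R$ or $Y_i-(h_{ji}/g_j)Y_R$ used to extract, in one observation, the cross-link signal with its effective noise, and in the other the direct signal with a noise inflated by $\phi_i^2\SNR_{rj}$. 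Substituting these effective noise variances into the ETW-style and SIMO-IC sum-rate bounds produces the coefficients appearing in the theorem, and the remaining unused relay capacity $\C_1$ or $\C_2$ is added at the end as a free term. The $2R_1+R_2$ bounds follow by combining one of the enhanced individual bounds on $R_1$ with one of the sum-rate bounds on $R_1+R_2$ under a compatible genie; the $R_1+2R_2$ bounds follow by symmetry.

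The main obstacle is the bookkeeping involved in the SIMO reduction: I have to choose the right auxiliary linear combination of $Y_i$ and $Y_R$ so that either the intended signal or the interference is eliminated up to a correlated noise term, correctly identify the resulting effective noise variance in terms of $\phi_i^2$ and $\SNR_{rj}$, and track which of $\C_1,\C_2$ have been consumed as genie and which remain to be added. A secondary subtlety is verifying that Gaussian inputs are indeed extremal for each enhanced channel; this follows from the worst-additive-noise lemma and the fact that in every case the enhanced channel can be written as a linear Gaussian vector channel in $(X_1,X_2)$ with added independent Gaussian noise, so that the conditional entropy maximization steps go through with jointly Gaussian codebooks satisfying the unit-power constraint.
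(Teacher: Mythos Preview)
Your overall strategy---genie-aided reductions to a point-to-point relay channel, the standard Gaussian IC \cite{Tse2007}, or the SIMO IC \cite{Wang_Allerton}---matches the paper's, and your treatment of the individual-rate bounds and of the twelve sum-rate bounds is essentially correct: the paper likewise obtains (\ref{sumrate_bound_1_twolinks})--(\ref{sumrate_bound_3_twolinks}) by adding $\C_1+\C_2$ to the ETW sum-rate bounds, (\ref{sumrate_bound_4_twolinks})--(\ref{sumrate_bound_9_twolinks}) by giving $Y_R^n$ to one receiver and invoking the SIMO-IC sum-rate bounds, and (\ref{sumrate_bound_10_twolinks})--(\ref{sumrate_bound_12_twolinks}) by giving $Y_R^n$ to both.

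The gap is in the $2R_1+R_2$ bounds. You propose to obtain them by ``combining one of the enhanced individual bounds on $R_1$ with one of the sum-rate bounds on $R_1+R_2$ under a compatible genie,'' but simple addition of a valid $R_1$ bound and a valid $R_1+R_2$ bound does \emph{not} reproduce (\ref{2R1R2_bound_1_twolinks})--(\ref{2R1R2_bound_6_twolinks}). For instance, (\ref{2R1R2_bound_1_twolinks}) is precisely the Etkin--Tse--Wang $2R_1+R_2$ bound for the underlying IC plus $2\C_1+\C_2$; that ETW bound requires the correlated genies $S_1^n=h_{12}X_1^n+Z_2^n$ and $S_2^n=h_{21}X_2^n+Z_1^n$ placed at different receivers so that the entropy terms $h(S_1^n)$ and $h(S_2^n)$ telescope, and it is a genuinely new facet not implied by any $R_1$-plus-$(R_1+R_2)$ combination. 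The paper therefore handles (\ref{2R1R2_bound_1_twolinks}), (\ref{2R1R2_bound_2_twolinks}), (\ref{2R1R2_bound_4_twolinks}), (\ref{2R1R2_bound_5_twolinks}) by starting from $2I(X_1^n;Y_1^n,V_1^n)+I(X_2^n;Y_2^n,V_2^n)$, stripping off the appropriate $\C_i$ terms, and invoking the $2R_1+R_2$ outer bound of the IC or SIMO IC \emph{directly} (not its $R_1$ and $R_1+R_2$ bounds separately). For the two ``mixed'' bounds (\ref{2R1R2_bound_3_twolinks}) and (\ref{2R1R2_bound_6_twolinks}), where one copy of $R_1$ effectively sees $Y_R$ and the other does not, no off-the-shelf IC or SIMO-IC bound applies; the paper gives one copy of $Y_1^n$ the genie $(X_2^n,Y_R^n,S_1^n)$, leaves the other copy with only $h(V_1^n)\le n\C_1$, gives $Y_2^n$ the genie $(S_2^n,Y_R^n)$ or $S_2^n$ respectively, and then carries out the ETW-style entropy cancellation by hand. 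Your plan needs this refinement to produce the stated expressions.
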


\begin{IEEEproof}
The above outer bounds can be proved in a genie-aided approach. See Appendix~\ref{proof_outer_bound_twolinks} for details.
\end{IEEEproof}

\begin{theo}[Achievable Rate Region] \label{achievable_theorem_twolinks}
Let $\mathcal{P}$ denote the set of probability distributions $P(\cdot)$ that factor as
\begin{eqnarray}
\lefteqn{P(q, w_1, w_2, x_1, x_2, y_1, y_2, y_R, \hat{y}_{R1}, \hat{y}_{R2})} \nonumber \\
&& =p(q)p(x_1, w_1|q)p(x_2, w_2|q)p(y_1, y_2, y_R|x_1, x_2, q) \nonumber \\
&& \qquad p(\hat{y}_{R1},\hat{y}_{R2}|y_R, q).
\end{eqnarray}
For a fixed distribution $P \in \mathcal{P}$, let $\mathcal{R}(P)$ be the set of all rate pairs $(R_1, R_2)$ satisfying
\begin{eqnarray}
0 \le R_1 &\le& d_1 + \min\left\{(\C_1- \xi_1)^+, \Delta {d}_1  \right\}, \label{achievable_R1} \\
0 \le R_2 &\le& d_2 + \min\left\{(\C_2- \xi_2)^+, \Delta {d}_2  \right\},  \label{achievable_R2}\\
R_1 + R_2 &\le& a_1 + g_2 + \min \left\{(\C_1- \xi_1)^+, \Delta
{a}_1 \right\}   \nonumber \\
&& + \min \left\{(\C_2- \xi_2)^+, \Delta {g}_2 \right\}, \label{achievable_R1R2_1}\\
R_1 + R_2 &\le& a_2 + g_1 +  \min \left\{(\C_1- \xi_1)^+, \Delta
{g}_1 \right\} \nonumber \\
&& + \min \left\{(\C_2- \xi_2)^+, \Delta {a}_2 \right\} , \label{achievable_R1R2_2} \\
R_1 + R_2 &\le& e_1 + e_2 + \min \left\{(\C_1- \xi_1)^+, \Delta
{e}_1 \right\} \nonumber \\
&&  + \min \left\{(\C_2- \xi_2)^+, \Delta {e}_2 \right\}, \label{achievable_R1R2_3}\\
2R_1 + R_2 &\le& a_1 + g_1 + e_2 + \min\left\{ (\C_1- \xi_1)^+, \Delta {a}_1\right\} \nonumber \\
&&+ \min\left\{ (\C_1- \xi_1)^+, \Delta {g}_1\right\} \nonumber \\
&& + \min\left\{ (\C_2- \xi_2)^+, \Delta {e}_2\right\}, \label{achievable_2R1R2}\\
R_1 + 2R_2 &\le& a_2 + g_2 + e_1 + \min\left\{ (\C_2- \xi_2)^+, \Delta {a}_2\right\} \nonumber \\
&&+ \min\left\{ (\C_2- \xi_2)^+, \Delta {g}_2\right\} \nonumber \\
&& + \min\left\{
(\C_1- \xi_1)^+, \Delta {e}_1\right\}, \label{achievable_R12R2}
\end{eqnarray}
where
%\begin{equation}
%\iftwocol
%    \begin{align}
%        a_1 &= I(X_{1}; Y_1|W_{1}, W_{2}, Q), \\
%        \Delta \tilde{a}_1 &= I(X_{1}; \hat{Y}_{R1}|Y_1, W_{1}, W_{2}, Q), \\
%        d_1 &= I(X_1; Y_1|W_{2}, Q), \\
%        \Delta \tilde{d}_1 &=& I(X_1; \hat{Y}_{R1}|Y_1, W_{2}, Q), \\
%        e_1 &= I(X_{1},W_{2}; Y_1| W_{1}, Q), \\
%        \Delta \tilde{e}_1 &= I(X_{1},W_{2}; \hat{Y}_{R1}|Y_1, W_{1}, Q), \\
%        g_1 &= I(Y_{R1}, Y_1|X_1, W_{2}, Q), \\
%        \Delta \tilde{g}_1 &= I(X_1, W_{2}; \hat{Y}_{R1}|Y_1, Q), \\
%        \xi_1 &= I(Y_{R1}, \hat{Y}_{R1} | Y_1, X_1, W_{2}, Q),
%    \end{align}
%\else
\begin{eqnarray}
    a_1 &=& I(X_{1}; Y_1|W_{1}, W_{2}, Q),  \label{a2g_begin}\\
    d_1 &=& I(X_1; Y_1|W_{2}, Q),  \\
    e_1 &=& I(X_{1},W_{2}; Y_1| W_{1}, Q),  \\
    g_1 &=& I(X_1, W_2; Y_1|Q), \\
\Delta {a}_1 &=& I(X_{1}; \hat{Y}_{R1}|Y_1, W_{1}, W_{2}, Q), \\
\Delta {d}_1 &=& I(X_1; \hat{Y}_{R1}|Y_1, W_{2}, Q), \\
\Delta {e}_1 &=& I(X_{1},W_{2}; \hat{Y}_{R1}|Y_1, W_{1}, Q), \\
\Delta {g}_1 &=& I(X_1, W_{2}; \hat{Y}_{R1}|Y_1, Q), \\
    \xi_1 &=& I(Y_{R}; \hat{Y}_{R1} | Y_1, X_1, W_{2}, Q),  \label{a2g_end}
\end{eqnarray}
and $a_2, \Delta {a}_2, d_2, \Delta {d}_2, e_2, \Delta {e}_2, g_2, \Delta {g}_2$, and $\xi_2$ are defined by (\ref{a2g_begin})-(\ref{a2g_end}) with
indices $1$ and $2$ switched. Then
\begin{equation}
\R = \bigcup_{P \in \mathcal{P}} \R(P)
\end{equation}
is an achievable rate region for the Gaussian relay-interference channel as shown in Fig.~\ref{channel_model}.
\end{theo}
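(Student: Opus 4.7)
The plan is to prove achievability by a random coding argument that combines Han-Kobayashi rate-splitting at the transmitters with a GHF step at the relay and joint typicality decoding at each receiver. For a fixed $P\in\mathcal{P}$ with time-sharing variable $Q$, transmitter $i$ generates a common-message codebook $\{W_i^n(w_i)\}$ drawn from $p(w_i|q)$ and a superimposed private codebook $\{X_i^n(w_i,x_i)\}$ drawn from $p(x_i|w_i,q)$. The relay constructs two quantization codebooks $\{\hat{Y}_{Rj}^n(k_j)\}_{k_j=1}^{2^{nR_{qj}}}$ for $j=1,2$, drawn from the marginals of $p(\hat{y}_{R1},\hat{y}_{R2}|y_R,q)$ with $R_{qj}$ slightly above $I(Y_R;\hat{Y}_{Rj}|Q)$, and partitions each into $2^{n\mathsf{C}_j}$ equal-size bins. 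After observing $Y_R^n$, the relay uses the covering lemma to select indices $(k_1,k_2)$ with each $\hat{Y}_{Rj}^n(k_j)$ jointly typical with $(Y_R^n,Q^n)$, and forwards the corresponding bin indices over the two digital links.

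Decoding at receiver $j$ is simultaneous joint typicality decoding over the tuple $(\hat{w}_j,\hat{x}_j,\hat{w}_{\bar{j}},\hat{k}_j)$ with $\hat{k}_j$ constrained to lie in the received bin. The error analysis enumerates events by which subset $\mathcal{S}\subseteq\{W_j,X_j,W_{\bar{j}}\}$ is in error and by whether $\hat{k}_j=k_j$ or not. A packing-lemma computation gives one rate inequality per event: if $\hat{k}_j=k_j$, the relay delivers its full $\Delta$-contribution of the form $I(\mathcal{S};\hat{Y}_{Rj}|Y_j,\mathcal{S}^c,Q)$; whereas if $\hat{k}_j\neq k_j$, the receiver can absorb at most $(\mathsf{C}_j-\xi_j)^+$ bits of extra confusion from the spurious bin candidates, where $\xi_j=I(Y_R;\hat{Y}_{Rj}|Y_j,X_j,W_{\bar{j}},Q)$ is precisely the residual quantization rate that cannot be resolved from the receiver's own side information. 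Taking the smaller of the two constraints for each event produces the $\min\{(\mathsf{C}_j-\xi_j)^+,\Delta(\cdot)\}$ term that appears in every bound of the statement.

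To convert the raw per-event constraints into the seven stated inequalities (\ref{achievable_R1})--(\ref{achievable_R12R2}), I would apply a Chong-Motani-Garg style Fourier-Motzkin elimination of the auxiliary private/common rate variables. The individual-rate bounds (\ref{achievable_R1})--(\ref{achievable_R2}) come from the event in which only the intended private message is in error at the corresponding receiver; the three sum-rate bounds (\ref{achievable_R1R2_1})--(\ref{achievable_R1R2_3}) arise by combining single-error events at the two receivers in the three natural common/private patterns (one common wrong, the other common wrong, or both commons wrong while each receiver separately fails on its private component); and (\ref{achievable_2R1R2})--(\ref{achievable_R12R2}) come from summing a double-error event at one receiver with a single-error event at the other. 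The chain-rule identities that link $(a_j,d_j,e_j,g_j)$ with their $\Delta$-counterparts ensure that the relay contribution attaches uniformly to each elimination.

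The main obstacle will be the combinatorial bookkeeping. At each receiver, several non-empty subsets of $\{W_j,X_j,W_{\bar{j}}\}$ can be jointly in error, and each must be paired with the two quantization-decoding modes, so a substantial number of events must be enumerated, matched across the two receivers, and combined before the rate-splitting variables can be eliminated. Particular care is required in handling $(\mathsf{C}_j-\xi_j)^+$ through the Fourier-Motzkin step, which is resolved by splitting into the cases $\mathsf{C}_j\ge\xi_j$ and $\mathsf{C}_j<\xi_j$ and verifying that the degenerate case reduces to the Han-Kobayashi region without relay help. Once this bookkeeping is under control, the proof is a direct adaptation of the single-link GHF derivation in \cite{peyman_ghf_jnl} to the present two-link setting; no block-Markov coding or backward decoding is needed because the relay-to-receiver links are orthogonal rate-limited channels.
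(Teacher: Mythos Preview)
Your proposal is correct and follows essentially the same approach as the paper: Han-Kobayashi rate-splitting at the transmitters, GHF quantize-bin-and-forward at the relay over the two separate digital links, joint typicality decoding of $(X_j,W_{\bar j},\hat{Y}_{Rj})$ at each receiver, and Fourier-Motzkin elimination of the private/common sub-rates to obtain (\ref{achievable_R1})--(\ref{achievable_R12R2}). The paper's proof is in fact just a sketch that defers the error-event analysis to \cite[Theorem~2]{peyman_ghf_jnl}; your description of how the two decoding modes ($\hat{k}_j=k_j$ versus $\hat{k}_j\neq k_j$) produce the $\min\{(\mathsf{C}_j-\xi_j)^+,\Delta(\cdot)\}$ structure is exactly the mechanism underlying that cited result.
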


\begin{proof}
The achievable scheme consists of a Han-Kobayashi strategy at the
transmitters and a generalized hash-and-forward strategy at the relay.
They are the same strategies as adopted in \cite{peyman_ghf_jnl} except
that unlike the GHF relaying scheme in
\cite[Theorem~2]{peyman_ghf_jnl}, where the relay quantizes the
received signal and broadcasts its bin index to both receivers through
a shared digital link, the relay here quantizes the received
signal with two different quantization resolutions, then sends the bin
indices of the quantized signals to the receivers through separated
digital links of rates $\C_1$ and $\C_2$. The following is a
sketch of the encoding/decoding process.

{\it{Encoding}}: Each transmit signal is comprised of a
common message of rate $T_i$ and a private message of rate $S_i$.
The common message codewords $W_i^n(j)$, $j=1,2,\cdots, 2^{nT_i}$ of length $n$ are
generated according to the probability distribution $\Pi_{i=1}^{n} p(w_i|q)$, where $q \sim p(q)$ serves as the time-sharing random variable. Based on the common message codewords, user $i$ generates codewords
$X_i^n(j, k),  k=1,2,\cdots,2^{nS_i}$ of length $n$ following the conditional distribution
$\Pi_{i=1}^{n} p(x_i|w_i, q)$. Each input message $\theta_i \in [1, 2, \cdots, 2^{S_i+T_i}], i=1,2$ is mapped to a message pair $(s_i, t_i) \in [1 , \cdots, 2^{S_i}]\times [1, \cdots, 2^{T_i}]$, then sent to the destinations as $X_i^n(s_i, t_i)$.  At the relay, the quantization codebook is generated according to the probability distribution $p(\hat{y}_{R1},\hat{y}_{R2}|y_R, q)$. After receiving $Y_R^n$, the relay
quantizes $Y_R^n$ into $\hat{Y}_{R1}^n$ and $\hat{Y}_{R2}^n$,
then bins $\hat{Y}_{R1}^n$ to $2^{nC_1}$ bins, and bins
$\hat{Y}_{R1}^n$ to $2^{nC_1}$ bins, and sends the bin indices to the
receivers through the digital links.

{\it{Decoding}}: The decoding process follows the Han-Kobayashi
framework: $X_1^n$ and $W_2^n$ are decoded by receiver $1$ with the help of the
index of the relayed message $\hat{Y}_{R1}^n$; $X_2^n$ and $W_1^n$ are decoded
by receiver $2$ with the help of the index of the relayed message
$\hat{Y}_{R2}^n$. To decode, receiver $1$ first constructs a list of candidates
for the relayed message $\hat{Y}_{R1}^n$, then jointly decodes $X_1^n$, $W_2^n$
and $\hat{Y}_{R1}^n$ using typicality decoding. Similarly, receiver $2$
jointly decodes $X_2^n$, $W_1^n$ and $\hat{Y}_{R2}^n$. Following the error probability analysis in \cite[Theorem~2]{peyman_ghf_jnl},  the rate tuple
$(S_1, T_1, S_2, T_2)$ satisfying the following constraints is achievable:

Constraints at receiver 1:
\begin{eqnarray}
S_1 &\le& \min \{I(X_{1}; Y_1|W_{1}, W_{2}, Q) +  (\C_1 - \xi_1)^+, \nonumber \\
	& & \qquad \; I(X_{1}; Y_1, \hat{Y}_{R1}|W_{1}, W_{2}, Q) \}  \\
S_1 + T_1 &\le& \min \{I(X_1; Y_1|W_{2}, Q)+ (\C_1 - \xi_1)^+, \nonumber \\
	& & \qquad \; I(X_1; Y_1, \hat{Y}_{R1}|W_{2}, Q) \} \\
S_1 + T_2 &\le& \min \{I(X_{1},W_{2}; Y_1|W_{1}, Q) + (\C_1 - \xi_1)^+,
	\nonumber \\
	& & \qquad \; I(X_{1},W_{2}; Y_1 ,\hat{Y}_{R1}|W_{1}, Q)\} \\
S_1 + T_1 + T_2 &\le& \min \{I(X_1, W_{2}; Y_1|Q) + (\C_1 - \xi_1)^+,
	\nonumber \\
	& & \qquad \; I(X_1, W_{2}; Y_1, \hat{Y}_{R1}|Q) \}
\end{eqnarray}

Constraints at receiver 2:
\begin{eqnarray}
S_2 &\le& \min \{I(X_{2}; Y_2|W_{1}, W_{2}, Q) +  (\C_2 - \xi_2)^+,  \nonumber \\
&& \qquad \; I(X_{2}; Y_2, \hat{Y}_{R2}|W_{1}, W_{2}, Q) \} \\
S_2 + T_2 &\le& \min \{I(X_2; Y_2|W_{1}, Q)+ (\C_2 - \xi_2)^+, \nonumber \\
&& \qquad \;  I(X_2; Y_2, \hat{Y}_{R2}|W_{1}, Q) \} \\
S_2 + T_1 &\le& \min \{I(X_{2},W_{1}; Y_2|W_{2}, Q) + (\C_2 - \xi_2)^+,  \nonumber \\
&& \qquad \; I(X_{2},W_{1}; Y_2 ,\hat{Y}_{R2}|W_{2}, Q)\} \\
S_2 + T_2 + T_1 &\le& \min \{I(X_2, W_{1}; Y_2|Q) + (\C_2 - \xi_2)^+,  \nonumber \\
&& \qquad \;  I(X_2, W_{1}; Y_2, \hat{Y}_{R2}|Q) \} \label{achievable_S_2_T_2_T_1}
\end{eqnarray}

The achievable rate region consists of all rate pairs $(R_1, R_2)$ such that
$R_1= S_1 + T_1$ and $R_2 =S_2 + T_2$. %, with the rate tuple $(S_1, T_1, S_2,
%T_2)$ satisfying the above two groups of constraints.
Applying the
Fourier-Motzkin elimination procedure \cite{HK2007} gives the
achievable rate region (\ref{achievable_R1})-(\ref{achievable_R12R2}).
%\footnote{Note that time-sharing is implicitly used to arrive at
%(\ref{achievable_R1}) and (\ref{achievable_R2}).}
\end{proof}

We remark here that although both Theorem~\ref{outerbound_theorem_twolinks} and Theorem~\ref{achievable_theorem_twolinks} are stated
for the digital noise-free relay-destination links, it can be easily
verified that both results continue to hold when the digital links are
replaced by analog additive Gaussian noise channels. The fact that the achievable rate region for
the analog channel is at least as large as the rate region for the digital channel is obvious since one can
always digitize the analog link. The fact that the outer bound continues to hold can be verified by going through the proof of that converse in Appendix~\ref{proof_outer_bound_twolinks}. The outer bounds in the converse involve terms like $I(X_1^n; Y_1^n, V_1^n)$, which is in turn upper bounded by $I(X_1^n; Y_1^n) + n\mathsf{C}_1$. It is easy to show that when the digital link $\mathsf{C}_1$ is replaced by an analog link with input $X_{a1}$ and output $Y_{a1}$, the mutual information term is upper bounded by $I(X_1^n; Y_1^n) + I(X_{a1}^n; Y_{a1}^n)$. As a result, all the outer bounds in Theorem~\ref{outerbound_theorem_twolinks} continue to hold in the case of the analog relay link with $\mathsf{C}_1$ replaced by $I(X_{a1}; Y_{a1})$ and $\mathsf{C}_2$ replaced by $I(X_{a2}; Y_{a2})$.

\subsection{Constant Gap in the Weak-Relay Regime}

We now specialize to the Gaussian case, and show that under the weak-relay condition (\ref{weak_relay_regime}), the achievable rate region and the outer bounds of
the Gaussian relay-interference channel with independent relay links
can be made to be within a constant gap to each other. The relaying
strategy that achieves this capacity to within a constant gap
turns out to be naturally suited for
the Gaussian relay-interference channel with a degraded broadcasting
relay, thus establishing the constant-gap result for the
broadcasting-relay case as well.

Assuming Gaussian codebooks and a Gaussian quantization scheme, the
key design parameters are the choice of common-private power splitting
ratio at the transmitters and the quantization level at the relay.
Our choice of design parameters is inspired by that of Wang and Tse
\cite{Wang_ReceiverCooperation}, where the capacity region of a
Gaussian interference channel with rate-limited receiver cooperation
is characterized to within a constant gap. Two key observations are
made in \cite{Wang_ReceiverCooperation}. First, the Etkin-Tse-Wang
strategy \cite{Tse2007} of setting the private power
to be at the noise level at the opposite receiver is used.
Second, the relay quantizes its observation at the private
signal level in order to preserve all the
information of interest to the destinations. At the destinations,
a joint decoding (see \cite{Kim_primitive, Dabora_jointdecoding,
Peyman_GHF,Avestimehr_relaynetwork}) is performed to recover the
source messages.

Consider now the optimal power splitting in a Gaussian
relay-interference channel with independent relay links.  The
Etkin-Tse-Wang strategy, i.e., setting private powers $P_{ip}$ as
\begin{equation} \label{ETW_SISO}
P_{1p} = \min \{1, h_{12}^{-2} \}, \quad P_{2p} = \min \{1, h_{21}^{-2} \}.
\end{equation}
is near optimal for the Gaussian interference channel with
conferencing receivers, but is not necessarily so for
relay-interference channel shown in
Fig.~\ref{channel_model} in its most general form.  Consider an
extreme scenario of $\C_1, \C_2 \rightarrow \infty$. In this case, the
relay fully cooperates with both receivers, so the relay-interference
channel becomes a single-input multiple-output (SIMO) interference
channel with two antennas at the receivers. Thus, the private powers
at the transmitters %$X_1$ and $X_2$
must be set at the effective noise level for the
two-antenna output in order to achieve capacity to within constant
bits \cite{Wang_Allerton} \cite{MIMO_IC}, i.e.,
\begin{equation} \label{ETW_SIMO}
P_{1p} = \min \{1, (g_1^2 + h_{12}^2)^{-1} \}, \quad P_{2p} = \min \{1, (g_2^2 + h_{21}^2)^{-1} \}.
\end{equation}
When $\C_1$ and $\C_2$ are finite, the optimal power splitting strategy is
expected to be a function of not only $h_{12}$ and $h_{21}$ but also
$g_1$, $g_2$, $\C_1$ and $\C_2$, lying somewhere between
(\ref{ETW_SISO}) and (\ref{ETW_SIMO}).
%Such an optimal strategy needs to
%reduce to the Etkin-Tse-Wang power splitting of $P_{1p} = \min \{1,
%h_{12}^{-2}\}, P_{2p} = \min \{1, h_{21}^{-2}\}$ when $\C_1=\C_2=0$, and reduce
%to (\ref{ETW_SIMO}) when $\C_1, \C_2 \rightarrow \infty$.

This complication can be avoided, however, if we focus on the weak-relay regime (\ref{weak_relay_regime}), namely $|g_1| \le
\sqrt{\rho}|h_{12}|$ and $|g_2| \le \sqrt{\rho}|h_{21}|$ for some
finite constant $\rho$. In this case, the power splittings
(\ref{ETW_SISO}) and (\ref{ETW_SIMO}) differ by at most a constant
factor. The main result of this section shows that in this weak-relay
regime, the Etkin-Tse-Wang's original power splitting (\ref{ETW_SISO})
is sufficient for achieving the capacity of the Gaussian
relay-interference channel to within a constant gap (which is a function of
$\rho$).

%It is shown that in such a weak-relay scenario, the
%Etkin-Tse-Wang power splitting strategy may be preserved, and the capacity
%region can be achieved to within a constant gap (as a function of $\rho$)
%without pursuing the optimal power splitting ratio.

%in which case the Etkin,
%Tse, and Wang's power splitting strategy together with GHF relaying
%strategy achieves the sum capacity of the relay-interference channel
%to within a constant gap.

Consider next the optimization of the quantization level.  Applying the insight of
\cite{Wang_ReceiverCooperation} to the Gaussian relay-interference
channel with independent relay links shown in Fig.~\ref{channel_model},
the quantized messages for two receivers can be
expressed as
\begin{eqnarray}
\hat{Y}_{R1} = g_1 U_1 + g_1 W_1 + g_2 W_2 + \overbrace{g_2 U_2 +
Z_R}^\text{of no interest at $Y_1$}  + \eta_1 \\
\hat{Y}_{R2} = g_1 W_1 + g_2 U_2 + g_2 W_2 + \underbrace{g_1 U_1  +
Z_R}_\text{of no interest at $Y_2$}   + \eta_2
\end{eqnarray}
where $W_i$ and $U_i$ are common message and private message respectively,
and $\eta_i \sim \mathcal{N}(0, \mathsf{q}_i)$ is the quantization noise,
$i=1,2$. Therefore, a reasonable choice of the quantization levels for
receiver $1$ and receiver $2$ is
\begin{equation} \label{adaptive_q_level}
\mathsf{q}_1 = 1 + g_2^2 P_{2p}, \quad \mathsf{q}_2 = 1 + g_1^2 P_{1p}.
\end{equation}
Now observe that in the weak-relay regime, i.e., $|g_1| \le
\sqrt{\rho}|h_{12}|, |g_2| \le \sqrt{\rho}|h_{21}|$, the above
quantization levels (with Etkin-Tse-Wang power splitting) are
between $1$ and the constant $\rho+1$. Thus, we can choose
the quantization levels to be a constant and optimize it between $1$ and $\rho + 1$.

%By writing down the gaps to the capacity as
%functions of quantization levels $\q_1$ and $\q_2$ and optimizing over
%$\q_1$ and $\q_2$, we found that even the channel is nonsymmetric,
%i.e. $g_1 \neq g_2$ and $\C_1 \neq \C_2$, the optimal quantization
%levels are shown to be an identical constant.

%We found that this observation also
%applies to our relay-interference channel as long as Etkin, Tse, and
%Wang's power splitting strategy is being used at transmitters.
%However, such an input strategy is not always a good idea for the
%
%This paper also studies the sum capacity of the symmetric relay-interference channel in terms of generalized degrees of freedom (g.d.o.f). Surprisingly, it is shown that, in the high signal-to-noise ratio regime, the Etkin-Tse-Wang power splitting strategy designed for SISO interference channel achieves the sum capacity (in terms of g.d.o.f) for {\it{all}} channel parameters. It is not clear this phenomenon is because of the channel symmetry or because of the high SNR range.

\begin{theo}[Constant Gap in the Weak-Relay Regime] \label{constantgap_theorem_twolinks}
For the Gaussian relay-interference channel with independent relay
links as depicted in Fig.~\ref{channel_model}, in the weak-relay regime,
using the generalized hash-and-forward relaying scheme with
quantization levels $\q_1=\q_2=\frac{\sqrt{\rho^2 + 16\rho +
16}-\rho}{4}$, where $\rho$ is defined in (\ref{definition_rho}),
and using the Han-Kobayashi scheme with Etkin-Tse-Wang power splitting strategy,
$X_i = U_i + W_i, i=1, 2$, where $U_i$ and $W_i$ are both Gaussian distributed with
the powers of $U_1$ and $U_2$ set according to $P_{1p}=\min\{1, h_{12}^{-2}\}$ and $P_{2p}
= \min \{1, h_{21}^{-2} \}$, respectively, the achievable rate region given in
Theorem~\ref{achievable_theorem_twolinks} is within
\begin{equation}
\delta = \hf \log\left(2 + \frac{\rho+\sqrt{\rho^2 + 16\rho + 16}}{2} \right)
\label{gap_to_capacity}
\end{equation}
bits of the capacity region outer bound in
Theorem~\ref{outerbound_theorem_twolinks}.
\end{theo}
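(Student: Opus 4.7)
The plan is to substitute the specific choice of Gaussian codebooks, Etkin--Tse--Wang private-power allocation $P_{ip} = \min\{1,|h_{j i}|^{-2}\}$, and the prescribed constant quantization variances $\q_1 = \q_2 = \q$ into each of the mutual-information quantities $a_i, d_i, e_i, g_i, \Delta a_i, \Delta d_i, \Delta e_i, \Delta g_i, \xi_i$ appearing in Theorem~\ref{achievable_theorem_twolinks}, and then to match the resulting numerical bounds term-by-term against the outer bounds of Theorem~\ref{outerbound_theorem_twolinks}. The weak-relay hypothesis $|g_1|^2 \le \rho|h_{12}|^2$ and $|g_2|^2 \le \rho|h_{21}|^2$ enters precisely to bound the ``extra'' SNR seen at the relay by a $\rho$-multiple of the interference-link SNR seen at the receivers, collapsing otherwise channel-dependent slack into a constant.

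First I would evaluate the relay-side quantities. With $\hat Y_{Ri}=Y_R+\eta_i$, $\eta_i\sim \mathcal N(0,\q)$, the compression overhead $\xi_i = I(Y_R;\hat Y_{Ri}\mid Y_i,X_i,W_{j},Q)$ reduces to $\tfrac12\log\bigl(1+(1+g_j^2 P_{jp})/\q\bigr)$, which under the weak-relay condition and the ETW private-power choice is sandwiched between $\tfrac12\log(1+\q^{-1})$ and $\tfrac12\log(1+(1+\rho)/\q) = \beta(\q)$ in the notation of (\ref{alpha_beta}). Similarly, each incremental term $\Delta a_i,\Delta d_i,\Delta e_i,\Delta g_i$ is of the form $\tfrac12\log(1+g_i^2P/\q)$ up to the same bounded residual, and the baseline Han--Kobayashi terms $a_i,d_i,e_i,g_i$ evaluate to the standard ETW expressions. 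Substituting into (\ref{achievable_R1})--(\ref{achievable_R12R2}) then gives closed-form achievable bounds that depend on the channel gains only through $\SNR_i,\INR_i,\SNR_{ri},\C_i$ and through $\rho,\q$.

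Next I would pair each outer bound with one or more achievable constraints and bound the gap. The individual-rate bounds (\ref{R1_bound_twolinks})--(\ref{R2_bound_twolinks}) are matched by (\ref{achievable_R1})--(\ref{achievable_R2}) using the fact that $(\C_i-\xi_i)^+$ loses at most $\xi_i \le \beta(\q)$ bits against $\C_i$, while $\Delta d_i$ loses at most $\alpha$-type constants against the SIMO relay gain in the outer bound. For the sum-rate bounds, the ETW-style constraints (\ref{sumrate_bound_1_twolinks})--(\ref{sumrate_bound_3_twolinks}) with the relay ``ideal'' term $\C_1+\C_2$ are matched by (\ref{achievable_R1R2_1})--(\ref{achievable_R1R2_3}) with the $(\C_i-\xi_i)^+$ branch, yielding a gap governed by $\xi_1+\xi_2$ plus the standard one-bit ETW slack. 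The SIMO-flavoured outer bounds (\ref{sumrate_bound_4_twolinks}), (\ref{sumrate_bound_7_twolinks}), (\ref{sumrate_bound_10_twolinks})--(\ref{sumrate_bound_12_twolinks})---whose novelty is the appearance of $\phi_i^2\SNR_{rj}$---are matched instead by the $\Delta\cdot_i$ branch; here the virtual two-antenna receiver $(Y_i,\hat Y_{Ri})$ maps onto the SIMO outer bound up to an additive constant controlled by $\q$ and $\rho$. The mixed bounds (\ref{2R1R2_bound_1_twolinks})--(\ref{2R1R2_bound_6_twolinks}) are handled by the same dichotomy applied to (\ref{achievable_2R1R2}), (\ref{achievable_R12R2}). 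In each pairing, one of the two branches of the $\min$ in the achievable constraint binds and contributes slack of the form $\alpha(\q)$ or $\beta(\q)$.

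The specific value $\q=(\sqrt{\rho^2+16\rho+16}-\rho)/4$ is obtained by optimizing $\q$ to equalize the dominant ``compression'' slack and the dominant ``unrecovered information'' slack: the former scales like $\alpha(\q)$ and the latter like $\beta(\q)$, and setting $\alpha(\q)=\beta(\q)$ (after collecting the worst-case coefficients from the pairings above) produces a quadratic in $\q$ whose positive root is exactly the stated expression. Evaluating the common gap at that $\q$ gives $\delta$ in (\ref{gap_to_capacity}). The hard part will be the bookkeeping: two dozen outer-bound inequalities must each be paired with a specific branch of a specific achievable constraint, and one must verify that no outer bound is left unmatched and that the $\rho$-dependent slack in every case collapses to the single symmetric value $\delta$. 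In particular, showing that the $\phi_i^2 \SNR_{rj}$ cross terms in the SIMO-style outer bounds naturally absorb the $\Delta\cdot_i$ quantities---so that the weak-relay hypothesis indeed suffices---is the principal technical step and the place where both the finiteness of $\rho$ and the ETW power-splitting choice are essential.
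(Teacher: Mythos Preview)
Your proposal is correct and follows essentially the same approach as the paper: evaluate the mutual-information quantities of Theorem~\ref{achievable_theorem_twolinks} under Gaussian inputs with ETW power splitting and the fixed quantization noise $\q$, bound $\xi_i\le\beta(\q)-\tfrac12$ and the $\Delta$-terms so that each ``$\C_i$''-branch loses at most $\beta(\q)$ against the corresponding cut-set/ETW outer bound while each ``$\Delta$''-branch loses at most $\alpha(\q)$ against the corresponding SIMO-style outer bound (including the $\phi_i^2\SNR_{rj}$ cross terms), and then equalize $\alpha(\q)=\beta(\q)$ to obtain the stated $\q$ and $\delta$. The paper organizes this exactly as you anticipate---a lemma collecting the nine inequalities for $a_i,d_i,e_i,g_i,\xi_i$ and their $\Delta$-versions, followed by the combinatorial pairing of each of the four branches of every achievable constraint with one of the outer bounds---so the ``hard part'' really is the bookkeeping you flag, with no additional idea required.
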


\begin{IEEEproof}
The main step is to show that
using superposition coding $X_i = U_i + W_i, i=1, 2$, where $U_i \sim
\mathcal{N}(0, P_{ip})$ and $W_i \sim \mathcal{N}(0, P_{ic})$ with $P_{ip} +
P_{ic} =1$, $P_{1p}=\min\{1,
h_{12}^{-2}\}$, and $P_{2p} = \min \{1, h_{21}^{-2} \}$, each of the achievable
rate constraints in (\ref{achievable_R1})-(\ref{achievable_R12R2}) is within a finite gap to the corresponding upper bound in
(\ref{R1_bound_twolinks})-(\ref{2R1R2_bound_6_twolinks}). Specifically,
it is shown  in Appendix~\ref{proof_constant_theorem_twolinks}  that
%\begin{itemize}

(i) Individual rate (\ref{achievable_R1}) is within
\begin{equation}
\delta_{R_1} = \max \left\{\alpha(\q_1), \beta(\q_1) \right\}
\end{equation}
bits of the upper bound (\ref{R1_bound_twolinks}), where $\alpha(\cdot)$ and $\beta(\cdot)$ are as defined in (\ref{alpha_beta});

(ii) Individual rate (\ref{achievable_R2}) is within
\begin{equation}
\delta_{R_2} = \max \left\{\alpha(\q_2), \beta(\q_2) \right\}
\end{equation}
bits of the upper bound (\ref{R2_bound_twolinks});

(iii) Sum rates (\ref{achievable_R1R2_1}), (\ref{achievable_R1R2_2}), and
(\ref{achievable_R1R2_3}) are within
\begin{eqnarray}
\delta_{R_1+R_2} & = & \max \left\{\alpha(\q_1) + \alpha(\q_2), \alpha(\q_1) +
\beta(\q_2), \right. \nonumber \\
&& \quad \left. \beta(\q_1) + \alpha(\q_2),  \beta(\q_1) + \beta(\q_2) \right\}
\end{eqnarray}
bits of the upper bounds
(\ref{sumrate_bound_1_twolinks})-(\ref{sumrate_bound_12_twolinks});

(iv) $2R_1 +R_2$ rate (\ref{achievable_2R1R2}) is within
\begin{eqnarray}
\delta_{2R_1 +R_2} &=& \max \left\{ 2\alpha(\q_1)+ \alpha(\q_2),
2\beta(\q_1)+ \alpha(\q_2), \right. \nonumber \\
&& \qquad \; \; \alpha(\q_1)+ \beta(\q_1)+ \alpha(\q_2), \nonumber \\
&& \qquad \; \; 2\alpha(\q_1)+ \beta(\q_2), 2\beta(\q_1)+ \beta(\q_2) \nonumber \\
&& \qquad \; \; \left. \alpha(\q_1)+ \beta(\q_1)+ \beta(\q_2) \right\}
\end{eqnarray}
bits of the upper bounds
(\ref{2R1R2_bound_1_twolinks})-(\ref{2R1R2_bound_6_twolinks});

(v) $R_1 +2R_2$ rate (\ref{achievable_R12R2}) is within
\begin{eqnarray}
\delta_{R_1 +2R_2} &=& \max \left\{ \alpha(\q_1)+ 2\alpha(\q_2),
	\alpha(\q_1)+ 2\beta(\q_2), \right. \nonumber \\
&& \qquad \; \; \alpha(\q_1)+ \beta(\q_2)+ \alpha(\q_2), \nonumber \\
&& \qquad \; \; \beta(\q_1)+ 2\alpha(\q_2), \beta(\q_1)+ 2\beta(\q_2), \nonumber \\
&& \qquad \; \; \left. \beta(\q_1)+ \beta(\q_2)+ \alpha(\q_2) \right\}
\end{eqnarray}
bits of the upper bounds not shown explicitly in Theorem~\ref{outerbound_theorem_twolinks} but can be obtained by switching the indices $1$ and $2$ of
(\ref{2R1R2_bound_1_twolinks})-(\ref{2R1R2_bound_6_twolinks}).
%\end{itemize}

Since $\alpha(\cdot)$ is a monotonically increasing
function and $\beta(\cdot)$ is a monotonically decreasing function. In order to
minimize the above gaps over $\q_1$ and $\q_2$, the quantization levels should
be set such that
\begin{equation}
\alpha(\q_1^*) = \beta(\q_1^*) = \alpha(\q_2^*) = \beta(\q_2^*),
\end{equation}
which results in $\q_1^* = \q_2^* = \frac{\sqrt{\rho^2 + 16\rho +
16}-\rho}{4}$. Substituting $\q_1^*$ and $\q_2^*$ into the above gaps, we prove
that the constant gap is $\delta$ bits per dimension, where $\delta$ is given in
(\ref{gap_to_capacity}).
\end{IEEEproof}

Note that the finite capacity gap is an increasing function of
$\rho$: smaller $\rho$ results in a smaller gap. In the case that
$\rho=1$, i.e., $|g_1| \le |h_{12}|$ and $|g_2| \le |h_{21}|$, the
optimal quantization levels are $\q_1^*=\q_2^*=\frac{\sqrt{33}-1}{4}$,
and the gap to the capacity is given by $\hf \log \left(
\frac{5+\sqrt{33}}{4} \right) =1.2128$ bits.

%\begin{figure} [t]
%\centering
%\includegraphics[width=3.5in]{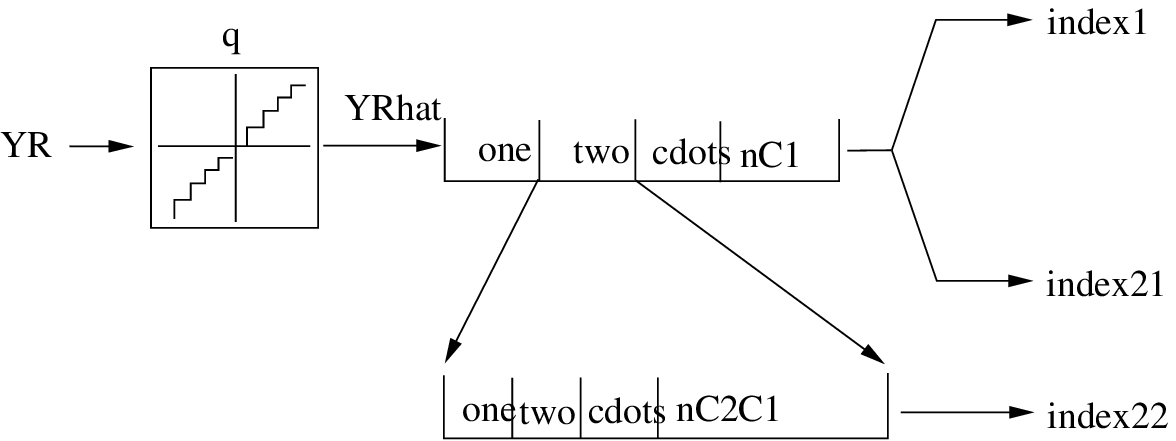}
%\caption{Evolution chart of the universal relaying scheme} \label{universal_ghf_twolinks}
%\end{figure}

\subsection{Gaussian Relay-Interference Channel with a Broadcasting Relay}

\begin{figure}[t]
\centering
\subfigure[GHF]{ \psfrag{YR}{$Y_{R}^n$} \psfrag{q1}{$\mathsf{q}_1$}  \psfrag{q2}{$\mathsf{q}_2$}  \psfrag{YR1}{$\hat{Y}_{R1}^n$}   \psfrag{YR2}{$\hat{Y}_{R2}^n$} \psfrag{one}{$1$}  \psfrag{two}{$2$} \psfrag{cdots}{$\cdots$} \psfrag{nC1}{$2^{n\mathsf{C}_1}$}  \psfrag{nC2}{$2^{n\mathsf{C}_2}$}  \psfrag{index1}{$\mathrm{index}_1$} \psfrag{index2}{$\mathrm{index}_2$}
\includegraphics[width=3.1in]{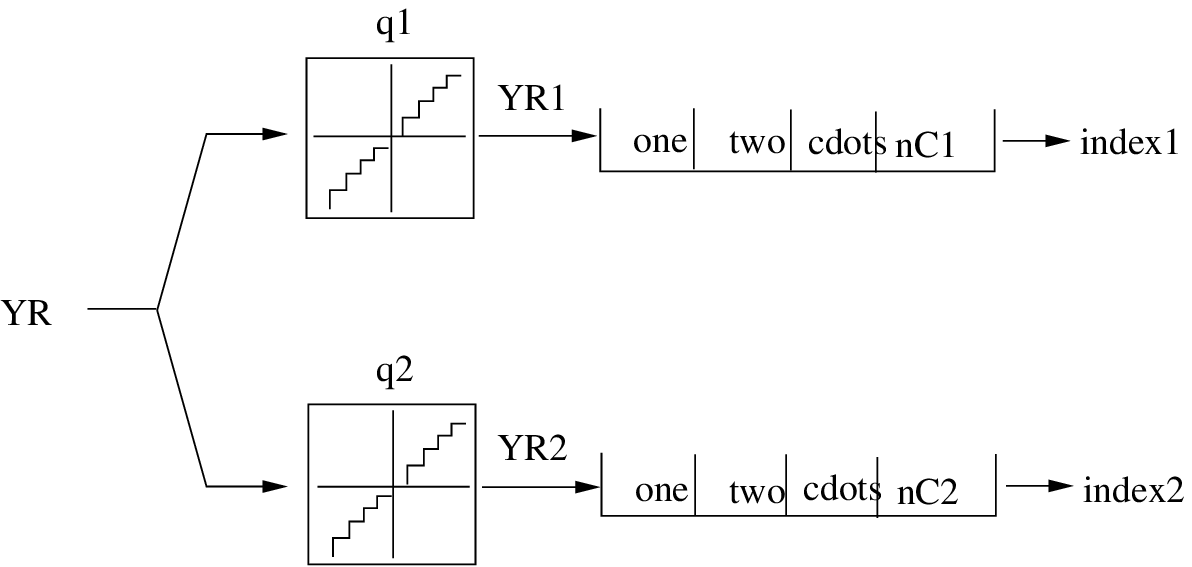}
\label{ghf_twolinks}
}
\subfigure[Incremental GHF with a refinement process]{
\psfrag{YR}{$Y_{R}^n$} \psfrag{q}{$\mathsf{q}$} \psfrag{YRhat}{$\hat{Y}_{R}^n$}   \psfrag{one}{$1$}  \psfrag{two}{$2$} \psfrag{cdots}{$\cdots$} \psfrag{nC1}{$2^{n\mathsf{C}_1}$}  \psfrag{nC2C1}{$2^{n(\mathsf{C}_2-\mathsf{C}_1)}$}  \psfrag{index1}{$\mathrm{index}_1$} \psfrag{index21}{$\mathrm{index}_{21}$}  \psfrag{index22}{$\mathrm{index}_{22}$}
\includegraphics[width=3.1in]{./figures/universal_ghf_twolinks}
\label{universal_ghf_twolinks}
}
\subfigure[Universal GHF]{
\psfrag{YR}{$Y_{R}^n$} \psfrag{q}{$\mathsf{q}$} \psfrag{YRhat}{$\hat{Y}_{R}^n$}   \psfrag{one}{$1$}  \psfrag{two}{$2$} \psfrag{cdots}{$\cdots$} \psfrag{nC1}{$2^{n\mathsf{C}}$} \psfrag{index1}{$\mathrm{index}_1$} \psfrag{index2}{$\mathrm{index}_{2}$}
\includegraphics[width=3.1in]{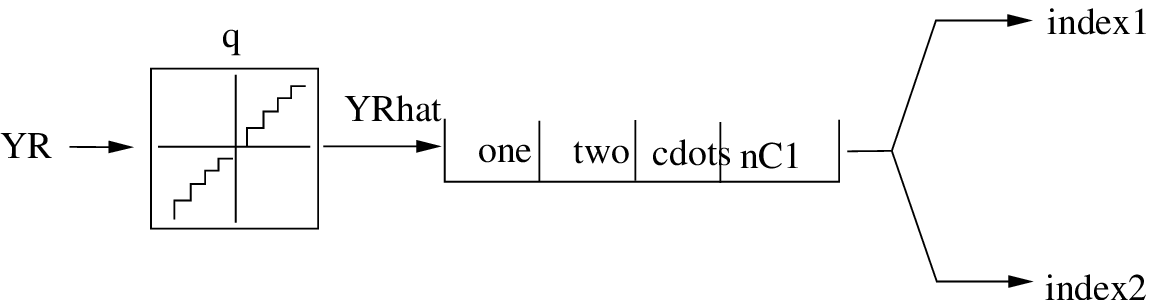}
\label{universal_ghf}
}
\label{fig:subfigureExample}
\caption{Evolution of the generalized hash-and-forward relay scheme}
\end{figure}

The GHF relaying scheme originally stated in
Theorem~\ref{achievable_theorem_twolinks} requires independent relay
links. As shown
in Fig.~\ref{ghf_twolinks}, the relay observation $Y_R^n$ undergoes
two separate quantization and binning processes to obtain the two
messages %$\textrm{idx}_1$ and $\textrm{idx}_2$
for the two receivers. However, in the weak-relay regime,
Theorem~\ref{constantgap_theorem_twolinks} shows that using an
identical quantization level for the two receivers is without loss
of approximate optimality, thus a common quantization process can be
shared between the two receivers.  Further, since the same
$\hat{Y}_R^n$ is binned into bins of sizes $2^{n\C_1}$ and
$2^{n\C_2}$, this is equivalent to first binning $\hat{Y}_R^n$ into
$2^{n\C_1}$ bins (assuming $\C_1 \le \C_2$) then further binning
each bin into $2^{n(\C_2 - \C_1)}$ sub-bins, as shown in
Fig.~\ref{universal_ghf_twolinks}. The message sent to receiver 2
can be thought of as the refinement of the message sent to receiver 1.
This is exactly the incremental relaying strategy we seek for the
Gaussian interference channel with a broadcasting relay, where the
message to receiver 1 is a degraded version of the message to receiver
2. Finally, if $\C_1=\C_2=\C$, the relay-interference channel reduces
to the universal relaying scheme studied in \cite{peyman_ghf_jnl},
where a digital link is shared between the relay and the receivers, as
shown in Fig.~\ref{universal_ghf}. We note here that the outer bounds for the independent relay link case (Theorem~\ref{outerbound_theorem_twolinks}) continues to hold for the degraded broadcast relay case.

% With the identical
%quantization levels $\q_1=\q_2$, the relay is in fact communicating
%the indices of the same quantized message to the receivers in an
%incremental way. More specifically, a common index generated from a
%binning process with the bin size determined by $\C_1$ (assume that
%$\C_1 \le \C_2$) is first broadcasted to both receivers; an extra
%index generated from a refinement binning process is then sent to the
%receiver with higher rate of digital link, i.e. receiver $2$. This
%incremental relaying scheme is in fact a universal relaying scheme
%with an extra refinement process. Furthermore, in the special case of
%$\C_1=\C_2$, the type I channel recovers to the relay-interference
%channel with a shared digital link \cite{Peyman_GHF}. In this
%scenario, the refinement process disappears and we arrive at the
%universal relaying strategy.

\begin{corollary}
\label{incremental_relaying}
The constant-gap-to-capacity result stated in Theorem
\ref{constantgap_theorem_twolinks} holds also for the Gaussian
relay-interference channel with degraded broadcasting relay links,
where (assuming $\C_1 \le \C_2$) the message sent through the link with
capacity $\C_1$ must be a degraded version of the message sent through
the link with capacity $\C_2$.
\end{corollary}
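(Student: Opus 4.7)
The plan is to verify that both the outer bounds and the achievability scheme of Theorem~\ref{constantgap_theorem_twolinks} transfer to the degraded broadcasting relay setting with essentially no additional work. For the converse, I would argue that any code for the broadcasting-relay channel with degradation constraint $\C_1 \le \C_2$ is also a legitimate code for the two-independent-link channel of Fig.~\ref{channel_model}: have the relay transmit the common (degraded) message on the $\C_1$ link and the full message on the $\C_2$ link. Hence the capacity region with a degraded broadcasting relay is contained in the capacity region with two independent links, and every outer bound in Theorem~\ref{outerbound_theorem_twolinks} remains valid in the broadcast setting.

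For achievability, the crucial enabling observation, already isolated inside the proof of Theorem~\ref{constantgap_theorem_twolinks}, is that under the weak-relay condition the near-optimal quantization resolutions coincide, $\q_1^* = \q_2^* = \q^*$. This makes it inessential whether one uses two independent quantized sequences $\hat{Y}_{R1}^n,\hat{Y}_{R2}^n$ or a single quantized sequence $\hat{Y}_R^n$, so I would choose the latter. Assuming $\C_1 \le \C_2$, I would then describe a nested (hierarchical) binning of the quantization codebook: partition the $\hat{Y}_R^n$ codewords into $2^{n\C_1}$ bins indexed by $B_1$, and subdivide each of those bins into $2^{n(\C_2-\C_1)}$ sub-bins indexed by $B_1'$. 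The degraded broadcast component is $M_c = B_1$, delivered to both receivers at rate $\C_1$, and the refinement is $M_r = B_1'$, sent only to receiver~2 at additional rate $\C_2-\C_1$. Receiver~2 reassembles $B_2 = (B_1,B_1')$, an index into one of $2^{n\C_2}$ bins of $\hat{Y}_R^n$.

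Since each receiver now has access to a bin index of exactly the same size as in the independent-link achievability scheme, and since the underlying quantization codebook is the same Gaussian codebook of resolution $\q^*$, I would reuse the typicality-decoding and error-probability argument from the proof of Theorem~\ref{achievable_theorem_twolinks} essentially verbatim; the marginal error probability at each receiver is unchanged, so the union bound delivers vanishing overall error at the same rate region. The degradation constraint $M_c = B_1$ is, by construction, a deterministic function of $B_2 = (B_1, B_1')$, so the scheme is admissible on the degraded broadcasting relay. Combining with the transferred outer bound immediately yields the same $\delta$-bit gap as in Theorem~\ref{constantgap_theorem_twolinks}. The only step that even requires a moment of thought is the converse transferability, but this reduces to the one-line simulation argument above; no new calculation is needed.
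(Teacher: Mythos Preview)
Your proposal is correct and follows essentially the same route as the paper: the paper also observes that the outer bounds of Theorem~\ref{outerbound_theorem_twolinks} continue to hold in the degraded broadcast setting, and that since $\q_1^*=\q_2^*$ a single quantization followed by nested binning (first into $2^{n\C_1}$ bins, then each into $2^{n(\C_2-\C_1)}$ sub-bins) yields an admissible incremental scheme achieving the same region. Your simulation argument for the converse is slightly more explicit than the paper's one-line remark, but the substance is identical.
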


\subsection{Comments on the Strong-Relay Regime}

The constant-gap result in this paper holds only in the weak-relay
regime of $|g_1| \le \sqrt{\rho}|h_{12}|$ and $|g_2| \le
\sqrt{\rho}|h_{21}|$, where $\rho$ is finite. The main difficulty in
extending this result to the general case is that both the
choice of the Han-Kobayashi power splitting ratio and the GHF relay
strategy are no longer optimal in the strong-relay regime. As mentioned earlier, the Etkin-Tse-Wang power splitting is
not optimal when the relay links $g_i, i=1, 2$ grow unboundedly
stronger than the interference links $h_{12}$ and $h_{21}$. Further, GHF may not be an appropriate relay
strategy. To see this, assume a channel model with separate relay
links and consider an extreme scenario where the relay
links $g_i, i=1,2$ go to infinity, while all other channel parameters
are kept constant. This special case is known as the cognitive
relay-interference channel \cite{Sahin_Erkip_2}. The capacity region outer bound of
Theorem~\ref{outerbound_theorem_twolinks} for this case reduces to
\begin{eqnarray}
R_1 &\le& \frac{1}{2}\log(1 + \mathsf{SNR}_1) + \C_1 \\
R_2 &\le& \frac{1}{2}\log(1 + \mathsf{SNR}_2) + \C_2 \\
R_1 + R_2 &\le& \frac{1}{2}\log(1 + \mathsf{SNR}_2 + \mathsf{INR}_1)  \nonumber \\
&& + \frac{1}{2}\log \left(1+  \frac{ \mathsf{SNR}_1}{1 + \mathsf{INR}_1}\right) + \C_1 + \C_2 \\
R_1 + R_2 &\le& \frac{1}{2}\log(1 + \mathsf{SNR}_1 + \mathsf{INR}_2) \nonumber \\
&& + \frac{1}{2}\log \left(1+ \frac{\mathsf{SNR}_2}{1 + \mathsf{INR}_2}\right) + \C_1 + \C_2 \\
R_1 + R_2 &\le& \frac{1}{2}\log \left(1 + \mathsf{INR}_2 + \frac{\mathsf{SNR}_1}{1 + \mathsf{INR}_1} \right) \nonumber \\
&& +\frac{1}{2}\log \left(1 + \mathsf{INR}_1 + \frac{\mathsf{SNR}_2}{1 + \mathsf{INR}_2}  \right) + \C_1 + \C_2 \nonumber \\
&& \\
2R_1 + R_2 &\le& \hf \log \left(1 + \SNR_1 + \INR_2 \right)  \nonumber \\
&& + \hf \log\left(1 + \INR_1 + \frac{\SNR_2}{1 + \INR_2} \right) \nonumber \\
&&+ \hf \log \left(1+ \frac{ \SNR_1}{1 + \INR_1}\right) + 2\C_1 + \C_2 \\
R_1 + 2R_2 &\le& \hf \log \left(1 + \SNR_2 + \INR_1 \right) \nonumber \\
&& + \hf \log\left(1 + \INR_2 + \frac{\SNR_1}{1 + \INR_1} \right) \nonumber \\
&&+ \hf \log \left(1+ \frac{ \SNR_2}{1 + \INR_2}\right) + \C_1 + 2\C_2,
\end{eqnarray}
which is in fact the outer bound of the underlying interference
channel expanded by $\C_1$ bits in the $R_1$ direction and $\C_2$ in
the $R_2$ direction. In this special case, a decode-and-forward
strategy can easily achieve the capacity region to within a constant gap.
This is because the relay is capable of decoding all the source messages,
so it can simply forward the bin indices of the privates messages to achieve $(R_1 + \C_1, R_2 + \C_2)$
for any achievable rate pair $(R_1, R_2)$ in the absence of the relay.
Etkin-Tse-Wang power splitting with decode-and-forward then achieves
the outer bound to within a constant gap. In contrast, GHF cannot achieve the capacity region to within a constant gap in this case.

\subsection{Generalized Degrees of Freedom}

We can gain further insights into the effect of relaying on the Gaussian
interference channel by analyzing the GDoF
of the sum rate in the symmetric channel setting.
Consider the case where $\INR_1 = \INR_2 = \INR$,
$\SNR_1 = \SNR_2 = \SNR$, $\SNR_{r1} = \SNR_{r2} = \SNR_r$, and $\C_1 = \C_2 =
\C$. In the high SNR regime, similar to \cite{Tse2007, Wang_ReceiverCooperation}, define
\begin{eqnarray}
\alpha &:=& \lim_{\SNR \rightarrow \infty} \frac{\log \INR}{\log \SNR}, \\
\beta  &:=& \lim_{\SNR \rightarrow \infty}\frac{\log \SNR_{r}}{\log \SNR},\\
\kappa &:=& \lim_{\SNR \rightarrow \infty}\frac{\C}{\hf \log \SNR}.
\end{eqnarray}
The GDoF of the sum capacity is defined as
\begin{equation}
d_{\mathrm{sum}} = \left. \lim_{\SNR \rightarrow \infty}\frac{C_{\mathrm{sum}}}{\hf \log \SNR} \right|_{\mathrm{fixed} \; \alpha, \beta, \kappa }
\end{equation}
As a direct consequence of the constant-gap result, $d_{\mathrm{sum}}$
can be characterized in the weak-relay regime as follows.

\begin{corollary}
For the symmetric Gaussian relay-interference channel in the weak-relay regime (i.e., $\beta \le \alpha$), the GDoF
of the sum capacity is given by the following. When $0 \le \alpha < 1$
\begin{eqnarray} \label{d_sym_twolinks_1}
d_{\mathrm{sum}} &=&  \min \left\{2-\alpha + \min\{\beta,
\kappa\}, 2\max\{\alpha, 1 -\alpha\}+ 2\kappa,  \right. \nonumber \\
&& \left. \qquad \;2 \max \{\alpha, 1 + \beta -\alpha \} \right\}.
\end{eqnarray}
When $\alpha \ge 1$
\begin{eqnarray} \label{d_sym_twolinks_2}
d_{\mathrm{sum}} = \min\left\{\alpha + \kappa, \alpha + \beta, 2(1 + \kappa), 2 \max\{1, \beta\} \right\}.
\end{eqnarray}
\end{corollary}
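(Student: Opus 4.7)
The plan is to derive $d_{\mathrm{sum}}$ directly from the constant-gap statement of Theorem~\ref{constantgap_theorem_twolinks}. Because the achievable region lies within an SNR-independent constant $\delta$ of the outer bound of Theorem~\ref{outerbound_theorem_twolinks}, and $\delta/(\hf\log\SNR)\to 0$, the sum-GDoF equals the high-SNR normalization of $\max\{R_1+R_2 : (R_1,R_2)\in\overline{\mathcal{C}}\}$. It therefore suffices to specialize every inequality in Theorem~\ref{outerbound_theorem_twolinks} to the symmetric scaling $\INR=\SNR^\alpha$, $\SNR_r=\SNR^\beta$, $\C=\hf\kappa\log\SNR$, to replace each $\hf\log(1+\sum_i\SNR^{a_i})/(\hf\log\SNR)$ by $\max_i a_i^+$, and to minimize over all bounds.

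A key subtlety in carrying out the substitution is the scaling of $\phi_i^2=|g_jh_{ij}/(g_ih_{jj})-1|^2$: in the symmetric channel this reduces to $|h_{ij}/h_{jj}-1|^2$, which tends to a positive constant when $\alpha<1$ (since $|h_{ij}/h_{jj}|^2=\INR/\SNR\to 0$) but scales as $\SNR^{\alpha-1}$ when $\alpha>1$. Consequently the recurring factor $\SNR(1+\phi_i^2\SNR_r)$ asymptotes to $\SNR^{1+\beta}$ when $\alpha<1$ and to $\SNR^{\max\{1,\alpha+\beta\}}$ when $\alpha\ge1$. This dichotomy is precisely what produces the two separate formulas (\ref{d_sym_twolinks_1}) and (\ref{d_sym_twolinks_2}).

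Using the $1\leftrightarrow 2$ index symmetry of the channel, I would then identify the active outer bounds in the weak-relay region $\{\beta\le\alpha\}$. For $\alpha<1$ three constraints survive: bounds (\ref{sumrate_bound_5_twolinks})/(\ref{sumrate_bound_8_twolinks}) yield $2-\alpha+\kappa$ and the no-$\C$ bounds (\ref{sumrate_bound_10_twolinks})/(\ref{sumrate_bound_11_twolinks}) yield $2-\alpha+\beta$, whose minimum equals the first term $2-\alpha+\min\{\beta,\kappa\}$; bound (\ref{sumrate_bound_3_twolinks}) yields the second term $2\max\{\alpha,1-\alpha\}+2\kappa$; and the no-$\C$ bound (\ref{sumrate_bound_12_twolinks}) yields the third term $2\max\{\alpha,1+\beta-\alpha\}$. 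For $\alpha\ge 1$ the $2R_i+R_j$ constraints and (\ref{sumrate_bound_12_twolinks}) become slack, and the surviving bounds are (\ref{sumrate_bound_5_twolinks})/(\ref{sumrate_bound_8_twolinks}) giving $\alpha+\kappa$; (\ref{sumrate_bound_10_twolinks})/(\ref{sumrate_bound_11_twolinks}) giving $\alpha+\beta$ once the revised $\phi_i^2\sim\SNR^{\alpha-1}$ scaling is inserted; and the individual-rate sum (\ref{R1_bound_twolinks})+(\ref{R2_bound_twolinks}) giving $2+2\min\{\kappa,(\beta-1)^+\}=\min\{2(1+\kappa),\,2\max\{1,\beta\}\}$.

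The main obstacle will be the pointwise dominance check: verifying that every remaining sum-rate, $2R_1+R_2$, and $R_1+2R_2$ constraint produces a GDoF expression that is pointwise at least as large as the claimed active set throughout $\{(\alpha,\beta,\kappa):\beta\le\alpha,\kappa\ge0\}$. This is a finite but tedious enumeration of piecewise-affine inequalities, naturally organized by the W-curve sub-regimes $\alpha\in[0,\tfrac12]$, $[\tfrac12,\tfrac23]$, $[\tfrac23,1]$, $[1,2]$, $[2,\infty)$. Once this bookkeeping is complete, achievability is automatic because Theorem~\ref{constantgap_theorem_twolinks} already guarantees that the GHF plus Han-Kobayashi scheme attains every outer bound up to a constant, which vanishes after the GDoF normalization.
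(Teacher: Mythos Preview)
Your proposal is correct and follows exactly the approach the paper intends: the corollary is stated there simply as ``a direct consequence of the constant-gap result,'' with no further details, so your plan of normalizing each outer bound of Theorem~\ref{outerbound_theorem_twolinks} and invoking Theorem~\ref{constantgap_theorem_twolinks} for matching achievability is precisely what is required. Your treatment of the $\phi_i^2$ scaling (constant for $\alpha<1$, growing like $\SNR^{\alpha-1}$ for $\alpha>1$) is the key point the paper also highlights in the remark immediately following the corollary, and your identification of the active bounds in each regime is accurate.
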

\vspace{1em}

Note that when $\alpha=1$, the GDoF of the sum capacity is in fact not
well defined. This is because both $\INR = \gamma \SNR$ (where $\gamma \neq 1$
is finite) and $\INR=\SNR$ result in the same $\alpha=1$. However, in the case of
$\INR=\SNR$, the channel becomes ill conditioned, i.e. $\phi_1 = \phi_2
= 0$, which results in a $d_{\mathrm{sum}}$ other than the one in
(\ref{d_sym_twolinks_2}). In other words, multiple values of $d_{\mathrm{sum}}$
correspond to the same $\alpha=1$. This is similar to the situation of
\cite[Theorem 7.3]{Wang_ReceiverCooperation}. Applying the similar argument
that the event $\left\{\INR = \SNR \right\}$ is of zero measure, we have the
GDoF of the sum capacity as shown in (\ref{d_sym_twolinks_2}) almost
surely.

When the relay links and the interference links share the same channel
gain, i.e. $\alpha=\beta$, the GDoF of the sum capacity reduces to
\begin{eqnarray} \label{sum_capacity_1}
d_{\mathrm{sum}} = \min \left\{2 + \kappa -\alpha, 2\max\{\alpha, 1 -\alpha\}+ 2\kappa, 2 \right\}
\end{eqnarray}
for $0 \le \alpha < 1$, and
\begin{equation} \label{sum_capacity_2}
d_{\mathrm{sum}} = \min \left\{\alpha + \kappa, 2(1 + \kappa), 2\alpha
\right\},
\end{equation}
for $\alpha \ge 1$. Interestingly, this is the same as the sum capacity
(in GDoF) of the Gaussian interference channel with rate-limited
receiver cooperation \cite{Wang_ReceiverCooperation}. Therefore, the
same sum capacity GDoF gain can be achieved with either receiver
cooperation or with an independent in-band-reception and
out-of-band-transmission relay assuming that the source-relay links
are the same as the interfering links of the underlying interference
channel (i.e. $\alpha=\beta$).

Fig.~\ref{comparison_impact_of_kappa} shows the GDoF gain due to
the relay for the $\alpha=\beta$
case.  There are several interesting features.
When $\kappa = 0.2$, the GDoF curve remains the
``W'' shape for the conventional Gaussian interference channel \cite{Tse2007}.
%Numerically, as shown in Fig.~\ref{comparison_impact_of_kappa} for
%$\kappa=0.2$, the effect of the relay is to shift the sum-capacity up.
%But the gain due to the relay is not uniform in all regime.
The sum-capacity gain is $2\kappa$ in the very and moderately weak
interference regimes (when $0.2 \le \alpha \le 0.6$) or the very strong
interference regime ($\alpha \ge 2.2$), and is $\kappa$ in other regimes ($\frac{2}{3} \le
\alpha \le 2$).  %For other ranges of $\alpha$, the gain
%is between $\kappa$ and $2\kappa$.
As $\kappa$ gets larger, the left ``V'' branch of the ``W'' curve becomes
smaller, and it disappears completely at the critical point of $\kappa=0.5$.
As $\kappa$ keeps increasing, the right ``V'' of the ``W'' curve also
eventually disappears. %At $\kappa=1.2$, the sum capacity saturates for
%$0 \le \alpha \le 1$ and $\alpha \ge 2 + \kappa$, and is piecewise
%linear for $\alpha$ values in between.
The detailed sum-capacity gains for different values of
$\alpha$ are listed in Table~\ref{table_sum_gain_type_I}.

% since another sum-capacity constraint $R_{sum} \le 2\alpha$ kicks in, .

%\begin{figure} [t]
%\centering
%\includegraphics[width=4.0in]{./figures/comparison}
%\caption{Relay-interference channel vs. baseline: generalized degrees of
%freedom} \label{comparison}
%\end{figure}

%When $\kappa \ge \hf$, %The sum
%\subsubsection{When $\kappa \ge \hf$} %The sum
%capacity of the relay-interference channel here differs from that in
%the case of $\kappa \le \hf$.
%To evaluate the performance gain brought
%by the relay, Fig.~\ref{comparison_impact_of_kappa} plots the sum
%capacity of the Gaussian relay-interference channel versus $\kappa$,
%the rate of the digital link.
%As can be seen from the figure, when
%$\kappa=0$ (the baseline), the sum capacity has a familiar ``W'' shape
%which has been shown in \cite{Tse2007}.

\begin{table*} [t]
    \center
    \caption{Sum-capacity GDoF gain due to the relay for the symmetric
	Gaussian relay-interference channel for the $\alpha=\beta$ and
	$\kappa \le \hf$ case}
    \label{table_sum_gain_type_I}
    \begin{tabular}{|c|c|c|c|c|c|c|}
    \hline
     Range of $\alpha$ & $\alpha \le \kappa$ &$\kappa \le \alpha \le \frac{2-\kappa}{3}$& $\frac{2-\kappa}{3} \le \alpha \le \frac{2}{3}$& $\frac{2}{3} \le \alpha \le 2$ & $2 \le \alpha \le 2+\kappa$ &$\alpha \ge 2 +\kappa$ \\ \hline
    Gain   & $2\alpha$ &$2\kappa$ &$2-3\alpha+\kappa$ & $\kappa$ &$\alpha+\kappa -2$ & $2\kappa$   \\ \hline
%    Type II  & $\alpha$   &$\kappa$ &$2-3\alpha$ & $0$   &$\alpha -2$   & $\kappa$    \\ \hline
%    Type III & $0$        &$\kappa$ &$2-3\alpha$ & $0$   &$0$   & $0$         \\ \hline
    \end{tabular}
\end{table*}

\begin{figure} [t]
\centering
\includegraphics[width=3.4in]{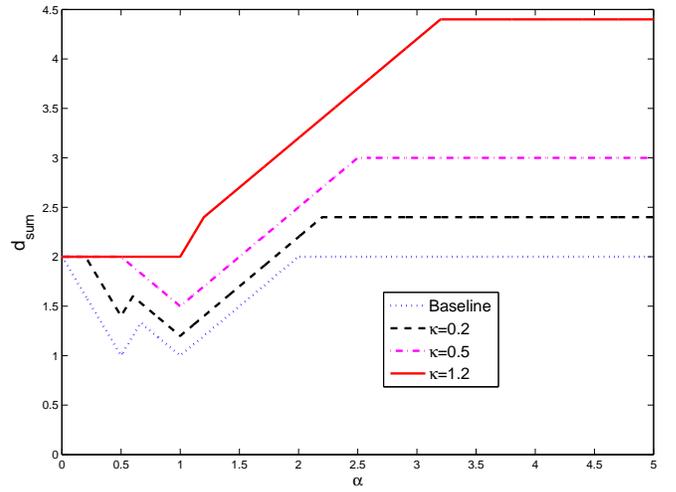}
\caption{The GDoF gain due to the relay in a symmetric Gaussian
relay-interference channel for the $\alpha=\beta$ case}
\label{comparison_impact_of_kappa}
\end{figure}

\subsection{Interpretation via the Deterministic Relay Channel}

In the Han-Kobayashi framework, each input signal of the interference
channel consists of both a common message and a private message. The
sum-capacity gain due to the relay in the relay-interference channel
therefore in general includes improvements in both the common
and the private message rates. This section illustrates that in the
asymptotic high SNR regime, the rate improvement can be
interpreted as either a private rate gain alone, or a common rate gain
alone. Further, the one-bit-per-relay-bit or the
two-bits-per-relay-bits GDoF  improvement shown in the previous
section can be interpreted using a deterministic relay model. The rest
of this section illustrates this point for the symmetric Gaussian
relay-interference channel in the $\alpha=\beta$ and $\kappa \le \hf$
case as an example.

\subsubsection{Very Weak Interference Regime}

For the symmetric Gaussian interference channel, in the very weak
interference regime of $0 \le \alpha \le \hf$, common messages do not
carry any information (although it can be assigned nonzero powers
as in the Etkin-Tse-Wang power splitting strategy). Setting $X_1$ and
$X_2$ to be private messages only is capacity achieving in terms of
GDoF  (\cite{Tse2007, Biao, Khandani08, VVV}).

Assigning $X_1$ and $X_2$ to be private only is also optimal for
GDoF  for the symmetric Gaussian relay-interference channel in the
very weak interference regime. This is because when $X_1$ and $X_2$ are
both private messages and are treated as noises at $Y_2$ and $Y_1$
respectively, the relay-interference channel asymptotically becomes
two deterministic relay channels in the high SNR regime.
Consider the relay operation for $Y_1$ as
illustrated in Fig.~\ref{weak_determin_relay_1}. When noise variances
of $Z_1$ and $Z_R$ go down to zero, the observation at the relay
becomes $Y_R = gX_1 + gX_2$ and the received signal at receiver $1$
becomes $Y_1 = h_d X_1 + h_c X_2$. In this case, the relay's
observation is a deterministic function of $X_1$ and $Y_1$, i.e. $Y_R
= gX_1 + \frac{g}{h_c}(Y_1
- h_dX_1)$. Thus $X_1$ and $Y_1$, along with the relay $Y_R$ form a
  deterministic relay channel of the type studied in
\cite{Kim2008}.  According to \cite{Kim2008}, the achievable rate of
user $1$ is given by
    \begin{eqnarray}
    R_1 &=& \min \left\{I(X_1; Y_1, Y_R), I(X_1; Y_1) + \C  \right\} \nonumber \\
    &=& \min \left\{\hf \log (1 + h_d^2), \hf \log \left(1 + \frac{h_d^2}{h_c^2} \right) + \C \right\}\nonumber \\
    &\rightarrow& \min\{1, 1 -\alpha + \kappa\},
    \end{eqnarray}
resulting in one-bit improvement for each relay bit in the regime
$\kappa \le \alpha \le \hf$.
Similarly, as illustrated in Fig.~\ref{weak_determin_relay_2}, $X_2$,
$Y_2$, and $Y_R$ form another deterministic relay channel with $X_2$
as the input, $Y_2$ as the output, and $Y_R$ as the relay. Thus, the
achievable rate of user $2$ is the same as user $1$, resulting in the
same one-bit-per-relay-bit improvement.  Further, as shown in
\cite{Kim2008}, a hash-and-forward relay strategy achieves the
capacity for deterministic relay channels.  As the hashing operation is the same for both case, the
same relay bit can therefore benefit both receivers at the same time,
resulting in two-bit increase in sum capacity for one relay bit, as
first pointed out in \cite{Peyman_GHF}.

% which is the sum-capacity when $0 \le \alpha \le \hf$.

%The analysis of the type II and type III channels are similar to that of type I
%channel. I.e., a simplified Han-Kobayashi scheme with fixed decoding order
%achieves the g.d.o.f sum capacity; and the sum-capacity gain when using both
%Han-Kobayashi scheme and its simplified version solely comes from the rate
%increase of the common messages. Different from the type I channel, type II and
%type III channels do not outperform the baseline for all values of alpha, as
%depicted in Fig.~\ref{comparison}. When $\alpha \le \frac{2-\kappa}{3}$, both
%type II and type II channels benefits from the relay. However, when
%$\frac{2}{3} \le \alpha \le 2$, there is no sum-rate gain for the two channels.
%This reason is simple. When $\frac{2}{3} \le \alpha \le 2$, them sum rate of
%the common messages is limited at receiver $2$, similar to the scenario in
%Fig.~\ref{common_msg_intersection_4} and Fig.\ref{common_msg_intersection_5}.
%However, since receiver $2$ does not receive help from the relay in the type II
%and type III channels, the sum capacity is therefore the same as the baseline
%when $\frac{2}{3} \le \alpha \le 2$.

\begin{figure*}[t]
\centering \subfigure[At receiver $1$, $X_2$ is treated as noise.]{ \psfrag{X1}{$X_1$} \psfrag{X2}{$X_2$} \psfrag{Y1}{$Y_1$}
\psfrag{Z1}{$Z_1$} \psfrag{ZR}{$Z_R$}
 \psfrag{h11}{$h_{d}$}
 \psfrag{h21}{$h_{c}$} \psfrag{G1}{$g$} \psfrag{G2}{$g$}  \psfrag{Relay}{$\mathrm{relay}$} \psfrag{C1}{$\mathsf{C}$} \psfrag{YR}{$Y_R$} \psfrag{Z2}{$ $}
\includegraphics[width=3.1in]{./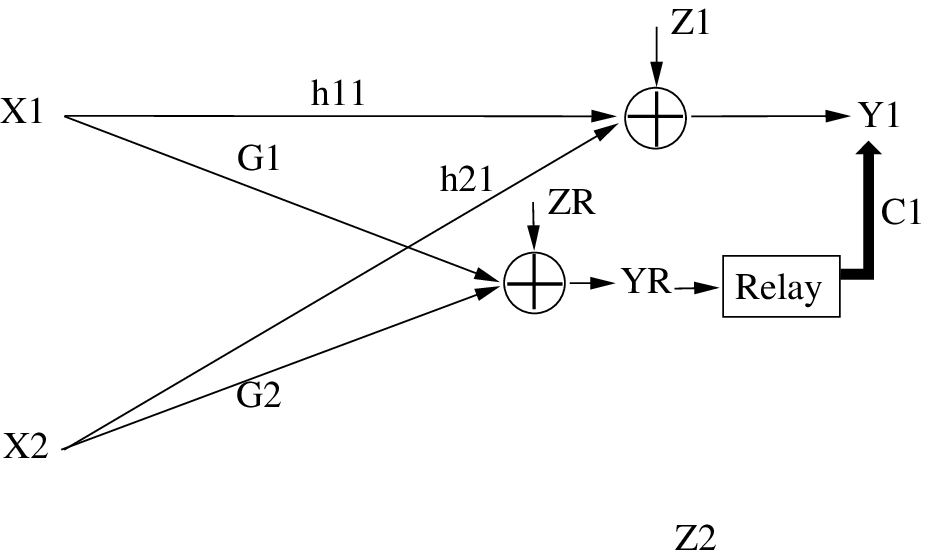}
\label{weak_determin_relay_1} }
\subfigure[At
receiver $2$, $X_1$ is treated as noise.]{ \psfrag{X1}{$X_1$} \psfrag{X2}{$X_2$} \psfrag{Y2}{$Y_2$}
\psfrag{Z2}{$Z_2$} \psfrag{ZR}{$Z_R$}
 \psfrag{h22}{$h_{d}$}
 \psfrag{h12}{$h_{c}$} \psfrag{G1}{$g$} \psfrag{G2}{$g$}  \psfrag{Relay}{$\mathrm{relay}$} \psfrag{C2}{$\mathsf{C}$} \psfrag{YR}{$Y_R$}
\psfrag{Z1}{$ $}
\includegraphics[width=3.1in]{./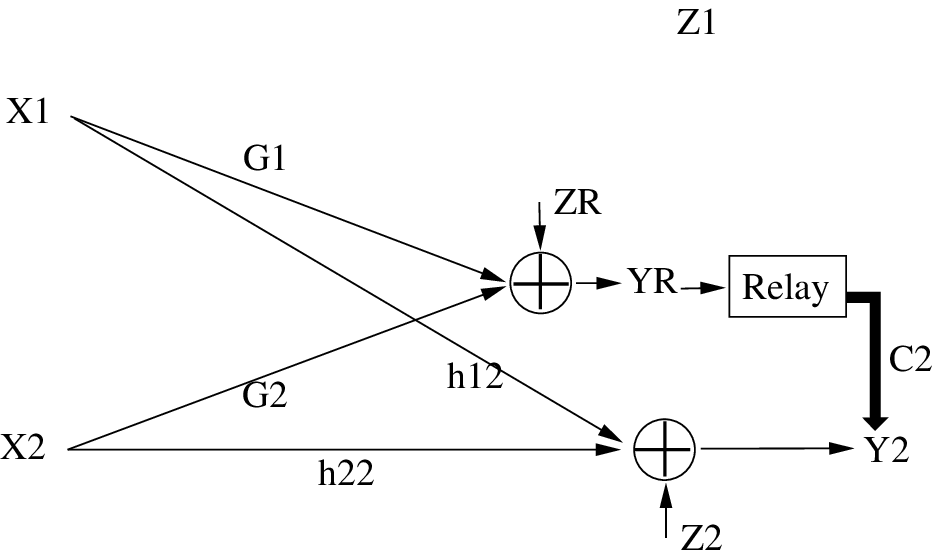}
\label{weak_determin_relay_2} } \caption{Asymptotic deterministic relay
channels in the very weak interference regime $\kappa \le \alpha \le \hf$.}
\end{figure*}

%For $0 \le \alpha < 1$
%\begin{eqnarray}
%d_{\mathrm{sum}} = \min \left\{2-\alpha, 2\max\{\alpha, 1 - \alpha\}+ \kappa, \max\{1, 2\alpha, \alpha+\beta_2 \}  \right\}.
%\end{eqnarray}
%For $\alpha \ge 1$
%\begin{eqnarray}
%d_{\mathrm{sum}} = \min \{2, \alpha \}.
%\end{eqnarray}

\subsubsection{Moderately Weak and Strong Interference Regimes}

The above interpretation, which states that the GDoF improvement in
the very weak interference regime comes solely from the private rate gain,
is not the only possible interpretation. The rate gain can also be
interpreted as improvement in common information rate --- an
interpretation that applies not only to the very weak
interference regime, but in fact to all regimes (for the symmetric rate with symmetric channels).
In the following, we illustrate this point by focusing on a two-stage
Han-Kobayashi strategy, where common messages are decoded first,
then the private messages. This is the same two-stage
Han-Kobayashi scheme used in \cite{Tse2007} for the Gaussian
interference channel without the relay.

Specifically, the relay uses the same GHF relaying strategy as in
Theorem~\ref{constantgap_theorem_twolinks}, but it is now designed to
help the common messages only. Here, both common messages $W_1^n$ and
$W_2^n$ are decoded and subtracted at both receivers
with the help of the GHF relay first (while treating private
messages as noise), the private
messages are then decoded at each receiver treating each other
as noise. The decoding of the private message at the second stage
results in
\begin{eqnarray} \label{simplified_HK_private_rates}
R_{u} &=& \hf \log \left(1 + \frac{\SNR_p}{1 + \INR_p} \right) \nonumber \\
&\rightarrow& \max\{0, 1 - \alpha\},
\end{eqnarray}
Note that the relay does not help the private rate.

In the common-message decoding stage, $W_1^n$ and $W_2^n$ are jointly
decoded at both receiver $1$ and receiver $2$
with the help of the GHF relay. As a
result, $(W_1^n, W_2^n, Y_1^n, Y_R^n)$ forms a multiple-access relay
channel at receiver $1$ with $W_1^n, W_2^n$ as the inputs, $Y_1^n$ as
the output and $Y_R^n$ as the relay. The achievable rate region of
such a multiple-access channel with a GHF relay is given by
\begin{eqnarray}
R_{w1} &\le& I(W_1; Y_1|W_2)  \nonumber \\
&& + \min \left\{(\C - \xi)^+, I(W_1; \hat{Y}_R |Y_1,
W_2) \right\} \nonumber \\
R_{w2} &\le& I(W_2; Y_1|W_1) \nonumber \\
&& + \min \left\{(\C - \xi)^+, I(W_2; \hat{Y}_R |Y_1,
W_1) \right\} \nonumber \\
R_{w1} + R_{w2} &\le& I(W_1, W_2; Y_1) \nonumber \\
&& + \min \left\{(\C - \xi)^+, I(W_1, W_2;
\hat{Y}_R |Y_1) \right\}. \nonumber
\end{eqnarray}
%Similarly, the achievable rate region of the multiple-access relay channel at
%receiver $2$ can be obtained by switching the indices $1$ and $2$ in the above
%rate region.
With the Etkin-Tse-Wang input strategy (i.e. $P_{1p} = \min \{1,
h_{12}^{-2}\}, P_{2p} = \min \{1, h_{21}^{-2}\}$) and the GHF relaying
scheme with $\q_1=\q_2 =\frac{\sqrt{\rho^2 + 16\rho + 16}-\rho}{4}$,
it can be shown that the common-message rate region for the receiver 1
in the high SNR regime in term of GDoF is given as follows.
When $0 \le \alpha \le 1$
\begin{eqnarray}
R_{w1} &\le&  \alpha \nonumber \\
R_{w2} &\le& \min \{\alpha, \kappa + \max\{2\alpha -1, 0\} \} \nonumber \\
R_{w1}+R_{w2} &\le& \alpha + \min\{\alpha, \kappa\} \nonumber.
\end{eqnarray}
When $\alpha \ge 1$
\begin{eqnarray}
R_{w1} &\le&  \min \{\alpha, 1 + \kappa\} \nonumber \\
R_{w2} &\le& \alpha  \nonumber \\
R_{w1}+R_{w2} &\le& \alpha + \kappa \nonumber.
\end{eqnarray}
Due to symmetry, the rate region for the multiple-access relay channel
at receiver $2$ can be obtained by switching the indices $1$ and $2$.

Note that in suitable interference regimes, both the individual rate
and the sum rate can potentially be increased by one bit for each
relay bit. This is again a consequence of the fact that the relay
operation has a deterministic relay
channel interpretation in the high SNR regime. For example, in the
strong interference regime where $1 \le \alpha \le 2 + \kappa$, the
sum rate of the multiple-access relay channel benefits by one bit for
each relay bit in the high SNR regime as shown in
Fig.~\ref{strong_determin_relay}. In the very strong interference
regime, the interference can be decoded, subtracted or can serve as side
information, therefore the individual rate increases by one bit for
each relay bit as shown in Fig.~\ref{verystrong_determin_relay}.

\begin{figure*}[t]
\centering \subfigure[$Y_1$ decodes both $X_1$ and $X_2$.]{ \psfrag{X1}{$X_1$} \psfrag{X2}{$X_2$} \psfrag{Y1}{$Y_1$}
\psfrag{Z1}{$Z_1$} \psfrag{ZR}{$Z_R$}
 \psfrag{h11}{$h_{d}$}
 \psfrag{h21}{$h_{c}$} \psfrag{G1}{$g$} \psfrag{G2}{$g$}  \psfrag{Relay}{$\mathrm{relay}$} \psfrag{C1}{$\mathsf{C}$} \psfrag{YR}{$Y_R$} \psfrag{Z2}{$ $}
\includegraphics[width=3.1in]{./figures/Weak_determin_Y1.eps}
\label{strong_determin_relay} }
\subfigure[
$X_2$ is decoded and serves as side information.]{ \psfrag{X1}{$X_1$} \psfrag{Y1}{$Y_1$} \psfrag{Z1}{$Z_1$} \psfrag{ZR}{$Z_R$}
 \psfrag{h11}{$h_{d}$} \psfrag{G1}{$g$} \psfrag{Relay}{$\mathrm{relay}$} \psfrag{C1}{$\mathsf{C}$} \psfrag{YR}{$Y_R$}  \psfrag{Z2}{$ $}
\includegraphics[width=3.1in]{./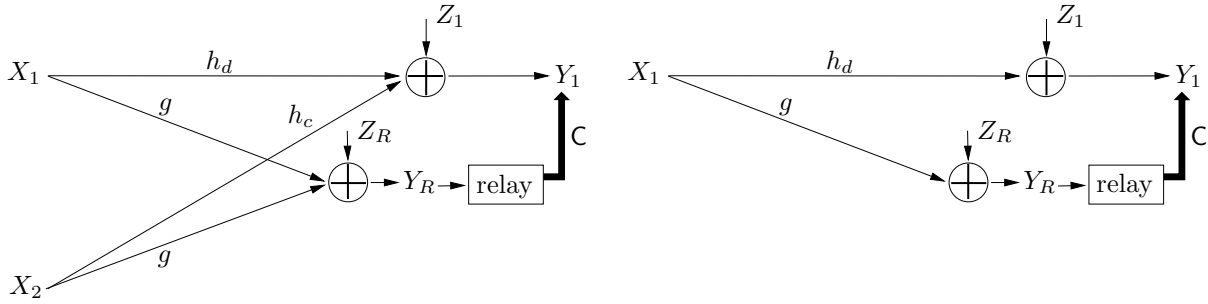}
\label{verystrong_determin_relay} } \caption{Asymptotic deterministic relay
channels in the strong and very strong interference regimes.}
\end{figure*}

Now, the achievable rates of common messages can be obtained by
intersecting the two rate regions. Taking the achievable rates of
private messages in (\ref{simplified_HK_private_rates}) into account,
it is easy to verify that this two-stage Han-Kobayashi scheme achieves
the sum capacity in (\ref{sum_capacity_1}) and (\ref{sum_capacity_2}).
As depicted in Fig.~\ref{comparison_impact_of_kappa}, the sum-capacity gain due to the
relay can be one-bit-per-bit or two-bits-per-bit. In the following, we
demonstrate in Fig.~\ref{fig:common_msg_intersection} how these gains
are obtained by pictorially showing the intersection of the two
common-message rate regions for different values of $\alpha$.

\begin{figure*}[ht]
\centering \subfigure[$0 \le \alpha \le \kappa$]{ \psfrag{Rw1}{$R_{w1}$}  \psfrag{Rw2}{$R_{w2}$}  \psfrag{alpha}{$\alpha$}  \psfrag{kappa}{$\kappa$}  \psfrag{0.5}{$0.5$} \psfrag{1}{$1$}  \psfrag{zero}{$0$}   \psfrag{DeltaRw}{\small $\Delta R_w = 2\alpha$} \psfrag{Withrelay}{\small $ \mathrm{With\; relay}$} \psfrag{Withoutrelay}{\small $\mathrm{Without \; relay}$}
\includegraphics[width=2.2in]{./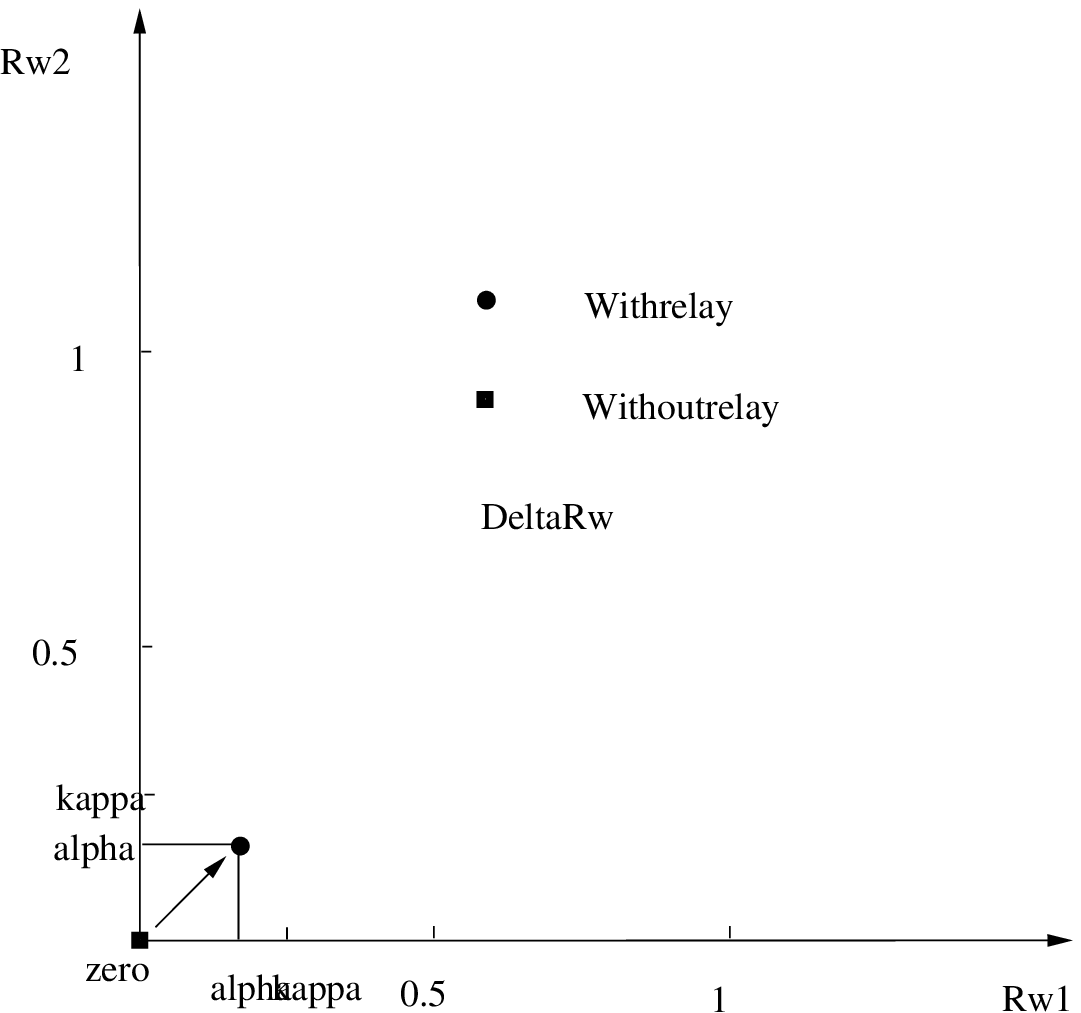}
\label{common_msg_intersection_1} } \subfigure[$\kappa \le \alpha \le \hf$]{ \psfrag{Rw1}{$R_{w1}$}  \psfrag{Rw2}{$R_{w2}$}  \psfrag{alpha}{$\alpha$}  \psfrag{kappa}{$\kappa$}  \psfrag{0.5}{$0.5$} \psfrag{1}{$1$}  \psfrag{zero}{$0$}   \psfrag{DeltaRw}{\small $\Delta R_w = 2\kappa$} \psfrag{Withrelay}{\small $ \mathrm{With\; relay}$} \psfrag{Withoutrelay}{\small $\mathrm{Without \; relay}$}
\includegraphics[width=2.2in]{./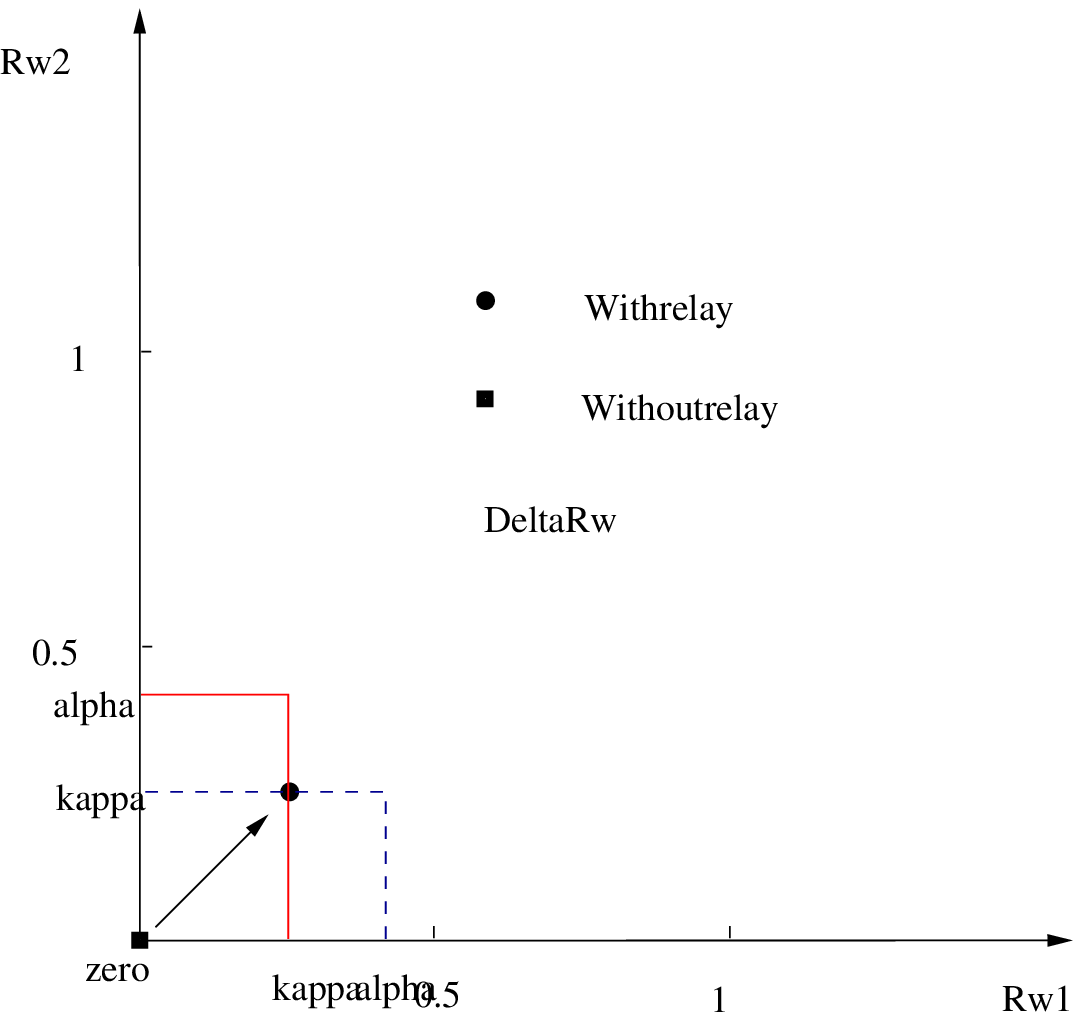}
\label{common_msg_intersection_2} } \subfigure[$\hf \le \alpha \le
\frac{2-\kappa}{3}$]{ \psfrag{Rw1}{$R_{w1}$}  \psfrag{Rw2}{$R_{w2}$}  \psfrag{alpha}{$\alpha$}  \psfrag{kappa}{$\kappa$}  \psfrag{0.5}{$0.5$} \psfrag{1}{$1$}  \psfrag{zero}{$0$}   \psfrag{DeltaRw}{\small $\Delta R_w = 2\kappa$} \psfrag{Withrelay}{\small $ \mathrm{With\; relay}$} \psfrag{Withoutrelay}{\small $\mathrm{Without \; relay}$}
\includegraphics[width=2.2in]{./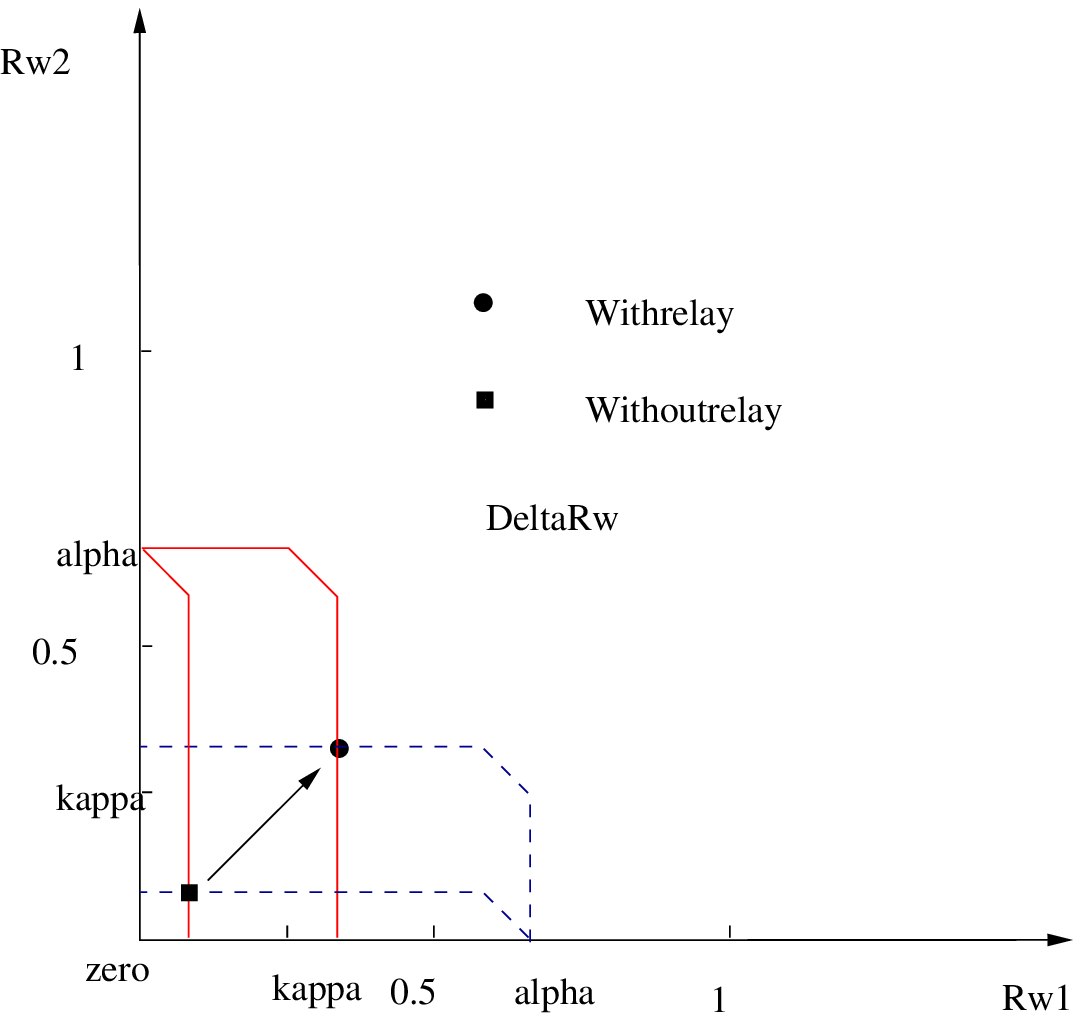}
\label{common_msg_intersection_3} } \subfigure[$\frac{2}{3} \le \alpha
\le 1$]{ \psfrag{Rw1}{$R_{w1}$}  \psfrag{Rw2}{$R_{w2}$}  \psfrag{alpha}{$\alpha$}  \psfrag{kappa}{$\kappa$}  \psfrag{0.5}{$0.5$} \psfrag{1}{$1$}  \psfrag{zero}{$0$}   \psfrag{DeltaRw}{\small $\Delta R_w = \kappa$} \psfrag{Withrelay}{\small $ \mathrm{With\; relay}$} \psfrag{Withoutrelay}{\small $\mathrm{Without \; relay}$}
\includegraphics[width=2.2in]{./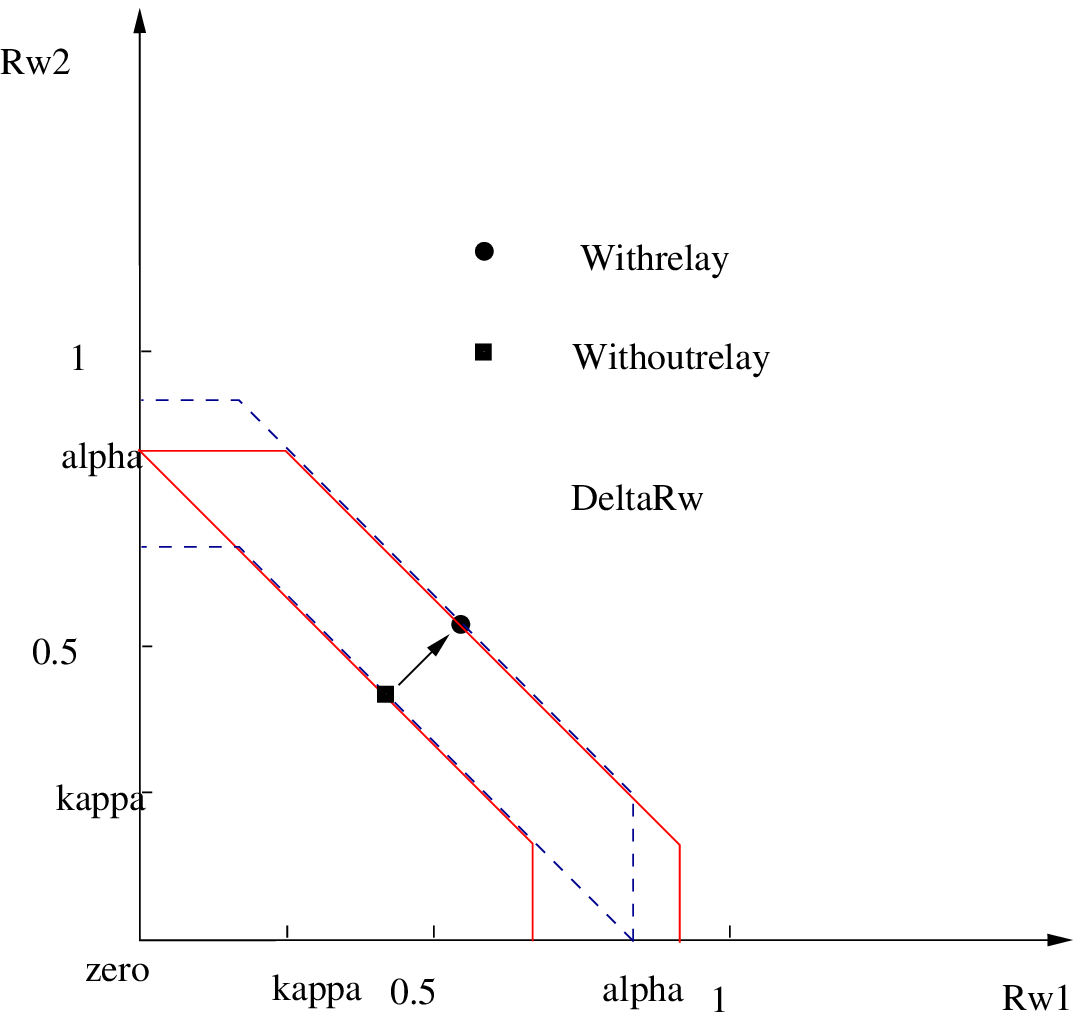}
\label{common_msg_intersection_4} } \subfigure[$1 \le \alpha \le 1 +\kappa$]{
\psfrag{Rw1}{$R_{w1}$}  \psfrag{Rw2}{$R_{w2}$}  \psfrag{alpha}{$\alpha$}  \psfrag{kappa}{$\kappa$}  \psfrag{0.5}{$0.5$} \psfrag{1}{$1$}  \psfrag{zero}{$0$}   \psfrag{DeltaRw}{\small $\Delta R_w = \kappa$} \psfrag{Withrelay}{\small $ \mathrm{With\; relay}$} \psfrag{Withoutrelay}{\small $\mathrm{Without \; relay}$}
\includegraphics[width=2.2in]{./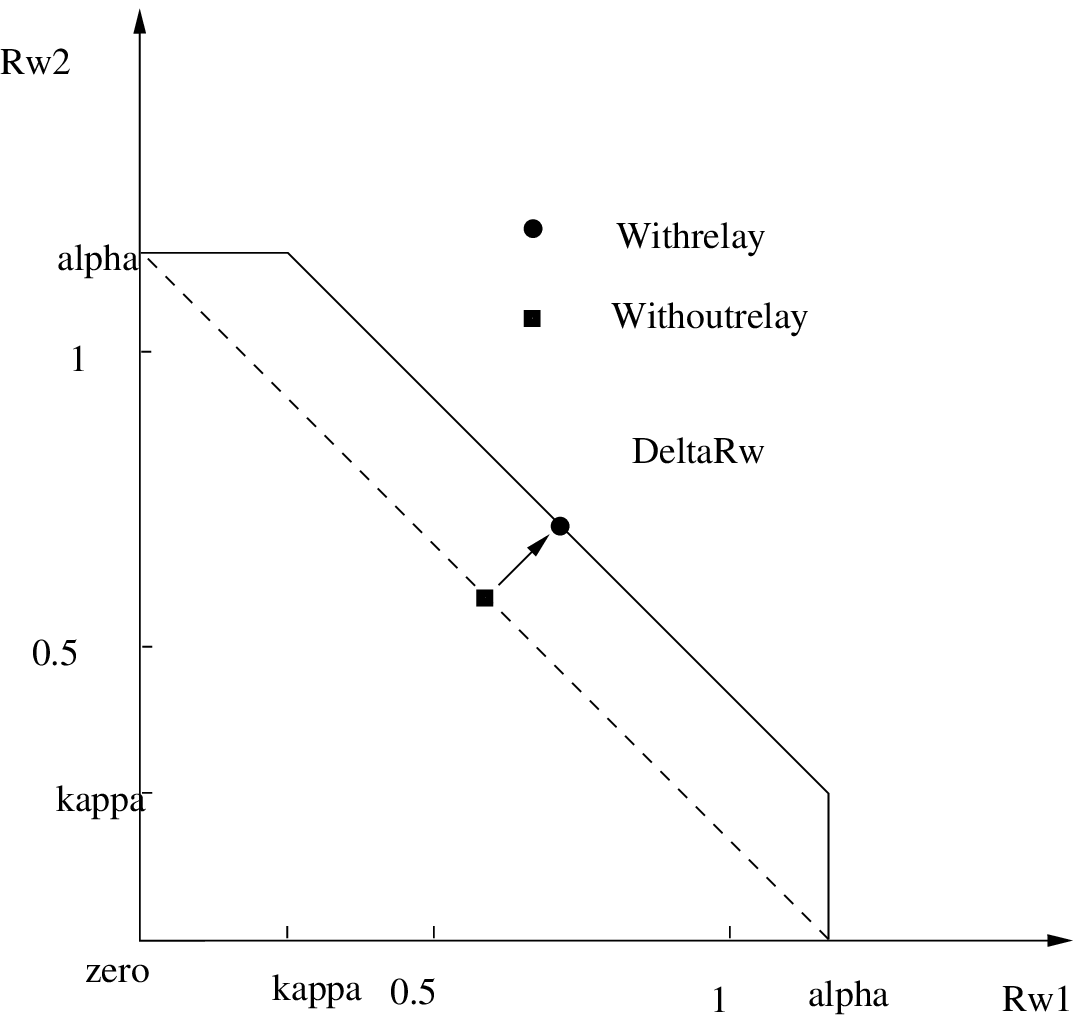}
\label{common_msg_intersection_5} } \subfigure[$ \alpha \ge 2+\kappa $]{ \psfrag{Rw1}{$R_{w1}$}  \psfrag{Rw2}{$R_{w2}$}  \psfrag{alpha}{$\alpha$}  \psfrag{kappa}{$\kappa$}  \psfrag{0.5}{$0.5$} \psfrag{1}{$1$}  \psfrag{zero}{$0$}   \psfrag{DeltaRw}{\small $\Delta R_w = 2\kappa$} \psfrag{Withrelay}{\small $ \mathrm{With\; relay}$} \psfrag{Withoutrelay}{\small $\mathrm{Without \; relay}$} \psfrag{1pluskappa}{$1+\kappa$}
\includegraphics[width=2.2in]{./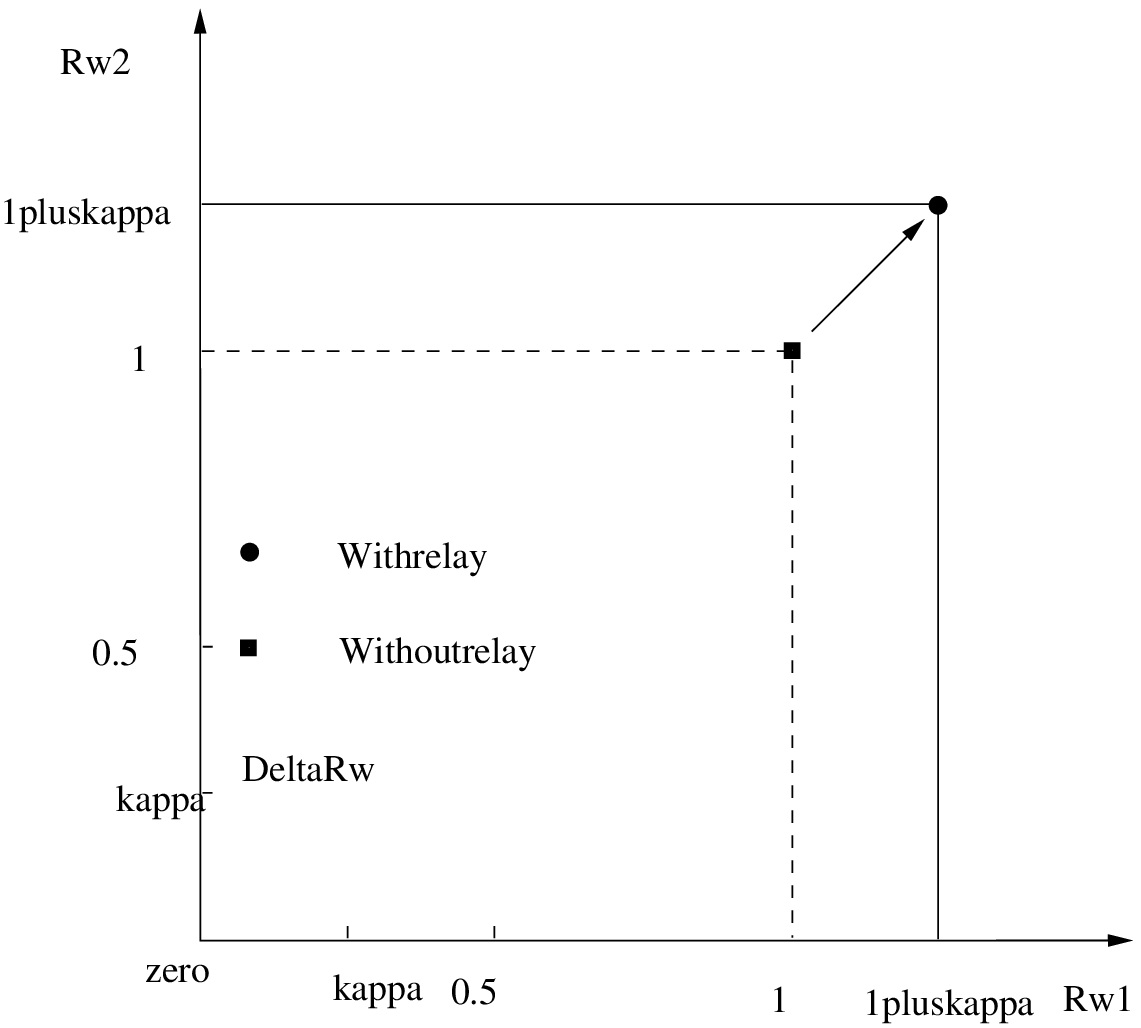}
\label{common_msg_intersection_6} } \caption{Generalized-degree-of-freedom gain due to relaying is roughly
$\kappa$ or $2\kappa$ depending on how the two common-message multiple-access regions are intersected} \label{fig:common_msg_intersection}
\end{figure*}

\begin{itemize}
\item When $\alpha \le \kappa$,
as can be seen from Fig.~\ref{common_msg_intersection_1},
the two rate regions are identical and are both given by
$\left\{(R_{w1}, R_{w2}): R_{w1} \le \alpha, R_{w2} \le \alpha \right\}$. The
intersection of the two is the same rectangle with the top-right corner located at
$(\alpha, \alpha)$. This gives a $2\alpha$-bit gain over the baseline, which is located at the origin.

\item As $\alpha$ increases to $\kappa \le \alpha \le \hf$, the baseline rate pair is still at the origin. With the help of the relay, the two common-message rate
regions become rectangles $\left\{(R_{w1}, R_{w2}): R_{w1} \le \alpha, R_{w2}
\le \kappa \right\}$ and $\left\{(R_{w1}, R_{w2}):R_{w1} \le \kappa, R_{w2} \le
\alpha \right\}$ respectively. As shown in
Fig.~\ref{common_msg_intersection_2}, the intersection of the two gives a
square with the top-right corner located at $(\kappa, \kappa)$. As a result, the sum-capacity
gain is $2\kappa$ bits.

\item As $\alpha$ increases to $\hf \le \alpha \le 1$, the common-message
rate regions at receivers $1$ and $2$ become pentagons. However, depending on
the value of $\alpha$, the sum rate improves by different amounts.
When $\alpha \le \frac{2-\kappa}{3}$, as shown in
Fig.~\ref{common_msg_intersection_3}, the intersection of the two pentagon
regions gives a square shape with the top-right corner located at $(2\alpha-1 + \kappa,
2\alpha-1+\kappa)$. Compared with $(2\alpha-1, 2\alpha-1)$
achieved without the relay, a sum-capacity gain of $2\kappa$ bits is obtained.
However, when $\alpha \ge \frac{2-\kappa}{3}$, as depicted
in Fig.~\ref{common_msg_intersection_4}, the intersection of the two rate
regions is still a pentagon with the sum-capacity limited by
$R_{w1}+R_{w2} \le 2 - \alpha + \kappa$. In this case, depending on the value
of $\alpha$, the sum-rate gain is $2 - 3\alpha + \kappa$ bits when
$\frac{2-\kappa}{3} \le \alpha \le \frac{2}{3}$, and is $\kappa$ bits when
$\frac{2}{3} \le \alpha \le 1$. (The latter case is shown in
Fig.~\ref{common_msg_intersection_4}.)

\item When $1 \le \alpha \le 2+\kappa$, the common-message rate regions
are again pentagons and the interpretation is similar to the
$\frac{2-\kappa}{3} \le \alpha \le 1$ case. Fig.~\ref{common_msg_intersection_5}
shows an example of $1 \le \alpha \le 1 + \kappa$. In this case, the two rate
regions are identical pentagons with the sum capacity limited by $R_{w1}+R_{w2} \le
\alpha+\kappa $. Compared with the baseline sum capacity, a $\kappa$-bits gain is obtained. When $1 + \kappa \le
\alpha \le 2+\kappa$, the intersection of the two common-message rate regions
again gives a sum-capacity of $\alpha +\kappa$. However, since the baseline sum capacity
becomes saturated when when $\alpha \ge 2$ (\cite{Tse2007, Sato, Carleial1978}), the sum-capacity gain over the baseline is $\kappa$ bits when $1 \le
\alpha \le 2$, and is $\alpha+\kappa-2$ bits when $2 \le \alpha \le 2 + \kappa$.

\item Finally, $\alpha \ge 2 + \kappa$ falls into the very strong interference
regime. The intersection of the two common-message rate regions is a rectangle with the top-right corner located at $(1 +\kappa, 1 + \kappa)$ as shown
in Fig.~\ref{common_msg_intersection_6}. The sum-capacity gain is thus
$2\kappa$ bits in the very strong interference regime.
\end{itemize}

%With the two-stage Han-Kobayashi approach (decoding common messages before
%private messages), for the entire range of $\alpha$, it has been shown that the
%sum-capacity gain solely comes from the rate increase of the common messages.
%The private messages has no rate improvement since the relay is designed to help the common messages only. However, in the weak
%interference regime of $0 \le \alpha \le \hf$, it turns out that there is a
%flexibility of improving the private messages yet still achieve the sum
%capacity.

\section{Gaussian Relay-Interference Channel With a Single Digital Link}

The result of the previous section shows that for the symmetric
channel, the sum-capacity improvement can be thought as coming solely
from the improvement of the common message rate, or in a very weak
interference regime as coming solely from the
improvement of the private message rates. Thus, the function of
the relay for the symmetric rate in symmetric channel is solely in
forwarding useful signals. This interpretation does not necessarily
hold for the asymmetric cases. In this section, we study a
particular asymmetric channel to illustrate the composition of the
sum-capacity gain. % from a different aspect: interference forwarding vs.
%signal relaying.
We are motivated by the fact that the relay's
observation in a relay-interference channel is a linear combination of
the intended signal and the interfering signal. Clearly, forwarding
the intended signal and the interfering signal can both be beneficial
(e.g.\ \cite{Dabora_Maric}). This section
illustrates that depending on the different channel parameters, the sum-rate gain from forwarding both intended signal and interference signal happens to be the same as that of forwarding intended signal only or forwarding interference signal only.
%sometimes one of them can have a dominating effect.

%This section applies the
%techniques studied in the previous section to a Gaussian interference
%channel with a single relay-destination link and studies the
%interaction between interference forwarding and signal relaying.

%The effects of signaling relaying and interference forwarding are most
%easily distinguished if
Specifically, we focus on a particular asymmetric model as
shown in Fig.~\ref{channel_model_single_digital_link}, where the
digital relay link exists only for receiver $1$, and not for receiver
$2$, i.e., $\C_2=0$.  This section first derives a
constant-gap-to-capacity result for this channel.  Note that this
channel is a special case of the general channel model studied in the
previous section, but the constant-gap-to-capacity result can be
established in this special case for a broader set of channels.
Unlike the weak-relay assumption $|g_1| \le \sqrt{\rho}|h_{12}|$ and
$|g_2| \le \sqrt{\rho}|h_{21}|$ made in the previous section, this
section assumes that $|g_2| \le \sqrt{\rho}|h_{21}|$ only with no
constraints on $g_1$ or $h_{12}$. Under this channel setup, it can be
shown that in the high SNR regime, the sum capacity improvement %is
%entirely due to either signal relaying or interference forwarding,
%depending on the relative channel gains between the transmitters and
%the relay, but not both.
can also be obtained as if only the intended signal is forwarded or only the interference signal is forwarded. Note that this conclusion applies to the case of a single relay-destination link only, and not necessarily to the general case with two relay-destination links.

\subsection{Capacity Region to within Constant Gap in the Weak-Relay Regime}

Since the channel model studied in
Fig.~\ref{channel_model_single_digital_link} is a special case of the
general Gaussian relay-interference channel, we first simplify the
achievable rate region in Theorem~\ref{achievable_theorem_twolinks} to
the following corollary by setting $\C_2=0$. The only difference in
the coding scheme is that instead of performing two quantizations as in
the general relay-interference channel, the relay in
Fig.~\ref{channel_model_single_digital_link} does one quantization of
the received signal $Y_R$ into $\hat{Y}_{R1}$ and sends the bin index
of $\hat{Y}_{R1}$ to receiver $1$ through the digital link $\C_1$.

\begin{corollary} \label{achievable_theorem_single_digital_link}
\begin{figure} [t]
\centering
\psfrag{X1}{$X_1$} \psfrag{X2}{$X_2$} \psfrag{Y1}{$Y_1$}
\psfrag{Y2}{$Y_2$}
\psfrag{Z1}{$Z_1$} \psfrag{Z2}{$Z_2$} \psfrag{ZR}{$Z_R$}
 \psfrag{h11}{$h_{11}$} \psfrag{h22}{$h_{22}$}
 \psfrag{h21}{$h_{21}$}  \psfrag{h12}{$h_{12}$} \psfrag{G1}{$g_1$} \psfrag{G2}{$g_2$}  \psfrag{Relay}{$\mathrm{relay}$} \psfrag{C1}{$\mathsf{C}_1$} \psfrag{YR}{$Y_R$}
\includegraphics[width=3.1in]{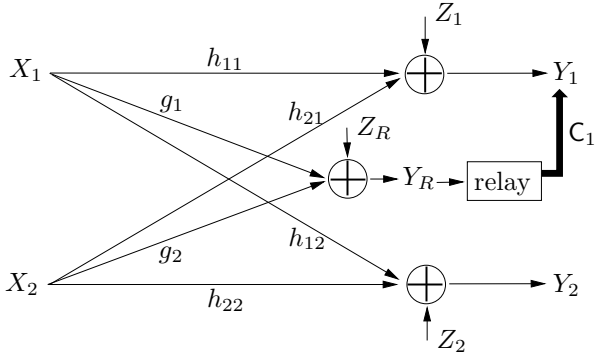}
\caption{Gaussian relay-interference channel with a single digital link}
\label{channel_model_single_digital_link}
\end{figure}

For the Gaussian relay-interference channel with a single digital link as shown in Fig.~\ref{channel_model_single_digital_link}, the following rate region is
achievable:
\begin{eqnarray}
0 \le R_1 &\le& d_1 + \min\left\{(\C_1- \xi_1)^+, \Delta {d}_1  \right\} \label{R1_single_digital_link} \\
0 \le R_2 &\le& d_2  \label{R2_single_digital_link}\\
R_1 + R_2 &\le& a_1 + g_2 + \min \left\{(\C_1- \xi_1)^+, \Delta {a}_1 \right\} \label{sumrate_1_single_digital_link}\\
R_1 + R_2 &\le& a_2 + g_1 + \min \left\{(\C_1- \xi_1)^+, \Delta {g}_1 \right\} \label{sumrate_2_single_digital_link} \\
R_1 + R_2 &\le& e_1+e_2 + \min \left\{(\C_1- \xi_1)^+, \Delta {e}_1 \right\} \label{sumrate_3_single_digital_link}\\
2R_1 + R_2 &\le& a_1 + g_1 + e_2  + \min\left\{ 2(\C_1- \xi_1)^+,  \right. \nonumber \\
&& \left. \quad (\C_1- \xi_1)^+ + \Delta {a}_1, \Delta {a}_1+  \Delta {g}_1\right\} \label{2R1R2_1_single_digital_link}\\
R_1 + 2R_2 &\le& a_2 + g_2 + e_1 + \min \left\{(\C_1- \xi_1)^+, \Delta
{e}_1 \right\},  \nonumber \\
&& \label{R12R2_1_single_digital_link}
\end{eqnarray}
where all the parameters are as defined in Theorem~\ref{achievable_theorem_twolinks}.
\end{corollary}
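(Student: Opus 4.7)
The plan is to derive Corollary~\ref{achievable_theorem_single_digital_link} as a direct specialization of Theorem~\ref{achievable_theorem_twolinks} to $\C_2 = 0$. Since the relay has no digital link to receiver~2, I would set the second quantized output $\hat{Y}_{R2}$ to a constant in the joint distribution $\mathcal{P}$, which zeroes out $\xi_2$ and all four quantities $\Delta a_2, \Delta d_2, \Delta e_2, \Delta g_2$. Substituting $\C_2 = 0$ into (\ref{achievable_R1})--(\ref{achievable_R12R2}) then collapses every term of the form $\min\{(\C_2 - \xi_2)^+, \Delta(\cdot)\}$ to zero and recovers (\ref{R1_single_digital_link}), (\ref{R2_single_digital_link}), (\ref{sumrate_1_single_digital_link})--(\ref{sumrate_3_single_digital_link}), and (\ref{R12R2_1_single_digital_link}) essentially verbatim.

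The one step that is not purely mechanical is the $2R_1 + R_2$ constraint. After the reduction, the surviving expression is $a_1 + g_1 + e_2 + \min\{(\C_1-\xi_1)^+, \Delta a_1\} + \min\{(\C_1-\xi_1)^+, \Delta g_1\}$, and I would combine the two minima using the elementary identity $\min\{x,a\} + \min\{x,b\} = \min\{2x,\ x+\min(a,b),\ a+b\}$. To match the form of (\ref{2R1R2_1_single_digital_link}) exactly, I would then verify $\Delta a_1 \le \Delta g_1$, which follows from the chain rule applied to $\Delta g_1 = I(W_2;\hat{Y}_{R1}|Y_1,Q) + I(X_1;\hat{Y}_{R1}|Y_1,W_2,Q)$ together with the Markov relation $W_1 - X_1 - Y_R - \hat{Y}_{R1}$ built into the code construction in $\mathcal{P}$, which forces $I(X_1;\hat{Y}_{R1}|Y_1,W_2,Q) \ge \Delta a_1$.

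I expect this combining step to be the main (albeit minor) obstacle: one must correctly identify which of $\Delta a_1$ and $\Delta g_1$ is smaller so that the redundant argument in the four-term minimum can be dropped, and this requires invoking the conditional independence structure of the Han--Kobayashi codebook in $\mathcal{P}$. Beyond that, no new coding ideas are required, since the Han--Kobayashi transmission plus a single GHF quantization at the relay already cover this special case, and the remainder of the argument is purely notational rewriting of the bounds in Theorem~\ref{achievable_theorem_twolinks}.
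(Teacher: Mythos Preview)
Your proposal is correct and matches the paper's own argument: the paper simply states that the corollary follows directly from Theorem~\ref{achievable_theorem_twolinks} with $\C_2=0$, noting only that (\ref{2R1R2_1_single_digital_link}) uses the fact $\Delta a_1 \le \Delta g_1$. Your identification of the $2R_1+R_2$ combination as the sole nontrivial step, and your justification of $\Delta a_1 \le \Delta g_1$ via the chain rule and the Markov structure of $\mathcal{P}$, is exactly the content the paper leaves implicit.
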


The proof follows directly from Theorem~\ref{achievable_theorem_twolinks}.
Note that in (\ref{2R1R2_1_single_digital_link}), we apply the fact that
$\Delta {a}_1 \le \Delta {g}_1$. Likewise, the capacity region outer bound in
Theorem~\ref{outerbound_theorem_twolinks} also simplifies when $\C_2=0$.
We can now prove the following constant-gap theorem for the Gaussian
relay-interference channel with a single digital link.

\begin{theo} \label{constantgap_theorem_single_digital_link}
For the Gaussian relay-interference channel with a single digital link as
depicted in Fig.~\ref{channel_model_single_digital_link}, with the same
signaling strategy as in Theorem~\ref{constantgap_theorem_twolinks}, i.e. a
combination of the Han-Kobayashi scheme with Etkin-Tse-Wang power splitting
strategy and the GHF relaying scheme with the fixed quantization level $\q_1 = \frac{\sqrt{\rho^2 + 16\rho +
16}-\rho}{4}$, in the
weak-relay regime of $|g_2| \le \sqrt{\rho}|h_{21}|$, the achievable rate region
in Corollary~\ref{achievable_theorem_single_digital_link} is within $\delta$ bits of the capacity region outer bound in Theorem~\ref{outerbound_theorem_twolinks} (with $\C_2$ set to zero), where $\delta$ is defined in Theorem~\ref{constantgap_theorem_twolinks}.
\end{theo}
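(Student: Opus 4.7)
The plan is to mirror the structure of the proof of Theorem~\ref{constantgap_theorem_twolinks}, comparing each of the six achievable rate constraints (\ref{R1_single_digital_link})--(\ref{R12R2_1_single_digital_link}) in Corollary~\ref{achievable_theorem_single_digital_link} against a correspondingly tight outer bound obtained from Theorem~\ref{outerbound_theorem_twolinks} after substituting $\C_2=0$. Since only a single quantization $\hat{Y}_{R1}$ is used at the relay with level $\q_1$, the per-term gap analysis involves only the functions $\alpha(\q_1)$ and $\beta(\q_1)$ defined in (\ref{alpha_beta}), and does not require controlling any quantity tied to $\q_2$. Consequently, only the weak-relay condition $|g_2|\le\sqrt{\rho}|h_{21}|$ is invoked: this is precisely what is needed to keep the effective noise level at receiver 1 of the form $1+g_2^2 P_{2p}\le 1+\rho$, so that the GHF quantization level $\q_1^*=(\sqrt{\rho^2+16\rho+16}-\rho)/4$ is still near-optimal in the sense that $\alpha(\q_1^*)=\beta(\q_1^*)=\delta$.

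First, I would treat the individual rates. The bound (\ref{R1_single_digital_link}) is handled exactly as in case (i) of the proof of Theorem~\ref{constantgap_theorem_twolinks}, giving gap $\max\{\alpha(\q_1),\beta(\q_1)\}=\delta$ versus (\ref{R1_bound_twolinks}). For (\ref{R2_single_digital_link}), the relay plays no role, and the matching outer bound reduces to $R_2\le\frac{1}{2}\log(1+\SNR_2)$ (from (\ref{R2_bound_twolinks}) with $\C_2=0$); the standard Etkin-Tse-Wang private-signal calculation closes this within a small constant. Next, for the three sum-rate inequalities (\ref{sumrate_1_single_digital_link})--(\ref{sumrate_3_single_digital_link}), I would match each to one of (\ref{sumrate_bound_4_twolinks}), (\ref{sumrate_bound_5_twolinks}), (\ref{sumrate_bound_6_twolinks}), which are the sum-rate outer bounds that carry only a single $+\C_1$ term and therefore remain effective when $\C_2=0$. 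Because these three outer bounds already contain the $\SNR_{r1}$, $\SNR_{r2}$, and $\phi_i^2$ factors that encode the effect of the unbounded $g_1$, the gap evaluation only involves $\alpha(\q_1),\beta(\q_1)$, not $g_1/h_{12}$. For (\ref{2R1R2_1_single_digital_link}) I would use (\ref{2R1R2_bound_5_twolinks}) (which uses $2\C_1$) or (\ref{2R1R2_bound_3_twolinks}) (which uses $\C_1$) with $\C_2$ dropped, and for (\ref{R12R2_1_single_digital_link}) the $R_1+2R_2$ bound obtained by index-swapping in (\ref{2R1R2_bound_3_twolinks}), again with $\C_2=0$.

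The heart of the argument is showing that within each comparison, the differences that a priori look like they depend on $g_1$ or $h_{12}$ cancel or are absorbed by matching $\SNR_{r1}$ or $\phi_1^2\SNR_{r2}$ terms already present in the outer bound. Concretely, I expect to repeatedly use the following two reductions when bounding the achievable mutual information terms $\Delta a_1,\Delta d_1,\Delta e_1,\Delta g_1,\xi_1$ under the Gaussian choice: $\hat{Y}_{R1}=Y_R+\eta_1$ with $\eta_1\sim\mathcal{N}(0,\q_1)$, the effective noise at receiver 1 after absorbing $\hat{Y}_{R1}$ is within a factor of $(1+\rho+\q_1)$ of the SIMO noise level, and this factor contributes at most $\alpha(\q_1)=\frac{1}{2}\log(2\q_1+2+\rho)$ per comparison. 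Meanwhile, $(\C_1-\xi_1)^+$ differs from $\C_1$ by at most $\beta(\q_1)=\frac{1}{2}+\frac{1}{2}\log(1+(1+\rho)/\q_1)$. Balancing these through the choice $\q_1^*$ yields gap $\delta$ per constraint.

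The main obstacle will be verifying, term by term, that each of the six outer-bound inequalities retained after setting $\C_2=0$ is actually loose by only $\delta$ bits relative to the corresponding simplified achievable bound, even though the asymmetry $|g_1|$ versus $|h_{12}|$ is now unconstrained. The key sanity check is that every place where the two-link proof invoked $|g_1|\le\sqrt{\rho}|h_{12}|$ was in handling quantization at the relay for \emph{receiver~2}; since that quantization has been removed, the argument must go through verbatim for the terms tied to receiver~1, and the terms tied to receiver~2 reduce to standard Han-Kobayashi gap bounds matched against outer bounds that no longer credit any $\C_2$ contribution. Collecting the per-constraint gaps and taking their maximum, as in Theorem~\ref{constantgap_theorem_twolinks}, yields $\delta$ bits overall.
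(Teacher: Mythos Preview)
Your high-level plan is sound and you correctly identify the crucial point that only $|g_2|\le\sqrt{\rho}|h_{21}|$ is needed, because every appearance of the other weak-relay inequality in Lemma~\ref{lemma_useful_ineq} is tied to the receiver-2 quantization $\hat Y_{R2}$, which is absent here.  However, your concrete matching of achievable constraints to outer bounds is wrong, and the error is not cosmetic.

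Each of (\ref{sumrate_1_single_digital_link})--(\ref{sumrate_3_single_digital_link}) is a $\min$ of two branches, one carrying $(\C_1-\xi_1)^+$ and one carrying a $\Delta$-term with no $\C_1$; each branch must be paired with its own outer bound of matching $\C_1$-structure.  You propose using (\ref{sumrate_bound_4_twolinks})--(\ref{sumrate_bound_6_twolinks}) ``because they carry a single $+\C_1$.''  But these bounds also carry $\SNR_{r1}$ in places the achievable $(\C_1-\xi_1)^+$ branch does not: for instance in (\ref{sumrate_bound_4_twolinks}) the term $\SNR_2(1+\phi_2^2\SNR_{r1})+\SNR_{r1}$ must be compared against $g_2\approx\hf\log(1+\SNR_2+\INR_1)$, and since $g_1$ is now unconstrained the difference grows like $\hf\log\SNR_{r1}$.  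Your stated rationale (``these bounds already contain the $\SNR_{r1}$ factors that encode the unbounded $g_1$'') is backwards: it is the \emph{outer} bound that is larger, so extra $\SNR_{r1}$ on that side widens the gap, not closes it.  The same issue recurs for your choices (\ref{2R1R2_bound_5_twolinks}), (\ref{2R1R2_bound_3_twolinks}) in the $2R_1+R_2$ comparison, and you do not provide any bound for the third branch $\Delta a_1+\Delta g_1$ of (\ref{2R1R2_1_single_digital_link}), which contains no $\C_1$ at all.

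The paper's fix is to pair the $(\C_1-\xi_1)^+$ branches with the \emph{plain} interference-channel bounds (\ref{sumrate_bound_1_twolinks})--(\ref{sumrate_bound_3_twolinks}) (which after $\C_2=0$ carry exactly $+\C_1$ and no $\SNR_{r}$ terms), and to pair the $\Delta$-branches with (\ref{sumrate_bound_7_twolinks})--(\ref{sumrate_bound_9_twolinks}) (which after $\C_2=0$ carry no $\C$ term, and whose $\SNR_{r1}$ content matches that produced by $a_1+\Delta a_1$, $g_1+\Delta g_1$, $e_1+\Delta e_1$ in Lemma~\ref{lemma_useful_ineq}).  Likewise (\ref{2R1R2_1_single_digital_link}) is matched branch-by-branch to (\ref{2R1R2_bound_1_twolinks}), (\ref{2R1R2_bound_6_twolinks}), (\ref{2R1R2_bound_4_twolinks}), and (\ref{R12R2_1_single_digital_link}) to the index-swapped versions of (\ref{2R1R2_bound_1_twolinks}) and (\ref{2R1R2_bound_5_twolinks}).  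Once you use these pairings, the per-branch gaps are exactly $\hf+\alpha(\q_1)$ or $\hf+\beta(\q_1)$ (and their sums), and optimizing $\q_1$ yields $\delta$ as claimed.
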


\begin{IEEEproof}
Although the signalling scheme and the constant gap result resemble
those of Theorem~\ref{constantgap_theorem_twolinks},
Theorem~\ref{constantgap_theorem_single_digital_link} is not simply
obtained by setting $\C_2 =0$ in
Theorem~\ref{constantgap_theorem_twolinks}, since the weak-relay
condition has been relaxed. %. The relay regime in
%Theorem~\ref{constantgap_theorem_single_digital_link} is remarkably
%larger then that in the broadcasting relay case.
In the following, we prove the constant-gap result by directly
comparing each achievable rate expression with its corresponding upper bound.

Applying the inequalities of Lemma~\ref{lemma_useful_ineq} and following along the same lines of the proof of Theorem~\ref{constantgap_theorem_twolinks} in
Appendix~\ref{proof_constant_theorem_twolinks}, it is easy to show that each of the achievable rates in (\ref{R1_single_digital_link})-(\ref{R12R2_1_single_digital_link}) achieves to
within a constant gap of its corresponding upper bound in Theorem~\ref{outerbound_theorem_twolinks} (with $\C_2$ set to zero) in the weak-relay regime. The
constant gaps are shown as follows:
%\begin{itemize}

(i) Individual rate (\ref{R1_single_digital_link}) is within
\begin{equation}  \label{begin_gap_single_digital_link}
\delta_{R_1} = \max\left\{{\alpha}(\q_1), {\beta}(\q_1)  \right\}
\end{equation}
bits of (\ref{R1_bound_twolinks}).

(ii) Individual rate (\ref{R2_single_digital_link}) is within
\begin{equation}
\delta_{R_2} = \hf
\end{equation}
bits of (\ref{R2_bound_twolinks}).

(iii) Sum rates (\ref{sumrate_1_single_digital_link}),
(\ref{sumrate_2_single_digital_link}) and (\ref{sumrate_3_single_digital_link})
are within
\begin{equation}
\delta_{R_1 + R_2} = \hf + \max \left\{{\alpha}(\q_1), {\beta}(\q_1) \right\}
\end{equation}
bits of their upper bounds (\ref{sumrate_bound_1_twolinks}),
(\ref{sumrate_bound_8_twolinks}), (\ref{sumrate_bound_2_twolinks}),
(\ref{sumrate_bound_7_twolinks}), (\ref{sumrate_bound_3_twolinks}), and
(\ref{sumrate_bound_9_twolinks}). Specifically,

\begin{itemize}
    \item The first term of (\ref{sumrate_1_single_digital_link}) is within
    $\hf + {\beta}(\q_1)$ bits of (\ref{sumrate_bound_1_twolinks}). The second term is within
    $\hf + {\alpha}(\q_1)$ bits of (\ref{sumrate_bound_8_twolinks}).

    \item The first term of (\ref{sumrate_2_single_digital_link}) is within
    $\hf + {\beta}(\q_1)$ bits of (\ref{sumrate_bound_2_twolinks}). The second term is within
    $\hf + {\alpha}(\q_1)$ bits of (\ref{sumrate_bound_7_twolinks}).

    \item The first term of (\ref{sumrate_3_single_digital_link}) is within
    $\hf + {\beta}(\q_1)$ bits of (\ref{sumrate_bound_3_twolinks}). The second term is within
    $\hf + {\alpha}(\q_1)$ bits of (\ref{sumrate_bound_9_twolinks}).
\end{itemize}

Therefore, the achievable sum rate in (\ref{sumrate_1_single_digital_link})-(\ref{sumrate_3_single_digital_link}) is  within a constant gap of the sum-rate upper bound specified by (\ref{sumrate_bound_1_twolinks})-(\ref{sumrate_bound_12_twolinks}) in the weak-relay regime.

(iv) $2R_1+ R_2$ rate (\ref{2R1R2_1_single_digital_link}) is within
\begin{equation}
\delta_{2R_1 + R_2} = \hf + \max \left\{2{\alpha}(\q_1), {\alpha}(\q_1) +
{\beta}(\q_1), 2{\beta}(\q_1) \right\}
\end{equation}
bits of the upper bounds (\ref{2R1R2_bound_1_twolinks}),
(\ref{2R1R2_bound_6_twolinks}), and (\ref{2R1R2_bound_4_twolinks}).
Specifically, the first term of (\ref{2R1R2_1_single_digital_link}) is within
$\hf + 2{\beta}(\q_1)$ bits of (\ref{2R1R2_bound_1_twolinks}). The second term is
within $\hf + {\alpha}(\q_1) + {\beta}(\q_1)$ bits of
(\ref{2R1R2_bound_6_twolinks}). The third term is within $\hf + 2{\alpha}(\q_1)$
bits of (\ref{2R1R2_bound_4_twolinks}).

(v) $R_1 + 2R_2$ rate (\ref{R12R2_1_single_digital_link}) is within
\begin{equation} \label{end_gap_single_digital_link}
\delta_{R_1 + 2R_2} = 1 + \max \left\{{\alpha}(\q_1), {\beta}(\q_1)  \right\}
\end{equation}
bits of the upper bounds
\begin{eqnarray}
2R_1 + R_2 &\le& \hf \log \left(1 + \SNR_2 + \INR_1 \right) \nonumber \\
&& + \hf \log\left(1 + \INR_2 + \frac{\SNR_1}{1 + \INR_1} \right) \nonumber \\
&&+ \hf \log \left(1+ \frac{ \SNR_2}{1 + \INR_2}\right) +  \C_1 \\
2R_1 + R_2 &\le& \hf \log (1 + \SNR_2 + \INR_1)  \nonumber \\
&& + \hf \log\left(1 + \frac{\SNR_2}{1 + \INR_2 + \SNR_{r2}} \right) \nonumber \\
&& + \hf \log \left( 1 + \frac{\SNR_1(1 + \phi_1^2 \SNR_{r2}) + \SNR_{r1}}{1 + \INR_1} \right. \nonumber \\
&& \left. \qquad \qquad  + \INR_2 + \SNR_{r2} \right),
\end{eqnarray}
which are not shown explicitly in Theorem~\ref{outerbound_theorem_twolinks} but
can be obtained by switching the indices $1$ and $2$ of
(\ref{2R1R2_bound_1_twolinks}) and (\ref{2R1R2_bound_5_twolinks}) followed by
setting $\C_2 =0$.
%\end{itemize}

Since ${\alpha}(\cdot)$ is an increasing function and ${\beta}(\cdot)$ is a
decreasing function, to minimize the gaps above, we need
\begin{equation}
{\alpha}(\q_1^*) = {\beta}(\q_1^*),
\end{equation}
which results in the quantization level $\q_1^* =  \frac{\sqrt{\rho^2 + 16\rho +
16}-\rho}{4}$.
With this optimal quantization level applied to the gaps above, we prove
that the achievable rate region
(\ref{R1_single_digital_link})-(\ref{R12R2_1_single_digital_link}) is within
\begin{eqnarray}
 \lefteqn{\max \left\{\hf ,  \hf \log\left(2 + \frac{\rho+\sqrt{\rho^2 + 16\rho + 16}}{2} \right) \right\}}  \nonumber \\
  &=&  \hf \log\left(2 + \frac{\rho+\sqrt{\rho^2 + 16\rho + 16}}{2} \right)
\end{eqnarray}
bits of the capacity region.
\end{IEEEproof}

\subsection{Generalized Degree of Freedom}

We now derive the GDoF of the channel depicted in
Fig.~\ref{channel_model_single_digital_link}, for the case where the
underlying interference channel is symmetric, i.e.,
$\INR_1 = \INR_2 = \INR$ and $\SNR_1 = \SNR_2 = \SNR$. In the high SNR regime,
define
\begin{eqnarray}
%\alpha &:=& \lim_{\SNR \rightarrow \infty} \frac{\log \INR}{\log \SNR}, \\
\beta_i &:=& \lim_{\SNR \rightarrow \infty}\frac{\log \SNR_{ri}}{\log \SNR}, \;\; i=1, 2, \\
\kappa_1 &:=& \lim_{\SNR \rightarrow \infty}\frac{\C_1}{\hf \log \SNR},
\end{eqnarray}
Applying Theorem~\ref{constantgap_theorem_single_digital_link}, we have
the following result on the GDoF:

\begin{corollary}
In the weak-relay regime where $\beta_2 \le \alpha$, the GDoF sum
capacity of the symmetric relay-interference channel with a single
digital link is given by the following.  For $0 \le \alpha <1$
\iftwocol
\begin{eqnarray} \label{d_sum}
d_{\mathrm{sum}} = \left\{
  \begin{array}{l}
\min \{2-\alpha, 2\max(\alpha, 1-\alpha)+\kappa_1, \max(\alpha, 1-\alpha)\\
\qquad + \max (\beta_1, 1+\beta_2-\alpha, \alpha) \},  \qquad \quad \beta_1 \le 1 \\
\min \{2 - \alpha + \kappa_1, 2\max(\alpha, 1-\alpha)+\kappa_1, \\
\qquad 1 + \beta_1 - \alpha \}, \qquad  \qquad  \qquad    \quad  \qquad \beta_1 \ge 1
  \end{array}
\right.
\end{eqnarray}
\else
\begin{eqnarray} \label{d_sum}
d_{\mathrm{sum}} = \left\{
  \begin{array}{l}
\min \{2-\alpha, 2\max(\alpha, 1-\alpha)+\kappa_1, \max(\alpha, 1-\alpha)  \nonumber \\
\qquad + \max (\beta_1, 1+\beta_2-\alpha, \alpha) \},  \qquad  \qquad \beta_1 \le 1 \\
\min \{2 - \alpha + \kappa_1, 2\max(\alpha, 1-\alpha)+\kappa_1,  \nonumber \\
\qquad  1 + \beta_1 - \alpha \},  \quad  \qquad \qquad  \qquad \qquad \quad  \beta_1 \ge 1
  \end{array}
\right.
\end{eqnarray}
and for $\alpha \ge 1$
\begin{equation}
d_{\mathrm{sum}} = \min \{\alpha, 2+\kappa \}.
\end{equation}
\end{corollary}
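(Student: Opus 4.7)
The plan is to establish the GDoF formula by specializing the constant-gap result of Theorem~\ref{constantgap_theorem_single_digital_link} to the symmetric setting and taking the high-SNR limit. Because that theorem guarantees the achievable region and the outer bound of Theorem~\ref{outerbound_theorem_twolinks} (with $\C_2=0$) agree to within a constant number of bits independent of $\SNR$, the sum-capacity GDoF can be computed directly from the outer bounds. The strategy is to substitute $\SNR_i=\SNR$, $\INR=\SNR^\alpha$, $\SNR_{ri}=\SNR^{\beta_i}$, and $\C_1=\frac{\kappa_1}{2}\log\SNR$ into each outer bound, reduce it to an affine GDoF expression using $\hf\log(1+a\SNR^x)\to \hf x^{+}\log\SNR$, and then take the pointwise minimum of the resulting linear forms.

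First I would process each of the bounds (\ref{R1_bound_twolinks})--(\ref{2R1R2_bound_6_twolinks}) and the symmetric $R_1+2R_2$ bounds after specializing $\C_2=0$. Under the generic-position assumption on the channel gains, the quantities $\phi_1^2$ and $\phi_2^2$ remain $\Theta(1)$ and contribute zero GDoF. Each bound then collapses to a linear function of $\alpha,\beta_1,\beta_2,\kappa_1$. The sum of individual bounds (\ref{R1_bound_twolinks})+(\ref{R2_bound_twolinks}) produces a SIMO-type expression of the form $\max(\alpha,1-\alpha)+\max(\beta_1,\alpha)$ when the inner $\min$ in (\ref{R1_bound_twolinks}) selects the SIMO term, and $1+\alpha+\kappa_1$ when it selects $\C_1$; the Etkin--Tse--Wang bound (\ref{sumrate_bound_1_twolinks}) with $\C_2=0$ gives $2\max(\alpha,1-\alpha)+\kappa_1$; and the bounds (\ref{sumrate_bound_4_twolinks})--(\ref{sumrate_bound_11_twolinks}) involving the relay observation produce contributions such as $2-\alpha$, $2-\alpha+\kappa_1$, and $\max(\alpha,1-\alpha)+\max(\beta_1,1+\beta_2-\alpha,\alpha)$. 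The weak-relay hypothesis $\beta_2\le\alpha$ is essential: it ensures that every $\SNR_{r2}$-dominated term is absorbed into an $\INR$ or $\SNR$ term and thus never drives any of the bounds asymptotically.

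Next I would match this pool of candidates to the three branches of (\ref{d_sum}). For $0\le\alpha<1$ and $\beta_1\le 1$, the inner $\min$ in (\ref{R1_bound_twolinks}) selects the SIMO term, yielding the three listed candidates $2-\alpha$, $2\max(\alpha,1-\alpha)+\kappa_1$, and $\max(\alpha,1-\alpha)+\max(\beta_1,1+\beta_2-\alpha,\alpha)$. For $0\le\alpha<1$ and $\beta_1\ge 1$, the SIMO term in (\ref{R1_bound_twolinks}) dominates, so on the sum level the individual-plus-relay bound becomes $1+\beta_1-\alpha$, and the no-relay bound $2-\alpha$ is replaced by $2-\alpha+\kappa_1$ via (\ref{sumrate_bound_5_twolinks}). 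For $\alpha\ge 1$, the strong-interference baseline bounds saturate the sum at $\alpha$, while (\ref{R1_bound_twolinks})+(\ref{R2_bound_twolinks}) with the relay contribution caps it at $2+\kappa_1$, yielding $\min\{\alpha,2+\kappa_1\}$. Matching the achievability direction is automatic from the constant-gap theorem.

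The main obstacle will be the bookkeeping: Theorem~\ref{outerbound_theorem_twolinks} contains more than two dozen inequalities once the $2R_1+R_2$ and $R_1+2R_2$ bounds are symmetrized, and each must be evaluated in the high-SNR limit and then compared to isolate the binding bounds in every subregime of $(\alpha,\beta_1,\beta_2,\kappa_1)$. Many candidates become redundant in the symmetric setting but must still be checked, and setting $\C_2=0$ breaks the $(R_1,R_2)$ symmetry, so the $2R_1+R_2$ and $R_1+2R_2$ orientations require independent treatment. A careful case split on the sign of $1-2\alpha$, on whether $\beta_1\le 1$ or $\beta_1\ge 1$, and on the location of $\kappa_1$ relative to $\beta_1-\alpha$ should suffice to verify that the minimum collapses to the compact characterization (\ref{d_sum}).
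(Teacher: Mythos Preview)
Your approach is correct and is exactly what the paper does: the paper simply states that the corollary follows by ``Applying Theorem~\ref{constantgap_theorem_single_digital_link}'', i.e., by reading off the GDoF from the outer bounds of Theorem~\ref{outerbound_theorem_twolinks} (with $\C_2=0$) in the symmetric high-SNR limit, which is precisely the reduction you outline. Your plan is actually more detailed than the paper's one-line justification; just be careful with a couple of the intermediate expressions you quote (for instance, the sum of the individual-rate bounds (\ref{R1_bound_twolinks})+(\ref{R2_bound_twolinks}) in GDoF is $2+\min\{\kappa_1,(\beta_1-1)^+\}$, not the $\max(\alpha,1-\alpha)+\max(\beta_1,\alpha)$ form you wrote), but these are bookkeeping slips that do not affect the validity of the strategy.
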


\begin{table*} [t]
    \center
    \caption{Sum-capacity GDoF gain due to the relay for the
	symmetric Gaussian relay-interference channel with a single
	digital relay link for $\alpha=\beta_1=\beta_2$}
	    \label{gdof_single_relay_typeII}
    \begin{tabular}{|c|c|c|c|c|c|c|}
    \hline
     Range of $\alpha$ & $\alpha \le \kappa$ &$\kappa \le \alpha \le \frac{2-\kappa}{3}$& $\frac{2-\kappa}{3} \le \alpha \le \frac{2}{3}$& $\frac{2}{3} \le \alpha \le 2$ & $2 \le \alpha \le 2+\kappa$ &$\alpha \ge 2 +\kappa$ \\ \hline
    Gain   & $\alpha$   &$\kappa$ &$2-3\alpha$ & $0$   &$\alpha -2$   & $\kappa$    \\ \hline
%    Type II  & $\alpha$   &$\kappa$ &$2-3\alpha$ & $0$   &$\alpha -2$   & $\kappa$    \\ \hline
%    Type III & $0$        &$\kappa$ &$2-3\alpha$ & $0$   &$0$   & $0$         \\ \hline
    \end{tabular}
\end{table*}

\begin{figure} [t]
\centering
\includegraphics[width=3.4in]{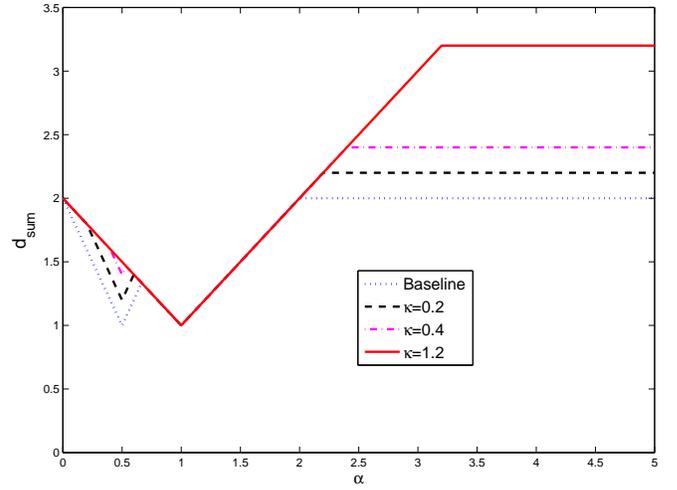}
\caption{Impact of the relay-destination link on sum capacity}
\label{sum_capacity_typeII}
\end{figure}

Table \ref{gdof_single_relay_typeII} and Fig.~\ref{sum_capacity_typeII} illustrate
the GDoF gain due to the relay where the direct links, the interference
links and the links to the relay are symmetric for both users, and where
$\alpha=\beta_1=\beta_2$. The main feature here is that there is no gain in
sum capacity for $\frac{2}{3} \le \alpha \le 2$. In other regimes of
$\alpha$, the sum-capacity gain is roughly one bit per relay bit.

\subsection{Signal Relaying vs. Interference Forwarding}

In the relay-interference channel, the relay observes a corrupted
version of the weighted sum of two source signals $X_1$ and $X_2$, and
forwards a description to the receiver. Intuitively, the observations
about both source signals are helpful. For the receiver $1$, the
observation about $X_1$ helps receiver $1$ reinforce the signal
intended for it; the observation about $X_2$ helps receiver $1$
mitigate the interference. The former can be thought of as signal
relaying, the latter interference forwarding.

In this section, we show that the sum-capacity gain in a Gaussian interference channel due to a single relay link is equivalent to that achievable with signal relaying alone or with interference forwarding alone, depending on the channel parameters. Toward this end, we first set the source-relay link from $X_2$ to zero,
i.e., $g_2 =0$, and compute the GDoF of the sum capacity. In this
case, the sum-capacity gain must be solely due to forwarding intended signal $X_1$.
Similarly, we can also set $g_1 =0$, and compute the GDoF of the
sum-capacity gain due solely to forwarding interference signal $X_2$.  By comparing
these rates we show that interestingly when the relay link of user $1$ is
under certain threshold, i.e., $\beta_1 \le 1-\alpha +\beta_2$, the
sum-capacity gain is equivalent to that achievable by interference forwarding. When
$\beta_1 \ge 1-\alpha +\beta_2$, the sum-capacity gain is equivalent to that achievable by signal relaying.

More specifically, with $g_2=0$, the sum-capacity
can be computed as
\iftwocol
\begin{eqnarray} \label{d_SR}
d_{SR} = \left\{
  \begin{array}{l}
\min \{2-\alpha, 2\max(\alpha, 1-\alpha)+\kappa_1,  \\
\qquad \max(\alpha, 1-\alpha) +\max(\beta_1,1-\alpha, \alpha) \},  \beta_1 \le 1 \\
\min \{2 - \alpha + \kappa_1, 2\max(\alpha, 1-\alpha)+\kappa_1,  \\
\qquad 1 + \beta_1 - \alpha \}, \qquad \qquad \qquad \qquad \quad \;   \beta_1
\ge 1
  \end{array}
\right.
\end{eqnarray}
\else
\begin{eqnarray} \label{d_SR}
d_{SR} = \left\{
  \begin{array}{l}
\min \{2-\alpha, 2\max(\alpha, 1-\alpha)+\kappa_1, \max(\alpha, 1-\alpha) \nonumber \\
\qquad + \max(\beta_1,1-\alpha, \alpha) \}, \qquad  \qquad \qquad  \;\; \beta_1 \le 1 \\
\min \{2 - \alpha + \kappa_1, 2\max(\alpha, 1-\alpha)+\kappa_1, \nonumber \\
\qquad \;1 + \beta_1 -
\alpha \}. \qquad \qquad \qquad \qquad \qquad   \beta_1 \ge 1
  \end{array}
\right.
\end{eqnarray}
Similarly, let $g_1=0$. The sum-capacity GDoF obtained by forwarding interference
signal is
\begin{eqnarray} \label{d_IF}
d_{IF} &=& \min \{2-\alpha, 2\max (\alpha, 1-\alpha)+\kappa_1,
  \max(\alpha, 1-\alpha)  \nonumber \\
  && \qquad \;+ \max(1+ \beta_2 -\alpha, \alpha) \}.
\end{eqnarray}
Comparing (\ref{d_sum}), (\ref{d_SR}), and (\ref{d_IF}), it is easy to
verify that
\begin{eqnarray}
\label{d_compare}
d_{sum} = \left\{
  \begin{array}{ll}
    d_{IF} & {\rm \ when\ \ } \beta_1 \le 1 + \beta_2 - \alpha \\
    d_{SR} & {\rm \ when\ \ } \beta_1 \ge 1 + \beta_2 - \alpha
  \end{array}
\right..
\end{eqnarray}
Therefore, we observe the following threshold effects. When the relay link from user $1$ is weak, the sum-capacity gain is
equivalent to a channel with a single source-relay link from $X_2$. As the source-relay link from
$X_1$ grows stronger and crosses a threshold $\beta_1 \ge 1 + \beta_2 -
\alpha \triangleq \beta_1^*$,  the
sum-capacity gain becomes equivalent to that of a single source-relay link from
$X_1$. Note that this is a GDoF phenomenon in the high SNR regime. In the
general SNR regime, the sum-capacity gain contains contributions from both
signal relaying and interference forwarding.
%But at high SNR, the sum-rate gain due to the relay is either from
%signal relaying only or from interference forwarding only, depending
%on the channel gain of the relay links, and never both.

\begin{figure} [t]
\centering \psfrag{dsum}{$d_{\mathrm{sum}}$} \psfrag{1.5}{$1.5$} \psfrag{one}{$1$} \psfrag{zero}{$0$}
\psfrag{Interference}{$\mathrm{interference}$} \psfrag{Forwarding}{$\mathrm{forwarding}$} \psfrag{Signal}{$\mathrm{signal}$}  \psfrag{relaying}{$\mathrm{relaying}$} \psfrag{0.7}{$0.7$}\psfrag{beta1}{$\beta_1$}
\psfrag{capacity}{$\mathrm{capacity}$} \psfrag{baseline}{$\mathrm{baseline}$} \psfrag{R1}{$\mathcal{R}_2$}\psfrag{R2}{$\mathcal{R}_2$}
\includegraphics[width=3.4in]{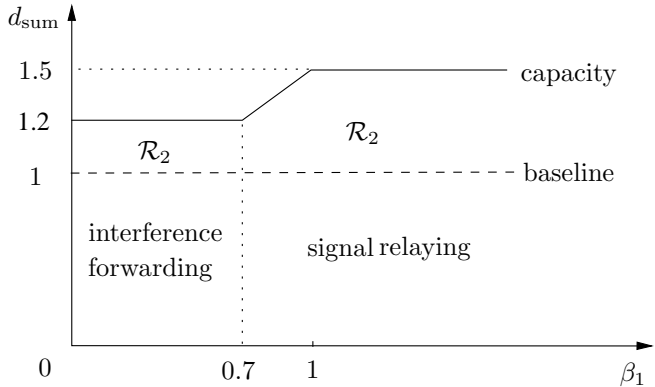}
\caption{Signal relaying vs. interference forwarding} \label{gdof}
\end{figure}

To visualize the interaction of signal relaying and interference
forwarding, a numerical example is provided in Fig.~\ref{gdof}. The
channel parameters are set to $\alpha=0.5$, $\beta_2 =0.2$, and
$\kappa_1=0.5$. The GDoF of the sum capacity
is plotted as a function of $\beta_1$. The sum
capacity of the interference channel without the relay serves as the
baseline: \begin{equation}
d_{BL} = \min\left \{2-\alpha, 2\max (\alpha, 1-\alpha) \right \}.
\label{d_BL}
\end{equation}
%The region the sum capacity and the baseline as shown in
Fig.~\ref{gdof} shows the sum-capacity gain due to the relay.  When
$\beta_1 \le \beta_1^*=0.7$, the gain (labeled as $\mathcal{R}_1$) is
equivalent to that by forwarding interference signal only.  When $\beta \ge 0.7$,
the gain (labeled as $\mathcal{R}_2$) is equivalent to that by forwarding intended signal
only.

\section{Conclusion}
This paper investigates GHF as an incremental relay strategy for a
Gaussian interference channel augmented with an out-of-band
broadcasting relay, in which the relay message to one receiver is a
degraded version of the message to the other receiver.
We focus on a weak-relay regime, where the
transmitter-to-relay links are not unboundedly stronger than the
interfering links of the interference channel, and show that GHF
achieves to within a constant gap to the capacity region in the weak-relay regime.
Further, in a symmetric setting, each common relay bit can be worth
either one or two bits in the sum capacity gain, illustrating the
potential for a cell-edge relay in improving the system throughput of
a wireless cellular network.

Furthermore, the Gaussian relay-interference channels with a single
relay link is also studied. The capacity region is characterized to
within a constant gap for a larger range of channel parameters. It is
shown that in the high SNR regime, the sum-capacity improvement is equivalent either to that of a single source-relay link from user $1$ or  that of a single source-relay link from user $2$.

\appendix

\subsection{Proof of Theorem~\ref{outerbound_theorem_twolinks}} \label{proof_outer_bound_twolinks}
Define $V_1^n$ as the output of the digital link $\C_1$, and
$V_2^n$ as the output of the digital link $\C_2$. The outer bounds are
proved as follows:
%\begin{itemize}

(i) Individual-rate bounds: First, the first term of (\ref{R1_bound_twolinks}) is the simple cut-set upper bound for $R_1$. For the second term, starting from Fano's inequality, we have
    \begin{eqnarray}
    n(R_1 - \epsilon_n) &\le& I(X_1^n; Y_1^n, V_1^n)  \\
    &\le& I(X_1^n; Y_1^n, Y_R^n, X_2^n) \nonumber \\
    &\le& \frac{n}{2} \log(1 + \SNR_1 + \SNR_{r1}).\nonumber
    \end{eqnarray}
    The outer bound of $R_2$ in (\ref{R2_bound_twolinks}) can be proved in the same way.

(ii) Sum-rate bounds:
\begin{itemize}
\item First, (\ref{sumrate_bound_1_twolinks})-(\ref{sumrate_bound_3_twolinks}) are obtained from Fano's inequalities, i.e.,
\begin{eqnarray}
\lefteqn{n(R_1+R_2 -\epsilon_n)}  \\
 &\le& I(X_1^n; Y_1^n, V_1^n) + I(X_2^n; Y_2^n, V_2^n) \nonumber \\
&=& I(X_1^n; Y_1^n) + I(X_2^n; Y_2^n)  + I(X_1^n; V_1^n|Y_1^n) \nonumber \\
&& +  I(X_2^n; V_2^n|Y_2^n) \nonumber \\
&\le&  I(X_1^n; Y_1^n) + I(X_2^n; Y_2^n)  + h(V_1^n) + h(V_2^n) \nonumber \\
&\le& nC_{sum}(0) + n\C_1 + n\C_2,\nonumber
\end{eqnarray}
where $C_{sum}(0)$ is the sum capacity of the interference channel without relay. Clearly, the sum-rate gain due to the digital relay is upper bounded by the rates of digital links. Although the sum-rate capacity $C_{sum}(0)$ is not known in general, its upper bound has been studied in literature \cite{Tse2007,Wang_Allerton,Biao,Khandani08,VVV, Telatar_bound}. Applying the sum-rate outer bounds in \cite{Wang_Allerton}, we obtain (\ref{sumrate_bound_1_twolinks})-(\ref{sumrate_bound_3_twolinks}).

\item Second,  (\ref{sumrate_bound_4_twolinks})-(\ref{sumrate_bound_6_twolinks}) can be obtained by the following steps:
\begin{eqnarray}
\lefteqn{n(R_1+R_2 -\epsilon_n)}  \\
 &\le& I(X_1^n; Y_1^n, V_1^n) + I(X_2^n; Y_2^n, V_2^n) \nonumber \\
&\stackrel{(a)}{\le}& I(X_1^n; Y_1^n) + h(V_1^n) + I(X_2^n; Y_2^n, Y_R^n),\nonumber
\end{eqnarray}
where in (a) we give genie $Y_R^n$ to receiver $2$ and apply the fact that $\hat{Y}_R$ is a function of $Y_R$. Note that $I(X_1^n; Y_1^n) + I(X_2^n; Y_2^n, Y_R^n)$ is upper bounded by the sum capacity of the SIMO interference channel with $X_1^n$ and $X_2^n$ as the input, and $Y_1^n$ and $(Y_2^n, Y_R^n)$ as the output. The sum-rate outer bound of such a SIMO interference channel has been studied in \cite{Wang_Allerton},
which along with $h(V_1^n) \le n\C_1$ gives the outer bounds of
(\ref{sumrate_bound_4_twolinks})-(\ref{sumrate_bound_6_twolinks}).

\item Third,
(\ref{sumrate_bound_7_twolinks})-(\ref{sumrate_bound_9_twolinks}) can be
similarly derived following the same steps of
(\ref{sumrate_bound_4_twolinks})-(\ref{sumrate_bound_6_twolinks}) with indices
$1$ and $2$ switched.

\item Fourth, (\ref{sumrate_bound_10_twolinks})-(\ref{sumrate_bound_12_twolinks}) can
be obtained by giving $Y_R^n$ as a genie to both receivers, i.e.,
\begin{eqnarray}
\lefteqn{n(R_1+R_2 -\epsilon_n) }  \\
&\le& I(X_1^n; Y_1^n, V_1^n) + I(X_2^n; Y_2^n, V_2^n) \nonumber \\
&\le& I(X_1^n; Y_1^n, Y_R^n) +I(X_2^n; Y_2^n, Y_R^n),  \nonumber
\end{eqnarray}
which is upper bounded by the sum capacity of the SIMO interference
channel with $X_1^n$ and $X_2^n$ as input, and $(Y_1^n, Y_R^n)$ and $(Y_2^n,
Y_R^n)$ as output. Applying the result in \cite{Wang_Allerton}, we have
(\ref{sumrate_bound_10_twolinks})-(\ref{sumrate_bound_12_twolinks}).
\end{itemize}

(iii) $2R_1 + R_2$ bounds: Six upper bounds on $2R_1 + R_2$.
\begin{itemize}
\item First,
(\ref{2R1R2_bound_1_twolinks}) is simply the cut-set bound, i.e.,
\begin{eqnarray}
\lefteqn{n(2R_1 + R_2 - \epsilon_n)}  \\
&\le&  2I(X_1^n; Y_1^n, V_1^n) + I(X_2^n; Y_2^n, V_2^n) \nonumber \\
&\le& 2I(X_1^n; Y_1^n) + I(X_2^n; Y_2^n) + 2h(V_1^n) + h(V_2^n), \nonumber
\end{eqnarray}
where $2I(X_1^n; Y_1^n) + I(X_2^n; Y_2^n)$ is upper bounded by the $2R_1 + R_2$
bound of the interference channel with $X_1^n$ and $X_2^n$ as the input, and $Y_1^n$ and $Y_2^n$ as the output, which together with $h(V_1^n) \le n\C_1$ and $h(V_2^n) \le n\C_2$ gives the upper bound in
(\ref{2R1R2_bound_1_twolinks}).

\item Second, (\ref{2R1R2_bound_2_twolinks}) can be derived by giving genie $Y_R^n$
to both receivers:
\begin{eqnarray}
\lefteqn{n(2R_1 + R_2 - \epsilon_n)} \\
&\le&  2I(X_1^n; Y_1^n, V_1^n) + I(X_2^n; Y_2^n, V_2^n) \nonumber \\
&\le& 2I(X_2^n; Y_1^n, Y_R^n) + I(X_2^n; Y_2^n, Y_R^n),\nonumber
\end{eqnarray}
which is upper bounded by the $2R_1+R_2$ bound of the SIMO interference channel
with $X_1^n$ and $X_2^n$ as the input, and $(Y_1^n, Y_R^n)$ and $(Y_2^n,
Y_R^n)$ as the output. Applying the result of \cite{Wang_Allerton}, we obtain
(\ref{2R1R2_bound_2_twolinks}).

\item Third, (\ref{2R1R2_bound_3_twolinks}) can be obtained by giving genies $(X_2^n, Y_R^n, S_1^n)$ to $Y_1^n$ in one of the two $R_1$ expressions and $(S_2^n, Y_R^n)$ to $Y_2^n$, where genies $S_1^n$ and $S_2^n$ are defined as
    \begin{equation}
    S_1^n = h_{12}X_1^n + Z_2,  \qquad S_2^n = h_{21}X_2^n + Z_1.
    \end{equation}
    According to Fano's inequality, we have
\begin{eqnarray}
\lefteqn{n(2R_1 + R_2 - \epsilon_n)} \label{2R1R2_1_010}  \\
&\le&  2I(X_1^n; Y_1^n, V_1^n) + I(X_2^n; Y_2^n, V_2^n) \nonumber \\
&\le& I(X_1^n; Y_1^n, Y_R^n, S_1^n, X_2^n) + I(X_1^n; Y_1^n) + h(V_1^n) \nonumber \\
&&+ I(X_2^n; Y_2, Y_R^n, S_2^n) \nonumber \\
&\stackrel{(a)}{\le}& I(X_1^n;Y_1^n, Y_R^n, S_1^n|X_2^n) + h(Y_1^n) - h(S_2^n) + n\C_1 \nonumber \\
&&+ I(X_2^n; S_2^n) + I(X_2^n;Y_2^n, Y_R^n|S_2^n) \nonumber \\
&=& I(X_1^n; S_1^n) + I(X_1^n; Y_1^n, Y_R^n|S_1^n, X_2^n) + h(Y_1^n)  \nonumber \\
&& - h(S_2^n) + n\C_1 + h(S_2^n) - h(Z_1^n) \nonumber \\
&& + h(Y_2^n, Y_R^n|S_2^n) - h(S_1^n) -  h(Y_R^n|Y_2^n, X_2^n)\nonumber \\
%&=& h(S_1^n) - h(Z_2^n)  + h(Y_1^n, Y_R^n |S_1^n, X_2^n) - h(Z_1^n, Z_R^n)  + h(Y_1^n) \nonumber \\
%&&- h(S_2^n)  + n\C_1 + h(S_2^n) - h(Z_1^n) + h(Y_2^n, Y_R^n|S_2^n) - h(S_1^n) \nonumber \\
%&&- h(Y_R^n|Y_2^n, X_2^n) \nonumber \\
&=& h(Y_1^n) - h(Z_1^n) + h(Y_1^n, Y_R^n|S_1^n, X_2^n) + n\C_1 \nonumber \\
&& - h(Z_1^n, Z_R^n) + h(Y_2^n, Y_R^n|S_2^n) - h(Z_2^n, Z_R^n) \nonumber \\
&& - I(Y_R^n; X_1^n|X_2^n, Y_2^n) \nonumber \\
&\le&h(Y_1^n) - h(Z_1^n) + h(Y_1^n, Y_R^n|S_1^n, X_2^n) + n\C_1 \nonumber \\
&& - h(Z_1^n, Z_R^n) + h(Y_2^n, Y_R^n|S_2^n) - h(Z_2^n, Z_R^n) , \nonumber
\end{eqnarray}
where in (a) we use the fact that $X_1^n$ is independent  with $X_2^n$. Note
that, the last inequality of (\ref{2R1R2_1_010}) is maximized by Gaussian inputs $X_1^n$ and $X_2^n$
with i.i.d $\mathcal{N} (0, 1)$ entries, because
\begin{itemize}
\item $h(Y_1^n)$ is maximized by Gaussian distributions, and
\item $h(Y_1^n, Y_R^n|S_1^n, X_2^n)$ and $h(Y_2^n, Y_R^n|S_2^n)$ are both maximized by Gaussian
inputs since the conditional entropy under a power constraint is maximized by
Gaussian distributions.
\end{itemize}
Applying Gaussian distributions to the last inequality of (\ref{2R1R2_1_010}),
we have (\ref{2R1R2_bound_3_twolinks}).

\item Fourth, (\ref{2R1R2_bound_4_twolinks}) can be obtained by giving genie $Y_R^n$ to $Y_1^n$, i.e.,
\begin{eqnarray}
\lefteqn{n(2R_1 + R_2 - \epsilon_n)} \\
&\le&  2I(X_1^n; Y_1^n, V_1^n) + I(X_2^n; Y_2^n, V_2^n) \nonumber \\
&\le& 2I(X_1^n; Y_1^n, Y_R^n) + I(X_2^n; Y_2^n) + h(V_2^n),\nonumber
\end{eqnarray}
where $2I(X_1^n; Y_1^n, Y_R^n) + I(X_2^n; Y_2^n)$ is upper bounded by the $2R_1 + R_2$ bound of the SIMO interference channel with $X_1^n$ and $X_2^n$ as the input, and $(Y_1^n, Y_R^n)$ and $Y_2^n$ as the output. Applying the result of \cite{Wang_Allerton} and the fact that $h(V_2^n) \le n\C_2$, we obtain (\ref{2R1R2_bound_4_twolinks}).

\item Fifth, (\ref{2R1R2_bound_5_twolinks}) can be obtained by giving genie $Y_R^n$ to $Y_2^n$, i.e.,
\begin{eqnarray}
\lefteqn{n(2R_1 + R_2 - \epsilon_n)} \\
&\le&  2I(X_1^n; Y_1^n, V_1^n) + I(X_2^n; Y_2^n, V_2^n) \nonumber \\
&\le& 2I(X_1^n; Y_1^n) + 2 h(V_1^n) + I(X_2^n; Y_2^n, Y_R^n),\nonumber
\end{eqnarray}
where $2I(X_1^n; Y_1^n) + I(X_2^n; Y_2^n, Y_R^n)$ is upper bounded by the $2R_1 + R_2$ bound of the SIMO interference channel with $X_1^n$ and $X_2^n$ as the input, and $Y_1^n$ and $(Y_2^n, Y_R^n)$ as the output. Applying the result of \cite{Wang_Allerton} and the fact that $h(V_1^n) \le n\C_1$, we obtain (\ref{2R1R2_bound_5_twolinks}).

\item Sixth, (\ref{2R1R2_bound_6_twolinks}) can be obtained by giving genies $(X_2^n, Y_R^n, S_1^n)$ to $Y_1^n$ in one of the two $R_1$ expressions, and $S_2^n$ to $Y_2^n$, i.e.,
\begin{eqnarray}
\lefteqn{n(2R_1 + R_2 - \epsilon_n)} \label{2R1R2_1_proof_6} \\
&\le&  2I(X_1^n; Y_1^n, V_1^n) + I(X_2^n; Y_2^n, V_2^n) \nonumber \\
&\le& I(X_1^n; Y_1^n, Y_R^n, S_1^n, X_2^n) + I(X_1^n; Y_1^n) + h(V_1^n) \nonumber \\
&&+ I(X_2^n; Y_2^n, S_2^n) + h(V_2^n) \nonumber \\
&\le& I(X_1^n; S_1^n) + I(X_1^n; Y_1^n Y_R^n| S_1^n,X_2^n) + h(Y_1^n) \nonumber \\
&&- h(S_2^n) + I(X_2^n; S_2^n) + I(X_2^n; Y_2^n|S_2^n) \nonumber \\
&& + n\C_1 + n \C_2 \nonumber \\
&\le& h(S_1^n) - h(Z_2^n) + h(Y_1^n, Y_R^n|S_1^n, X_2^n) \nonumber \\
&& - h(Z_1^n, Z_R^n) + h(Y_1^n) - h(S_2^n) + h(S_2^n) - h(Z_1^n) \nonumber \\
&&+ h(Y_2^n|S_2^n)  - h(S_1^n) + n\C_1 + n\C_2 \nonumber \\
&=& h(Y_1^n) - h(Z_1^n) + h(Y_1^n, Y_R^n|S_1^n, X_2^n) \nonumber \\
&&- h(Z_1^n, Z_R^n) +  h(Y_2^n|S_2^n) - h(Z_2^n) + n\C_1 + n\C_2, \nonumber
\end{eqnarray}
which is maximized by Gaussian distributions of $X_1^n$ and $X_2^n$ with i.i.d entries following $\mathcal{N}(0,1)$. Applying Gaussian distributions to (\ref{2R1R2_1_proof_6}), we obtain  (\ref{2R1R2_bound_6_twolinks}).
\end{itemize}
%\end{itemize}

\subsection{Useful Inequalities} \label{useful_ineq}
This appendix provides several inequalities that are useful to prove the  constant-gap theorems.

\begin{lemma} \label{lemma_useful_ineq}
For $ \Delta{a}_i, a_i, \Delta{d}_i, d_i, \Delta{e}_i, e_i, \Delta{g}_i, g_i$
and $\xi_i, i=1, 2$ as defined in (\ref{a2g_begin})-(\ref{a2g_end}), with $Q$ set
as a constant, when $W_i, X_i$ are generated from a superposition coding of
$X_i=U_i + W_i$ with $U_i \sim ~\mathcal{N}(0, P_{ip})$ and $W_i \sim
\mathcal{N}(0, P_{ic})$, where $P_{ip} + P_{ic}=1$ and $P_{1p}=\min\{1,
h_{12}^{-2}\}, P_{2p} = \min\{1, h_{21}^{-2}\}$, and when the GHF quantization
variables are set to $\hat{Y}_{R1} = Y_R + e_1, \hat{Y}_{R2} = Y_R + e_2$,
where $e_1 \sim \mathcal{N}(0, \q_1)$ and $e_2 \sim \mathcal{N}(0, \q_2)$, in
the weak-relay regime of $|g_1| \le \sqrt{\rho}|h_{12}|, |g_2| \le
\sqrt{\rho}|h_{21}|$, the mutual information terms in
(\ref{a2g_begin})-(\ref{a2g_end}) can be bounded as follows:
\begin{eqnarray}
a_1 &\ge& \hf \log \left( 1 + \frac{\SNR_1}{1 + \INR_1}\right) - \hf,  \label{lower_bound_a_1} \\
a_1 + \Delta {a}_1 &\ge& \hf \log \left(1 + \frac{\SNR_1 + \SNR_{r1}}{1 + \INR_1} \right) - \alpha(\q_1), \label{lower_bound_a_1_da1}\\
d_1 &\ge& \hf \log(1 + \SNR_1) -\hf, \label{lower_bound_d_1}\\
d_1 + \Delta {d}_1 &\ge& \hf \log (1 + \SNR_1 + \SNR_{r1}) - \alpha(\q_1), \label{lower_bound_d_1_dd1}\\
e_1 &\ge& \hf \log \left(1 + \frac{\SNR_1}{1 + \INR_1} + \INR_2\right) -\hf, \label{lower_bound_e_1}\\
e_1 + \Delta {e}_1 &\ge& \hf \log \left(1 + \frac{\SNR_1(1 + \phi_1^2 \SNR_{r2}) + \SNR_{r1}}{1 + \INR_1} \right. \nonumber \\
&& \left.  \qquad \;\;+ \INR_2 + \SNR_{r2} \right) - \alpha(\q_1), \label{lower_bound_e_1_de1}\\
g_1 &\ge& \hf \log(1 + \SNR_1 + \INR_2) - \hf \label{lower_bound_g_1} \\
g_1 + \Delta {g}_1 &\ge& \hf \log \left( 1+ \SNR_1(1 + \phi_1^2 \SNR_{r2}) + \SNR_{r1}  \right. \nonumber \\
&& \left.  \qquad \;\;+ \INR_2 + \SNR_{r2}\right) - \alpha(\q_1) \label{lower_bound_g_1_dg1}, \\
\xi_1 &\le& \hf \log\left(1 + \frac{1+\rho}{\q_1} \right) = \beta(\q_1) - \hf,
\label{lower_bound_xi_1}
\end{eqnarray}
and the lower bounds of $a_2, a_2 + \Delta {a}_2, d_2, d_2 + \Delta
{d}_2, e_2, e_2 + \Delta {e}_2, g_2, g_2 + \Delta {g}_2$ and
the upper bound of $\xi_2$ can be obtained by switching the indices of $1$ and
$2$ in (\ref{lower_bound_a_1})-(\ref{lower_bound_xi_1}).
\end{lemma}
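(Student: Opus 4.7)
The plan is to substitute the prescribed Gaussian codebooks ($X_i=U_i+W_i$ with $U_i\sim\mathcal{N}(0,P_{ip})$, $W_i\sim\mathcal{N}(0,P_{ic})$) and the prescribed scalar quantization ($\hat{Y}_{Ri}=Y_R+e_i$, $e_i\sim\mathcal{N}(0,\q_i)$) into each mutual information, evaluate the result as a ratio of Gaussian covariance determinants, and compare with the claimed bound using two structural inequalities that hold under our parameter choices: the Etkin-Tse-Wang relation $|h_{ji}|^2 P_{jp}\le 1$ (immediate from $P_{jp}=\min\{1,|h_{ji}|^{-2}\}$) and the weak-relay relation $|g_j|^2 \le \rho\,|h_{ji}|^2$.

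For the four non-relay quantities $a_1,d_1,e_1,g_1$, each reduces to a single-user Gaussian computation of the form $\hf\log\bigl(1+(\text{signal})/(1+|h_{21}|^2 P_{2p})\bigr)$ with signal power $\SNR_1 P_{1p}$, $\SNR_1$, $\SNR_1 P_{1p}+\INR_2$, and $\SNR_1+\INR_2$, respectively. Since $|h_{21}|^2 P_{2p}\le 1$ the denominator is at most $2$, so replacing it by $1$ costs at most $\hf$ bit; for $a_1$, the additional fact $P_{1p}\ge 1/(1+\INR_1)$ produces the $\SNR_1/(1+\INR_1)$ form. This immediately yields (\ref{lower_bound_a_1}), (\ref{lower_bound_d_1}), (\ref{lower_bound_e_1}), (\ref{lower_bound_g_1}).

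For the four relay-augmented quantities, each equals $h(Y_1,\hat{Y}_{R1}\mid\cdot)-h(Y_1,\hat{Y}_{R1}\mid X_1,\cdot)$, a log-ratio of two $2\times 2$ Gaussian determinants. The central algebraic step is the numerator determinant expansion: its cross-term $-2h_{11}g_1\,h_{21}g_2$ combines with the diagonal terms via the identity $|g_1 h_{21}-g_2 h_{11}|^2 = \phi_1^2\,\SNR_1\SNR_{r2}$ (i.e., the definition of $\phi_1^2$) together with $|g_1 h_{21}|^2=\INR_2\SNR_{r1}$ and $|g_2 h_{11}|^2=\SNR_1\SNR_{r2}$, producing a $\phi_1^2\SNR_1\SNR_{r2}$ signal-enhancement contribution. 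For (\ref{lower_bound_e_1_de1}) and (\ref{lower_bound_g_1_dg1}) this term is retained, matching the claimed factor $\SNR_1(1+\phi_1^2\SNR_{r2})$; for (\ref{lower_bound_a_1_da1}) and (\ref{lower_bound_d_1_dd1}), where $W_2$-conditioning does not leave any $\phi$-term in the stated bound, we simply drop the nonnegative $\phi^2$ contribution. In all four cases the denominator determinant evaluates to $(1+\q_1)(1+|h_{21}|^2 P_{2p})+|g_2|^2 P_{2p}$, which is bounded by $2(1+\q_1)+\rho = 2\q_1+2+\rho = 2^{2\alpha(\q_1)}$ via the Etkin-Tse-Wang bound $|h_{21}|^2 P_{2p}\le 1$ and the weak-relay consequence $|g_2|^2 P_{2p}\le\rho\,|h_{21}|^2 P_{2p}\le\rho$; this is the source of the $\alpha(\q_1)$-bit gap.

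Finally, $\xi_1 = h(\hat{Y}_{R1}\mid Y_1,X_1,W_2) - h(e_1)$ reduces to an MMSE computation where the residual of $\hat{Y}_{R1}$, after removing its best linear predictor from $h_{21}U_2+Z_1$, has variance $\q_1 + 1 + |g_2|^2 P_{2p}/(|h_{21}|^2 P_{2p}+1) \le \q_1 + 1 + \rho$ again by the weak-relay inequality, yielding $\xi_1\le\hf\log(1+(1+\rho)/\q_1)=\beta(\q_1)-\hf$ as claimed. The main obstacle is the $\phi$-term bookkeeping in the numerator-determinant expansions for (\ref{lower_bound_e_1_de1}) and (\ref{lower_bound_g_1_dg1}): one must verify that the cross-term expansion produces exactly the stated $\phi_1^2\SNR_1\SNR_{r2}$ factor (with no extraneous remainder) and confirm that all remaining non-$\phi$ terms line up with the corresponding parts of the claimed lower bound; once this algebra is in place, the two structural inequalities close out the argument mechanically, and the analogous bounds for the second user follow by the symmetric argument with indices $1$ and $2$ exchanged.
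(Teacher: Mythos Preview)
Your proposal is correct and follows essentially the same route as the paper's proof: the paper likewise first establishes the preliminary bounds $\INR_{2p}=|h_{21}|^2P_{2p}\le 1$, $\SNR_{r2p}=|g_2|^2P_{2p}\le\rho$, and $\SNR_{1p}=|h_{11}|^2P_{1p}\ge\SNR_1/(1+\INR_1)$, then evaluates each mutual information as a ratio of Gaussian (co)variances---yielding exactly your denominator $(\q_1+1)(1+\INR_{2p})+\SNR_{r2p}\le 2\q_1+2+\rho$ for the relay-augmented terms and your MMSE residual for $\xi_1$---and bounds each piece via these two structural inequalities. The only cosmetic difference is that the paper writes out the full numerator determinant in each of the four relay-augmented cases (already containing the $\phi_1^2\SNR_{r2}$ factor) rather than deriving it once via the identity $|g_1h_{21}-g_2h_{11}|^2=\phi_1^2\SNR_1\SNR_{r2}$ as you do; note also that the bound $P_{1p}\ge 1/(1+\INR_1)$ is needed for $e_1$ and $e_1+\Delta e_1$ as well as for $a_1$ and $a_1+\Delta a_1$.
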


\begin{IEEEproof}
First, define the signal-to-noise and interference-to-noise ratios of the
private messages as
\begin{align}
\SNR_{1p} &= |h_{11}|^2 P_{1p}, \qquad  \SNR_{2p} &= |h_{22}|^2 P_{2p},  \\
\INR_{1p} &= |h_{12}|^2 P_{1p},\qquad   \INR_{2p} &= |h_{21}|^2 P_{2p},  \\
 \SNR_{r1p} &= |g_1|^2 P_{1p}, \qquad  \SNR_{r2p} &= |g_2|^2 P_{2p},
\end{align}
which can be lower bounded or upper bounded as follows:
\begin{eqnarray}
\SNR_{1p} &=& |h_{11}|^2 P_{1p} \nonumber \\
&=& \min \left\{|h_{11}|^2, \frac{|h_{11}|^2}{|h_{12}|^2}  \right\} \nonumber \\
&=& \min\left\{\SNR_1, \frac{\SNR_1}{\INR_1}  \right\} \nonumber \\
&\ge&  \frac{\SNR_1}{1 + \INR_1},
\end{eqnarray}
and
\begin{eqnarray}
0 \le \INR_{1p} = \min\{1, \INR_1 \} \le 1,
\end{eqnarray}
and
\begin{eqnarray}
\SNR_{r1p} &=& |g_{1}|^2 P_{1p} \nonumber \\
&=& \min \left\{|g_1|^2, \frac{|g_1|^2}{|h_{12}|^2} \right\} \nonumber \\
&=& \min\left\{\SNR_{r1}, \frac{\SNR_{r1}}{\INR_1}  \right\} \nonumber \\
&\ge&  \frac{\SNR_{r1}}{1 + \INR_1}.
\end{eqnarray}
Since $|g_1| \le \sqrt{\rho}|h_{12}|$, $\SNR_{r1p}$ is upper bounded by
$\rho$. Therefore
\begin{equation}
\rho \ge \SNR_{r1p} \ge \frac{\SNR_{r1}}{1 + \INR_1}.
\end{equation}

Switching the indices of $1$ and $2$, we have
\begin{align}
\SNR_{2p} &\ge  \frac{\SNR_2}{1 + \INR_2}, \\
1 \ge \INR_{2p} &\ge 0, \\
\rho \ge \SNR_{r2p} &\ge \frac{\SNR_{r2}}{1 + \INR_2}.
\end{align}

Now, starting from (\ref{lower_bound_a_1}), we prove the inequalities one by
one.
\begin{itemize}
\item First, (\ref{lower_bound_a_1}) is lower bounded by
\begin{eqnarray}
a_1 &=& I(X_1; Y_1|W_1, W_2) \nonumber \\
&=& \hf \log \left(\frac{1 + \SNR_{1p} + \INR_{2p}}{1 + \INR_{2p}} \right)
\nonumber \\
&\stackrel{(a)}{\ge}&  \hf \log(1 + \SNR_{1p}) -\hf \nonumber \\
&\stackrel{(b)}{\ge}&  \hf \log \left( 1 + \frac{\SNR_1}{1 + \INR_1}\right) -
\hf,
\end{eqnarray}
where (a) holds because $0 \le \INR_{2p} \le 1$ and (b) is due to the fact that
$\SNR_{1p} \ge  \frac{\SNR_1}{1 + \INR_1}$.

\item Second, (\ref{lower_bound_a_1_da1}) is lower bounded by (\ref{eq:a1_delta_a1}),
\begin{figure*}
\begin{eqnarray}
a_1 + \Delta {a}_1
 &=& I(X_1; Y_1|W_1, W_2) + I(X_1; \hat{Y}_{R1}|Y_1,
W_1, W_2) \nonumber \\
&=& \hf \log \left( \frac{(\q_1 + 1)(1 + \SNR_{1p} + \INR_{2p}) + \SNR_{r1p} +
\SNR_{r2p}(1 + \phi_1^2\SNR_{1p})}{(\q_1 + 1)(1 + \INR_{2p}) + \SNR_{r2p}}
\right) \nonumber \\
&\ge& \hf \log (1 + \SNR_{1p} + \SNR_{r1p}) - \hf \log ((\q_1 + 1)(1 +
\INR_{2p}) + \SNR_{r2p}) \nonumber \\
&\stackrel{(a)}{\ge}& \hf \log \left(1 + \frac{\SNR_1 + \SNR_{r1}}{1 + \INR_1}
\right) - \alpha(\q_1), \label{eq:a1_delta_a1}
\end{eqnarray}
\hrule
\end{figure*}
where (a) holds because $\SNR_{1p} \ge  \frac{\SNR_1}{1 + \INR_1}$, $\SNR_{r1p}
\ge \frac{\SNR_{r1}}{1 + \INR_1}$, and
\begin{eqnarray}
\lefteqn{\hf \log((\q_1 + 1)(1 + \INR_{2p}) + \SNR_{r2p})} \nonumber \\
&\le& \hf \log((\q_1+1)(1+1) + \rho) \nonumber \\
&=& \alpha(\q_1).
\end{eqnarray}

\item Third, (\ref{lower_bound_d_1}) is lower bounded by
\begin{eqnarray}
d_1 &=& I(X_1; Y_1|W_{2}) \nonumber \\
&=& \hf \log \left(\frac{1 + \SNR_1 + \INR_{2p}}{1 + \INR_{2p}} \right)
\nonumber \\
&\ge& \hf \log(1 + \SNR_1) -\hf.
\end{eqnarray}

\item Fourth, (\ref{lower_bound_d_1_dd1}) is lower bounded by (\ref{eq:d1_delta_d1}).
\begin{figure*}
\begin{eqnarray}
d_1 + \Delta {d}_1
 &=& I(X_1;Y_1|W_{2}) + I(X_1; \hat{Y}_{R1}|Y_1, W_2)
\nonumber \\
&=& \hf \log\left(\frac{(\q_1 +1)(1 + \SNR_1 + \INR_{2p}) + \SNR_{r1} +
\SNR_{r2p}(1 + \phi_1^2 \SNR_1)} {(\q_1 + 1)(1 + \INR_{2p}) +
\SNR_{r2p}} \right) \nonumber \\
&\ge& \hf \log (1 + \SNR_1 + \SNR_{r1}) -\alpha(\q_1). \label{eq:d1_delta_d1}
\end{eqnarray}
\end{figure*}

\item Fifth, (\ref{lower_bound_e_1}) is lower bounded by
\begin{eqnarray}
e_1 &=& I(X_1, W_2; Y_1|W_1) \nonumber \\
&=& \hf \log \left( \frac{1 + \SNR_{1p} + \INR_2}{1 + \INR_{2p}}\right)
\nonumber \\
&\ge& \hf \log \left(1 + \frac{\SNR_1}{1 + \INR_1} + \INR_2\right) -\hf.
\end{eqnarray}

\item Sixth, (\ref{lower_bound_e_1_de1}) is lower bounded by (\ref{eq:e1_delta_e1}).
\begin{figure*}[t]
\begin{eqnarray}
e_1 + \Delta {e}_1&=& I(X_1, W_2; Y_1|W_1) + I(X_1,W_2; \hat{Y}_{R1}
|Y_1, W_1) \nonumber \\
&=& \hf \log \left( \frac{(\q_1 + 1)(1 + \SNR_{1p} + \INR_{2}) + \SNR_{r1p} +
\SNR_{r2}(1 + \phi_1^2\SNR_{1p})}{(\q_1 + 1)(1 + \INR_{2p}) + \SNR_{r2p}}
\right) \nonumber \\
&\ge& \hf \log \left(1 + \frac{\SNR_1(1 + \phi_1^2 \SNR_{r2}) + \SNR_{r1}}{1 +
\INR_1} + \INR_2 + \SNR_{r2} \right) - \alpha(\q_1). \label{eq:e1_delta_e1}
\end{eqnarray}
\end{figure*}

\item Seventh, (\ref{lower_bound_g_1}) is lower bounded by
\begin{eqnarray}
g_1 &=& I(X_1, W_2; Y_1) \nonumber \\
&=& \hf \log \left(\frac{1 + \SNR_1 + \INR_2}{1 + \INR_{2p}} \right) \nonumber
\\
&\ge& \hf \log(1 + \SNR_1 + \INR_2) - \hf.
\end{eqnarray}

\item Eighth, (\ref{lower_bound_g_1_dg1}) is lower bounded by (\ref{eq:g1_delta_g1}).
\begin{figure*}
\begin{eqnarray}
g_1 + \Delta {g}_1
 &=& I(X_1, W_2; Y_1) + I(X_1, W_2; \hat{Y}_{R1}|Y_1)
\nonumber \\
&=& \hf \log \left( \frac{(\q_1 + 1)(1 + \SNR_{1} + \INR_{2}) + \SNR_{r1} +
\SNR_{r2}(1 + \phi_1^2\SNR_{1})}{(\q_1 + 1)(1 + \INR_{2p}) + \SNR_{r2p}}
\right) \nonumber \\
&\ge& \hf \log \left( 1+ \SNR_1(1 + \phi_1^2 \SNR_{r2}) + \SNR_{r1}  + \INR_2 +
\SNR_{r2}\right) -\alpha(\q_1). \label{eq:g1_delta_g1}
\end{eqnarray}
\hrule
\end{figure*}

\item Ninth, (\ref{lower_bound_xi_1}) is upper bounded by
\begin{eqnarray}
\xi_1 &=& I(Y_{R}: \hat{Y}_{R1}|Y_1, X_1, W_2) \nonumber \\
&=& \hf \log \left(1 + \frac{1}{\q_1}\left(1 + \frac{\SNR_{r2p}}{1 + \INR_{2p}}
\right) \right) \nonumber \\
&\le& \hf \log\left(1 + \frac{1 +\rho}{\q_1} \right)
\end{eqnarray}
\end{itemize}
\end{IEEEproof}

\subsection{Proof of Theorem~\ref{constantgap_theorem_twolinks}}
\label{proof_constant_theorem_twolinks}

In this appendix, we show that using the Han-Kobayashi power splitting
strategy with the private message power set to $P_{1p}=\min\{1, h_{12}^{-2}\}$
and $P_{2p} = \min \{1, h_{21}^{-2} \}$, all the achievable rates in (\ref{achievable_R1})-(\ref{achievable_R12R2})
are within constant bits of their corresponding outer bounds in
Theorem~\ref{outerbound_theorem_twolinks}. Note that, in the following proof,
inequalities in Appendix~\ref{useful_ineq} are implicitly used without being
mentioned.
%\begin{itemize}

(i) First, (\ref{achievable_R1}) is within constant bits of (\ref{R1_bound_twolinks}), and (\ref{achievable_R2}) is within constant bits of (\ref{R2_bound_twolinks}). To see this, the first term of (\ref{achievable_R1}) is lower bounded by
\begin{eqnarray}
\lefteqn{d_1 + (\C_1 -\xi_1)^+}  \nonumber \\
&\ge& \hf \log(1 + \SNR_1) - \hf + \C_1 - \xi_1 \nonumber \\
&\ge& \hf \log(1 + \SNR_1) + \C_1 - \left(\hf + \hf \log\left(1 + \frac{1
+\rho}{\q_1}\right) \right) \nonumber \\
&&
\end{eqnarray}
which is within $\beta(\q_1)$ bits of the
first term of (\ref{R1_bound_twolinks}).

According to Lemma~\ref{lemma_useful_ineq}, the second term of
(\ref{achievable_R1}) is lower bounded by
\begin{eqnarray}
d_1 + \Delta {d}_1 &\ge& \hf \log (1 + \SNR_1 + \SNR_{r1}) - \alpha(\q_1), \nonumber \\
&&
\end{eqnarray}
which is within $\alpha(\q_1)$ bits of the second term of
(\ref{R1_bound_twolinks}). As a result, the gap between (\ref{achievable_R1})
and (\ref{R1_bound_twolinks}) is bounded by
\begin{equation}
\delta_{R_1} = \max \left\{\alpha(\q_1), \beta(\q_1) \right\}. \label{delta_R1}
\end{equation}

Due to symmetry, (\ref{achievable_R2}) is within
\begin{equation}
\delta_{R_2} =\max \left\{\alpha(\q_2), \beta(\q_2) \right\}  \label{delta_R2}
\end{equation}
bits of the upper bound (\ref{R2_bound_twolinks}).

(ii)  Second, (\ref{achievable_R1R2_1})-(\ref{achievable_R1R2_3}) are within constant
    bits of their upper bounds
    (\ref{sumrate_bound_1_twolinks})-(\ref{sumrate_bound_12_twolinks}). To see
    this, inspecting the expressions of the achievable sum rates, it is easy to see that
    each of (\ref{achievable_R1R2_1})-(\ref{achievable_R1R2_3}) has four
    possible combinations: having both $\C_1$ and $\C_2$, having $\C_1$ only, having
    $\C_2$ only, and having none of $\C_1$ and $\C_2$. In the following, we
    show that, when specialized into the above four combinations, (\ref{achievable_R1R2_1})-(\ref{achievable_R1R2_3}) are
    within constant gap to the upper bounds
    (\ref{sumrate_bound_1_twolinks})-(\ref{sumrate_bound_12_twolinks}). The
    constant gaps are given by $\delta _{R_1+R_2}^{(\C_1, \C_2)}$, $\delta_{R_1+R_2}^{(\C_1,
    \mathsf{0})}$, $\delta_{R_1+R_2}^{(\mathsf{0}, \C_2)}$, and $\delta_{R_1+R_2}^{(\mathsf{0},
    \mathsf{0})}$ (to be defined later) respectively, each corresponding to a specific combination.

    \begin{itemize}
        \item First, when having  both $\C_1$  and $\C_2$, (\ref{achievable_R1R2_1})-(\ref{achievable_R1R2_3}) become
        \begin{eqnarray}
        R_1 + R_2 &\le& a_1 + g_2 + (\C_1-\xi_1)^+ + (\C_2-\xi_2)^+, \label{achievable_proof_R1R2_1} \nonumber \\
        && \\
        R_1 + R_2 &\le& a_2 + g_1 +  (\C_1-\xi_1)^+ + (\C_2-\xi_2)^+, \label{achievable_proof_R1R2_2} \nonumber \\
        && \\
        R_1 + R_2 &\le& e_1 + e_2 + (\C_1-\xi_1)^+ + (\C_2-\xi_2)^+, \label{achievable_proof_R1R2_3} \nonumber \\
        &&
        \end{eqnarray}
        which are within constant bits of (\ref{sumrate_bound_1_twolinks})-(\ref{sumrate_bound_3_twolinks}) respectively. To show this, first, according to Lemma~\ref{lemma_useful_ineq}, (\ref{achievable_proof_R1R2_1}) is lower bounded by
        \begin{eqnarray}
        \lefteqn{a_1 + g_2 + (\C_1-\xi_1)^+ + (\C_2-\xi_2)^+} \nonumber  \\
        && \ge \hf \log \left(1 + \frac{\SNR_1}{1 + \INR_1} \right) - \hf  \nonumber \\
        && + \hf \log(1 + \SNR_2 + \INR_1) -\hf \nonumber \\
        && +  \C_1 - \xi_1 + \C_2 - \xi_2,
        \end{eqnarray}
        which is within
        \begin{equation}
        \delta _{R_1+R_2}^{(\C_1, \C_2)} =\beta(\q_1) + \beta(\q_2)
        \end{equation}
        bits of the upper bound (\ref{sumrate_bound_1_twolinks}). Due to symmetry,
        (\ref{achievable_proof_R1R2_2}) is within $\delta _{R_1+R_2}^{(\C_1, \C_2)}$
        bits of the upper bound (\ref{sumrate_bound_2_twolinks}) as well. Now
        applying Lemma~\ref{lemma_useful_ineq}, (\ref{achievable_proof_R1R2_3}) is lower bounded by
        \begin{eqnarray}
        \lefteqn{e_1 + e_2 + (\C_1-\xi_1)^+ + (\C_2-\xi_2)^+} \nonumber \\
        &\ge& \hf \log \left(1 + \frac{\SNR_1}{1 + \INR_1} + \INR_2\right) -\hf \nonumber \\
        &&+ \hf \log \left(1 + \frac{\SNR_2}{1 + \INR_2} + \INR_1\right) -\hf \nonumber  \\
        && + \C_1 - \xi_1 + \C_2 - \xi_2,
        \end{eqnarray}
        which is within $\delta _{R_1+R_2}^{(\C_1, \C_2)}$ bits of the upper bound
        (\ref{sumrate_bound_3_twolinks}). Therefore, when specialized to the form
        with both $\C_1$ and $\C_2$ as shown in
        (\ref{achievable_proof_R1R2_1})-(\ref{achievable_proof_R1R2_3}),
        (\ref{achievable_R1R2_1})-(\ref{achievable_R1R2_3}) have a gap of $\delta _{R_1+R_2}^{(\C_1, \C_2)}$ bits
        to their upper bounds
        (\ref{sumrate_bound_1_twolinks})-(\ref{sumrate_bound_3_twolinks}).

        \item Second, when having $\C_1$ only, (\ref{achievable_R1R2_1})-(\ref{achievable_R1R2_3}) become
            \begin{eqnarray}
            R_1 + R_2 &\le& a_1 + g_2 + \Delta {g}_2 +(\C_1-\xi_1)^+, \label{achievable_proof_R1R2_4}\\
            R_1 + R_2 &\le& a_2 + \Delta {a}_2 + g_1 +  (\C_1-\xi_1)^+, \label{achievable_proof_R1R2_5}\\
            R_1 + R_2 &\le& e_1 + e_2 + \Delta {e}_2 + (\C_1-\xi_1)^+, \label{achievable_proof_R1R2_6}
            \end{eqnarray}
            where (\ref{achievable_proof_R1R2_4}) is lower bounded by
            \begin{eqnarray}
            \lefteqn{a_1 + g_2 + \Delta {g}_2 +(\C_1-\xi_1)^+}  \nonumber \\
            &\ge& \hf \log \left(1 + \frac{\SNR_1}{1 + \INR_1} \right) - \hf + \C_1 - \xi_1 \nonumber \\
            &&+ \hf \log \left( 1+ \SNR_2(1 + \phi_2^2 \SNR_{r1}) + \SNR_{r2} \right. \nonumber \\
            &&\qquad  \quad \left. + \INR_1 + \SNR_{r1}\right) - \alpha(\q_2),
            \end{eqnarray}
            which is within
            \begin{equation}
            \delta_{R_1+R_2}^{(\C_1, \mathsf{0})} = \alpha(\q_2) + \beta(\q_1)
            \end{equation}
            bits of the upper bound (\ref{sumrate_bound_4_twolinks}), and (\ref{achievable_proof_R1R2_5}) is lower bounded by
            \begin{eqnarray}
            \lefteqn{ a_2 + \Delta {a}_2 + g_1 +  (\C_1-\xi_1)^+} \nonumber \\
            &\ge& \hf \log \left(1 + \frac{\SNR_2 + \SNR_{r2}}{1 + \INR_2} \right) -\alpha(\q_2) \nonumber \\
            &&+ \hf \log(1 + \SNR_1 + \INR_2) - \hf + \C_1 + \xi_1, \nonumber \\
            &&
            \end{eqnarray}
            which is within $\delta_{R_1+R_2}^{(\C_1, \mathsf{0})}$ bits of the upper bound (\ref{sumrate_bound_5_twolinks}), and (\ref{achievable_proof_R1R2_6}) can be lower bounded by
            \begin{eqnarray}
            \lefteqn{e_1 + e_2 + \Delta {e}_2 + (\C_1-\xi_1)^+} \nonumber \\
            &\ge& \hf \log \left(1 + \frac{\SNR_1}{1 + \INR_1} + \INR_2\right) -\hf +  \C_1 -\xi_1 \nonumber \\
            &&+ \hf \log \left(1 + \frac{\SNR_2(1 + \phi_2^2 \SNR_{r1}) + \SNR_{r2}}{1 + \INR_2} \right. \nonumber \\
            &&  \qquad \qquad \left. + \INR_1 + \SNR_{r1} \right) - \alpha(\q_2),
            \end{eqnarray}
            which is within $\delta_{R_1+R_2}^{(\C_1, \mathsf{0})}$ bits of the upper bound (\ref{sumrate_bound_6_twolinks}).

        \item Third, when having $\C_2$ only, (\ref{achievable_R1R2_1})-(\ref{achievable_R1R2_3}) become
            \begin{eqnarray}
            R_1 + R_2 &\le& a_1 + \Delta {a}_1 + g_2  +(\C_2-\xi_2)^+, \label{achievable_proof_R1R2_7}\\
            R_1 + R_2 &\le& a_2 + g_1 +  \Delta {g}_1  + (\C_2-\xi_2)^+, \label{achievable_proof_R1R2_8}\\
            R_1 + R_2 &\le& e_1 + \Delta {e}_1 +e_2 + (\C_2-\xi_2)^+. \label{achievable_proof_R1R2_9}
            \end{eqnarray}
            Due to the symmetry between
            (\ref{achievable_proof_R1R2_7})-(\ref{achievable_proof_R1R2_9}) and
            (\ref{achievable_proof_R1R2_4})-(\ref{achievable_proof_R1R2_6}), and the symmetry between their upper bounds, we can see that (\ref{achievable_proof_R1R2_7}), (\ref{achievable_proof_R1R2_8}) and (\ref{achievable_proof_R1R2_9}) are
            within
            \begin{equation}
            \delta_{R_1+R_2}^{(\mathsf{0}, \C_2)} = \alpha(\q_1) + \beta(\q_2)
            \end{equation}
            bits of the upper bounds (\ref{sumrate_bound_7_twolinks}),
            (\ref{sumrate_bound_8_twolinks}), and
            (\ref{sumrate_bound_9_twolinks}) respectively.

        \item Fourth, when having none of $\C_1$ and $\C_2$, (\ref{achievable_R1R2_1})-(\ref{achievable_R1R2_3}) become
            \begin{eqnarray}
            R_1 + R_2 &\le& a_1 + \Delta {a}_1 + g_2 + \Delta {g}_2, \label{achievable_proof_R1R2_10}\\
            R_1 + R_2 &\le& a_2 + \Delta {a}_2 + g_1 +  \Delta {g}_1, \label{achievable_proof_R1R2_11}\\
            R_1 + R_2 &\le& e_1 + \Delta {e}_1 +e_2 + \Delta {e}_2, \label{achievable_proof_R1R2_12}
            \end{eqnarray}
            where (\ref{achievable_proof_R1R2_10}) is lower bounded by
            \begin{eqnarray}
            \lefteqn{a_1 + \Delta {a}_1 + g_2 + \Delta {g}_2} \nonumber \\
            &\ge& \hf \log \left(1 + \frac{\SNR_1 + \SNR_{r1}}{1 + \INR_1} \right) - \alpha(\q_1) \nonumber \\
            &&+ \hf \log \left( 1+ \SNR_2(1 + \phi_2^2 \SNR_{r1}) + \SNR_{r2}  \right. \nonumber \\
            && \qquad \quad \;\; \left. + \INR_1 + \SNR_{r1}\right) - \alpha(\q_2),
            \end{eqnarray}
            which is within
            \begin{equation}
            \delta_{R_1+R_2}^{(\mathsf{0},\mathsf{0})} = \alpha(\q_1) + \alpha(\q_2)
            \end{equation}
            bits of the upper bound (\ref{sumrate_bound_10_twolinks}). Due to symmetry, (\ref{achievable_proof_R1R2_11}) is within $\delta_{R_1+R_2}^{(\mathsf{0},\mathsf{0})}$ bits of the upper bound (\ref{sumrate_bound_11_twolinks}) as well. Further, (\ref{achievable_proof_R1R2_12}) can be lower bounded by
            \begin{eqnarray}
            \lefteqn{e_1 + \Delta {e}_1 +e_2 + \Delta {e}_2} \nonumber \\
            &\ge&  \hf \log \left(1 + \frac{\SNR_1(1 + \phi_1^2 \SNR_{r2}) + \SNR_{r1}}{1 + \INR_1} \right. \nonumber \\
            && \qquad \quad  \left. + \INR_2 + \SNR_{r2} \right) -\alpha(\q_1) \nonumber \\
            &&+  \hf \log \left(1 + \frac{\SNR_2(1 + \phi_2^2 \SNR_{r1}) + \SNR_{r2}}{1 + \INR_2}  \right. \nonumber \\
            && \qquad  \qquad \left. +\INR_1 + \SNR_{r1} \right) -\alpha(\q_2),
            \end{eqnarray}
            which is within
$\delta_{R_1+R_2}^{(\mathsf{0},\mathsf{0})}$ bits of the upper bound
(\ref{sumrate_bound_12_twolinks}). Therefore, when specialized into
the form with none of $\C_1$ and $\C_2$, (\ref{achievable_R1R2_1})-(\ref{achievable_R1R2_3}) is within $\delta_{R_1+R_2}^{(\mathsf{0},\mathsf{0})}$ bits of their upper bounds (\ref{sumrate_bound_10_twolinks})-(\ref{sumrate_bound_12_twolinks}).
        \end{itemize}

        Overall, the gap between the achievable sum-rates (\ref{achievable_R1R2_1})-(\ref{achievable_R1R2_3}) and the upper bounds in (\ref{sumrate_bound_1_twolinks})-(\ref{sumrate_bound_12_twolinks}) is upper bounded as follows:
        \begin{equation}
        \delta_{R_1+R_2} = \max \left\{\delta_{R_1+R_2}^{(\C_1, \C_2)}, \delta_{R_1+R_2}^{(\C_1, \mathsf{0})}, \delta_{R_1+R_2}^{(\mathsf{0}, \C_2)}, \delta_{R_1+R_2}^{(\mathsf{0}, \mathsf{0})}
        \right\}. \label{delta_Rsum}
        \end{equation}

    (iii) Third, the achievable rate (\ref{achievable_2R1R2}) is within constant bits of upper bounds (\ref{2R1R2_bound_1_twolinks})-(\ref{2R1R2_bound_6_twolinks}). To see this, note that (\ref{achievable_2R1R2}) has $8$ different forms as follows:
        \begin{eqnarray}
        a_1 + (\C_1 - \xi_1)^+ + g_1 + (\C_1 - \xi_1)^+ + e_2 + (\C_2 -\xi_2)^+, \label{achievable_2R1R2_1_proof} \\
        a_1 + \Delta {a}_1 + g_1 + \Delta {g}_1 + e_2 + \Delta {e}_2, \label{achievable_2R1R2_2_proof} \\
        a_1 + \Delta {a}_1 + g_1 + (\C_1 - \xi_1)^+ + e_2 + \Delta {e}_2, \label{achievable_2R1R2_3_proof} \\
        a_1 + \Delta {a}_1 +g_1 + \Delta {g}_1 + e_2 + (\C_2 -\xi_2)^+, \label{achievable_2R1R2_4_proof} \\
        a_1 + (\C_1 - \xi_1)^+ + g_1 + (\C_1 - \xi_1)^+ + e_2 + \Delta {e}_2, \label{achievable_2R1R2_5_proof} \\
        a_1 + \Delta {a}_1 + g_1 + (\C_1 - \xi_1)^+ + e_2 + (\C_2 -\xi_2)^+, \label{achievable_2R1R2_6_proof} \\
        a_1 + (\C_1 - \xi_1)^+ +g_1 + \Delta {g}_1 + e_2 + \Delta {e}_2, \label{achievable_2R1R2_7_proof} \\
        a_1 + (\C_1 - \xi_1)^+ + g_1 + \Delta {g}_1  + e_2 + (\C_2 -\xi_2)^+, \label{achievable_2R1R2_8_proof}
        \end{eqnarray}
        where (\ref{achievable_2R1R2_7_proof}) is redundant compared with
        (\ref{achievable_2R1R2_3_proof}) and (\ref{achievable_2R1R2_8_proof}) is
        redundant compared with (\ref{achievable_2R1R2_6_proof}) due to the fact that
        $\Delta {g}_1 \ge \Delta {a}_1$. Therefore, there are six
        active rate constraints in total. In the following, we prove that all active achievable rates
        of $2R_1 + R_2$ in
        (\ref{achievable_2R1R2_1_proof})-(\ref{achievable_2R1R2_6_proof}) are within
        constant bits of their corresponding upper bounds in
        (\ref{2R1R2_bound_1_twolinks})-(\ref{2R1R2_bound_6_twolinks}).
        \begin{itemize}
        \item First, (\ref{achievable_2R1R2_1_proof}) is lower bounded by
        \begin{eqnarray}
        \lefteqn{a_1 + (\C_1 - \xi_1)^+ + g_1 + (\C_1 - \xi_1)^+ + e_2 + (\C_2 -\xi_2)^+} \nonumber \\
        &\ge& \hf \log \left( 1 + \frac{\SNR_1}{1 + \INR_1}\right) - \hf + \C_1 - \xi_1 \nonumber \\
        &&+ \hf \log(1 + \SNR_1 + \INR_2) - \hf + \C_1 - \xi_1 \nonumber \\
        &&+ \hf \log \left(1 + \frac{\SNR_2}{1 + \INR_2} + \INR_1\right) -\hf + \C_2 - \xi_2, \nonumber \\
        &&
        \end{eqnarray}
        which is within
        \begin{equation}
        \delta_{2R_1+R_2}^{( 2\C_1, \C_2)} = 2\beta(\q_1) + \beta(\q_2)
        \end{equation}
        bits of the upper bound (\ref{2R1R2_bound_1_twolinks}).

        \item Second, (\ref{achievable_2R1R2_2_proof}) is lower bounded by
        \begin{eqnarray}
        \lefteqn{a_1 + \Delta {a}_1 + g_1 + g_1 + \Delta {g}_1 + e_2 + \Delta {e}_2} \nonumber \\
        &\ge& \hf \log \left(1 + \frac{\SNR_1 + \SNR_{r1}}{1 + \INR_1} \right) - \alpha(\q_1) \nonumber \\
        &&+ \hf \log \left( 1+ \SNR_1(1 + \phi_1^2 \SNR_{r2}) + \SNR_{r1} \right. \nonumber \\
            && \qquad \quad \;\; \left. + \INR_2 + \SNR_{r2}\right) - \alpha(\q_1) \nonumber \\
        &&+ \hf \log \left(1 + \frac{\SNR_2(1 + \phi_2^2 \SNR_{r1}) + \SNR_{r2}}{1 + \INR_2} \right. \nonumber \\
            && \qquad \qquad  \left. +\INR_1 + \SNR_{r1} \right) - \alpha(\q_2),
        \end{eqnarray}
        which is within
        \begin{equation}
        \delta_{2R_1+R_2}^{(\mathsf{0}, \mathsf{0})} = 2\alpha(\q_1) + \alpha(\q_2)
        \end{equation}
        bits of the upper bound (\ref{2R1R2_bound_2_twolinks}).

        \item Third, (\ref{achievable_2R1R2_3_proof}) is lower bounded by
        \begin{eqnarray}
        \lefteqn{a_1 + \Delta {a}_1 + g_1 + (\C_1 - \xi_1)^+ + e_2 + \Delta {e}_2} \nonumber \\
        &\ge& \hf \log \left(1 + \frac{\SNR_1 + \SNR_{r1}}{1 + \INR_1} \right) - \alpha(\q_1) \nonumber \\
        &&+ \hf \log(1 + \SNR_1 + \INR_2) - \hf + \C_1 - \xi_1 \nonumber \\
        &&+ \hf \log \left(1 + \frac{\SNR_2(1 + \phi_2^2 \SNR_{r1}) + \SNR_{r2}}{1 + \INR_2} \right. \nonumber \\
            && \left.  \qquad \qquad + \INR_1 + \SNR_{r1} \right) - \alpha(\q_2),
        \end{eqnarray}
        which is within
        \begin{equation}
        \delta_{2R_1+R_2}^{(\C_1, \mathsf{0})} = \alpha(\q_1) + \alpha(\q_2) + \beta(\q_1)
        \end{equation}
        bits of the upper bound (\ref{2R1R2_bound_3_twolinks}).

        \item Fourth, (\ref{achievable_2R1R2_4_proof}) is lower bounded by
        \begin{eqnarray}
        \lefteqn{a_1 + \Delta {a}_1 +g_1 + \Delta {g}_1 + e_2 + (\C_2 -\xi_2)^+} \nonumber \\
        &\ge& \hf \log \left(1 + \frac{\SNR_1 + \SNR_{r1}}{1 + \INR_1} \right) - \alpha(\q_1) \nonumber \\
        &&+ \hf \log \left( 1+ \SNR_1(1 + \phi_1^2 \SNR_{r2}) + \SNR_{r1} \right. \nonumber \\
            && \left.\qquad \quad \;\;+ \INR_2 + \SNR_{r2}\right) - \alpha(\q_1) \nonumber \\
        &&+ \hf \log \left(1 + \frac{\SNR_2}{1 + \INR_2} + \INR_1\right) -\hf + \C_2 - \xi_2 , \nonumber \\
        &&
        \end{eqnarray}
        which is within
        \begin{equation}
         \delta_{2R_1+R_2}^{(\mathsf{0}, \C_2)} = 2\alpha(\q_1) + \beta(\q_2)
        \end{equation}
        bits of the upper bound (\ref{2R1R2_bound_4_twolinks}).

        \item Fifth, (\ref{achievable_2R1R2_5_proof}) is lower bounded by
        \begin{eqnarray}
        \lefteqn{a_1 + (\C_1 - \xi_1)^+ + g_1 + (\C_1 - \xi_1)^+ + e_2 + \Delta \tilde{e}_2} \nonumber \\
        &\ge& \hf \log \left( 1 + \frac{\SNR_1}{1 + \INR_1}\right) - \hf + \C_1 - \xi_1 \nonumber \\
        &&+ \hf \log(1 + \SNR_1 + \INR_2) - \hf + \C_1 - \xi_1 \nonumber \\
        &&+ \hf \log \left(1 + \frac{\SNR_2(1 + \phi_2^2 \SNR_{r1}) + \SNR_{r2}}{1 + \INR_2} \right. \nonumber \\
        && \qquad \qquad \left. + \INR_1 + \SNR_{r1} \right) - \alpha(\q_2),
        \end{eqnarray}
        which is within
        \begin{equation}
        \delta_{2R_1+R_2}^{(2\C_1, \mathsf{0})} = \alpha(\q_2) + 2\beta(\q_1)
        \end{equation}
        bits of the upper bound (\ref{2R1R2_bound_5_twolinks}).

        \item Sixth, (\ref{achievable_2R1R2_6_proof}) is lower bounded by
        \begin{eqnarray}
        \lefteqn{a_1 + \Delta {a}_1 + g_1 + (\C_1 - \xi_1)^+ + e_2 + (\C_2 -\xi_2)^+} \nonumber \\
        &\ge& \hf \log \left(1 + \frac{\SNR_1 + \SNR_{r1}}{1 + \INR_1} \right) - \alpha(\q_1) \nonumber \\
        &&+ \hf \log(1 + \SNR_1 + \INR_2) - \hf + \C_1 - \xi_1 \nonumber \\
        &&+ \hf \log \left(1 + \frac{\SNR_2}{1 + \INR_2} + \INR_1\right)  \nonumber \\
        && -\hf + \C_2 - \xi_2 ,
        \end{eqnarray}
        which is within
        \begin{equation}
        \delta_{2R_1+R_2}^{(\C_1, \C_2)} = \alpha(\q_1) + \beta(\q_1) + \beta(\q_2)
        \end{equation}
        bits of the upper bound (\ref{2R1R2_bound_6_twolinks}).
        \end{itemize}

        Therefore, the gap between the achievable rate (\ref{achievable_2R1R2}) and the corresponding upper bounds (\ref{2R1R2_bound_1_twolinks})-(\ref{2R1R2_bound_6_twolinks}) is bounded by the following constant
        \begin{eqnarray}
        \delta_{2R_1+R_2}&=& \max\left\{\delta_{2R_1+R_2}^{( 2\C_1, \C_2)}, \delta_{2R_1+R_2}^{(\mathsf{0}, \mathsf{0})}, \delta_{2R_1+R_2}^{(\C_1, \mathsf{0})}, \delta_{2R_1+R_2}^{(\mathsf{0}, \C_2)}, \right. \nonumber \\
            && \left. \qquad \;\; \delta_{2R_1+R_2}^{(2\C_1, \mathsf{0})}, \delta_{2R_1+R_2}^{(\C_1,
        \C_2)}\right\}. \label{delta_2R1R2}
        \end{eqnarray}

        Due the the symmetry between (\ref{achievable_R12R2}) and (\ref{achievable_2R1R2}), and the symmetry between their corresponding upper bounds, it is easy to see that (\ref{achievable_R12R2}) is also within constant gap to the upper
        bounds. The constant gap $\delta_{R_1+2R_2}$ can be obtained by simply
        switching indices of $1$ and $2$ in $\delta_{2R_1 + R_2}$.

\bibliographystyle{IEEEtran}
%\bibliography{IEEEabrv,./ref/main}

% Generated by IEEEtran.bst, version: 1.13 (2008/09/30)

\begin{IEEEbiographynophoto}{Lei Zhou}
(S'05) received the B.E. degree in Electronics Engineering
from Tsinghua University, Beijing, China,
in 2003 and M.A.Sc. degree in Electrical and Computer Engineering from the University of Toronto, ON, Canada, in 2008. During 2008-2009, he was with Nortel Networks, Ottawa, ON, Canada. He is currently pursuing the Ph.D. degree with the Department of Electrical and Computer Engineering, University of Toronto, Canada. His research interests include multiterminal information theory, wireless communications, and signal processing.

He is a recipient of the Shahid U.H. Qureshi Memorial Scholarship in 2011, the Alexander Graham Bell Canada Graduate Scholarship in 2011, and the Chinese government award for outstanding self-financed students abroad in 2012.

\end{IEEEbiographynophoto}

\begin{IEEEbiographynophoto}{Wei Yu}
(S'97-M'02-SM'08) received the B.A.Sc. degree in Computer Engineering and
Mathematics from the University of Waterloo, Waterloo, Ontario, Canada in 1997
and M.S. and Ph.D. degrees in Electrical Engineering from Stanford University,
Stanford, CA, in 1998 and 2002, respectively. Since 2002, he has been with the
Electrical and Computer Engineering Department at the University of Toronto,
Toronto, Ontario, Canada, where he is now Professor and holds a
Canada Research Chair in Information Theory and Wireless Communications. His
main research interests include multiuser information theory, optimization,
wireless communications and broadband access networks.

Prof. Wei Yu currently serves as an Associate Editor for
{\sc IEEE Transactions on
Information Theory}. He was an Editor for {\sc IEEE Transactions on Communications} (2009-2011), an Editor for {\sc IEEE Transactions on Wireless
Communications} (2004-2007), and a Guest Editor for a number of
special issues for the {\sc IEEE Journal on Selected Areas in
Communications} and the {\sc EURASIP Journal on Applied Signal Processing}.
He is member of the Signal Processing for Communications and Networking
Technical Committee of the IEEE Signal Processing Society.
He received the IEEE Signal Processing Society Best Paper Award in 2008,
the McCharles Prize for Early Career Research Distinction in 2008,
the Early Career Teaching Award from the Faculty of Applied Science
and Engineering, University of Toronto in 2007, and the Early Researcher
Award from Ontario in 2006.
\end{IEEEbiographynophoto}

\end{document}